\newcommand{\R}{\mathbb{R}}		
\newcommand{\C}{\mathbb{C}}		
\newcommand{\transp}{\mathsf{T}}					
\newtheorem{defin}{Definition}
\newtheorem{theor}{Theorem}
\newtheorem{corol}{Corollary}
\newtheorem{lemma}{Lemma}
\theoremstyle{definition}
\newtheorem{assump}{Assumption}
\newtheorem{remark}{Remark}
\newtheorem{example}{Example}
\newcommand*{\QEDA}{\hfill\ensuremath{\triangle}}   %
\newcommand{\arthur}[1]{\textcolor{black}{#1}}
\newcommand*{\balancecolsandclearpage}{%
	\close@column@grid
	\cleardoublepage
	\twocolumngrid
}
\begin{document}

\title{Functional observability and target state estimation in large-scale networks}

\author{Arthur~N.~Montanari}
\email{arthur.montanari@uni.lu}
\affiliation{Department of Physics and Astronomy, Northwestern University, Evanston, IL 60208, USA}
\affiliation{Graduate Program in Electrical Engineering, Federal University of Minas Gerais, Belo Horizonte, MG 31270-901, Brazil}
\affiliation{Luxembourg Centre for Systems Biomedicine, University of Luxembourg, Belvaux L-4367, Luxembourg}

\author{Chao~Duan}
\email{chao.duan@northwestern.edu}
\affiliation{Department of Physics and Astronomy, Northwestern University, Evanston, IL 60208, USA}

\author{Luis~A.~Aguirre}
\affiliation{Departament of Electronic Engineering, Federal University of Minas Gerais, Belo Horizonte, MG 31270-901, Brazil}

\author{Adilson E. Motter}
\email{motter@northwestern.edu}
\affiliation{Department of Physics and Astronomy, Northwestern University, Evanston, IL 60208, USA}
\affiliation{Northwestern Institute on Complex Systems, Northwestern University, Evanston, IL 60208, USA}




\begin{abstract}
	The quantitative understanding and precise control of complex dynamical systems can only be achieved by observing their internal states via measurement and/or estimation.	In large-scale dynamical networks, it is often difficult or physically impossible to have enough sensor nodes to make the system fully observable. Even if the system is in principle observable,  high-dimensionality poses fundamental limits on the computational tractability and performance of a full-state observer. To overcome the curse of dimensionality, we instead require the system to be \textit{functionally observable}, meaning that a targeted subset of state variables can be reconstructed from the available measurements. Here, we develop a graph-based theory of functional observability, which leads to highly scalable algorithms to i) determine the minimal set of required sensors and ii) design the corresponding state observer of minimum order. Compared to the full-state observer, the proposed functional observer achieves the same estimation quality with substantially less sensing	and computational resources, making it suitable for large-scale networks. We apply the proposed methods to the detection of cyber-attacks in power grids from limited phase measurement data and the inference of the prevalence rate of infection during an epidemic under limited testing conditions.	The applications demonstrate that the functional observer can significantly 	scale up our ability to explore otherwise inaccessible dynamical processes on complex networks.
	\\
	\\
	\noindent
	DOI: \href{https://doi.org/10.1073/pnas.2113750119}{10.1073/pnas.2113750119}
\end{abstract}

\keywords{network dynamics, observability, network control, complex networks}


\maketitle


Large-scale complex systems, including power grids, neuronal networks, and food webs, are often represented as sets of interconnected dynamical systems and referred to as \textit{dynamical networks}. 
Understanding the properties and control principles of dynamical networks allows the development of intervention strategies that can shape the behavior of these systems to achieve the desired functionality. As formalized by Wiener \cite{wiener2019cybernetics}, the fundamental mechanism enabling precise control of a dynamical system is \emph{feedback}, which involves sensors, signals, and actuators in a closed loop. A sensor provides immediate measurements of a particular variable of the system. 
As the dynamical network grows large, it becomes prohibitive to implement a sensor for each state variable, be it due to cost or physical constraints. 
For instance, our ability to measure each of the tens of billions of neurons present in a human brain is physically limited.
Likewise, infrastructure and operation costs may impede the placing of sensors in every node of a large technological system.
Therefore, the indirect estimation of the unmeasured states is essential for the control of large-scale dynamical networks.

\arthur{The property of a dynamical system that enables the reconstruction of the entire system state from its control inputs and sensor measurements is called \emph{observability} \cite{Kalman1959}. In particular, observability constitutes the necessary and sufficient condition that guarantees the \emph{existence} and enables the \emph{design} of full-state estimators---such as Luenberger observers \cite{Luenberger1966} and Kalman filters \cite{Kalman}.} 
Despite the success of state observers in uncountable engineering applications, high-dimensionality is still an obstacle to the direct use of these methods in large-scale dynamical networks \cite{Chen2014,Motter2015,Montanari2020}. This calls for different approaches and novel techniques \cite{Wang2002a,Liu2011,Cornelius2013,Liu2013c,Fiedler2013a,Zanudo2017,Aguirre2018} to overcome the lack of scalability of existing methods.
Based on a graph-theoretic approach to controllability \cite{Lin1974a}, Liu \textit{et al.} \cite{Liu2011} presented an efficient method that, by duality, can be used to determine a minimum set of sensor nodes required to guarantee the observability of complex networks. 
However, even if a minimum set of sensors is used, the state observer will have the same dimension as the entire network, making its design and implementation computationally expensive in large-scale systems. Moreover, a minimum set of sensor nodes does not guarantee good quality for the full-state reconstruction in higher-order systems \cite{Sun2013,Pasqualetti2013,Haber2017,Guan2018,Montanari2019}.

For many real-world problems, estimating the entire state vector of a high-dimensional system is not necessary or even desirable \cite{Motter2015}. It is often sufficient to focus on a particular subset of nodes of interest. For instance, in decentralized control strategies applied to network systems, each controller only requires feedback signals from a fraction of nodes in the neighborhood determined by the corresponding controlled area \cite{OlfatiSaber2004,Xue2018}. This is also true for the detection and monitoring of unforeseen failures and cyber-attacks, which finds several applications in supply networks \cite{Pasqualetti2013a}, power grids \cite{Zhang2013,Singh2014}, and autonomous vehicle coordination \cite{Vivek2019}. Similarly, in biomedical applications, estimation (diagnosis) and control (intervention) often require monitoring a reduced set of variables in the~respective networks \cite{Barabasi2011,Vinayagam2016}. Examples include regulatory network states associated with cancer \cite{Cornelius2013} and brain network states associated with Parkinson's disease \cite{Hammond2007} and epilepsy \cite{Lehnertz2009}.


These practical problems motivate the concept of functional observability \cite{Fernando2010,Jennings2011}, which characterizes the existence of a \emph{functional observer} capable of reconstructing a targeted subset of state variables from a limited number of sensors\textemdash even when the network is not completely observable.
\arthur{Functional observability can be related to the concept of \textit{target controllability} \cite{Gao2014,Klickstein2017,Czeizler2018,Li2018,Commault2019}, which establishes the conditions for the existence of a controller capable of steering a targeted subset of state variables and has been applied to problems of drug target identification \cite{Wu2015}. However, despite the duality between (complete) controllability and observability \cite{Kalman1959}, such duality does not hold between target controllability and functional observability. This is the case because, as we show, the state estimation of a subset of variables requires a stronger condition than the dual condition to the control of a subset of variables.
Even though the design of functional observers is a problem that dates back to the 1970s \cite{Luenberger1971,Darouach2000,Trinh2012}, previous studies on functional observability \cite{Fernando2010,Fernando2010b,Jennings2011} were based on numerical rank-based conditions, without explicitly taking advantage of the network topology, and thus do not lead to scalable algorithms applicable to large-scale networks.}

In this paper, we develop a graph-theoretic characterization of functional observability and the associated algorithms for sensor placement and observer design, making it possible to accurately estimate the target states of a large-scale dynamical network using minimal sensing and computational resources. The contributions of this work are threefold.
First, we propose the new concept of \textit{structural functional observability}, which can be seen as a generalization of Lin's structural observability \cite{Lin1974a}. This allows us to rigorously establish graph-theoretic conditions for functional observability equivalent to the original rank-based conditions \cite{Jennings2011}.
Second, based on the proposed theory, two highly scalable algorithms are developed to solve the sensor placement and observer design problems. The first algorithm determines a minimal set of sensors placed on a dynamical network to ensure  functional observability with respect to a given set of target nodes. After the sensor 
placement is decided, the second algorithm designs a minimum-order functional observer whose output converges asymptotically to the target states, achieving accurate estimation. 
Third, we demonstrate the advantages of the proposed methods with two concrete applications: the cyber-security of power grids and the monitoring of epidemic spreading. 
In power grids, we show that the proposed functional observers can be implemented as active detectors of cyber-attacks,
effectively providing state estimates that allow for cross-validation among different information sources and the detection of fake measurement data in real-time. 
In epidemics such as the COVID-19 pandemic, we demonstrate that the proposed functional observer can infer the fraction of infected population in areas where testing is limited from the data collected in areas with sufficient testing\textemdash moreover, our algorithms can also guide the optimal allocation of limited testing resources.

\section*{Results}

\noindent\textbf{Complete and functional observability of dynamical systems.}
\label{sec.functobsv}
A general linear dynamical system can be written as
\begin{equation}
\begin{aligned}
\begin{cases}
\dot{\bm x} = A\bm x + B\bm u, \\
\bm y = C\bm x,
\end{cases}
\end{aligned}
\label{eq.linearsys}
\end{equation}

\noindent
where $\bm x\in\R^n$ is the vector of state variables, $\bm u\in\R^p$ accounts for the control inputs or environmental influences, and $\bm y\in\R^q$ represents the direct measurement from available sensors in the system. 
Matrix $A$, which is referred to as the \textit{system matrix}, encompasses the nodal dynamics and network interactions\textemdash and can thus correspond to an adjacency matrix, a Laplacian matrix or, more generally, a Jacobian matrix of the system.
The system is completely observable if it is possible to reconstruct the entire state trajectory $\bm{x}(t)$ from the input vector $\bm{u}(t)$ and measurement vector $\bm{y}(t)$. Complete observability is guaranteed when the $nq\times n$ observability matrix
\begin{equation}
\mathcal O = \begin{bmatrix}
C^\transp & (CA)^\transp & (CA^2)^\transp & \ldots & (CA^{n-1})^\transp
\end{bmatrix}^\transp
\label{eq.obsvmatrix}
\end{equation}

\noindent
has full rank \cite{Chi-TsongChen1999}, i.e., $\rank(\mathcal O)=n$. Under this rank condition, there exist straightforward methods to design a full-state observer. Such observer is an auxiliary dynamical system 
whose states converge asymptotically to those of the original system \eqref{eq.linearsys} when taking $\bm{y}$ and $\bm{u}$ as inputs, providing an estimation of the state vector $\bm x$. Since the direct measurement $\bm y$ already contains $q$ linear combinations of state $\bm x$, only $(n-q)$ state variables are required to be reconstructed, which can be accomplished by a reduced-order state observer, which we refer to as the Luenberger observer \cite{Luenberger1966} (see Methods for details).

In practice, it is often unnecessary to estimate the entire state vector $\bm{x}$. Instead, only a lower-dimensional function $\bm z = F\bm x \in \R^r$ is of interest, where $r$ can be much smaller than $n$. Given the desirable ${F}$, functional observability characterizes the system property that enables the reconstruction of $\bm z$ from $\bm u$ and $\bm y$ \cite{Fernando2010}. The system is functionally observable if and only if \cite{Jennings2011,Rotella2016a}
\begin{equation}
\rank
\begin{bmatrix}
\mathcal O \\ F
\end{bmatrix}
=
\rank\left(\mathcal O\right),
\label{eq.linearfuncobsv}
\end{equation}

\noindent
that is, if and only if the row space ${\rm row}(F)$ is a subspace of the observable space ${\rm row}(\mathcal O)$. Clearly, complete observability is a special case of functional observability for $F=I$. However, condition \eqref{eq.linearfuncobsv} only guarantees the theoretical existence of a functional observer \cite{Jennings2011}. It does not readily lead to an algorithm to design a functional observer \cite{Fernando2010,Rotella2016a}, for which two additional conditions must be satisfied \cite{Darouach2000}:
\begin{align}
	&
	\rank
	\begin{bmatrix}
	C \\ CA \\ F_0 \\ F_0A
	\end{bmatrix}	
	=
	\rank
	\begin{bmatrix}
	C \\ CA \\ F_0
	\end{bmatrix}
	,
	\label{eq.darouachcond1}
\\
	&
	\rank
	\begin{bmatrix}
	\lambda F_0 - F_0A \\ CA \\ C
	\end{bmatrix}	
	=
	\rank
	\begin{bmatrix}
	CA \\ C \\ F_0
	\end{bmatrix},
	\label{eq.darouachcond2}
\end{align}

\noindent
where $F_0\in\R^{r_0\times n}$ and condition \eqref{eq.darouachcond2} must hold for every eigenvalue $\lambda$ of $A$. If a triple $(A,C,F)$ satisfies condition $\eqref{eq.linearfuncobsv}$, then there exists some matrix $F_0$ whose row space contains that of $F$ (i.e., ${\rm row}(F_0)\supseteq{\rm row}(F)$) that satisfies conditions \eqref{eq.darouachcond1}--\eqref{eq.darouachcond2} for the triple $(A,C,F_0)$ \cite{Fernando2010}. A functional observer of order $r_0\geq r$ can be designed systematically once such a matrix $F_0$ is determined (Methods). Finding a matrix $F_0$ with the minimum number of rows $r_0$ satisfying these conditions is then a crucial problem since the functional observer order is directly related to the computational costs of its design and real-time simulation (as demonstrated below).

\bigskip
\noindent\textbf{Structural functional observability.}
\label{sec.graphfuncobsv}
The rank-based conditions \eqref{eq.linearfuncobsv}--\eqref{eq.darouachcond2} are not numerically stable and computationally efficient for the design of functional observers for large-scale systems. Here, we adopt a graph-theoretic approach that explicitly leverages the network structure of the dynamical system. The system matrix $A$ can be structurally represented as a corresponding \textit{inference graph} $\mathcal G(A)$ whose nodes are the internal state variables $\mathcal X = \{x_1,\dots,x_n\}$. The links in $\mathcal G(A)$ capture the interaction pattern among state variables: there is a link from $x_j$ to $x_i$ on the graph $\mathcal G(A)$ if $A_{ij}$ is non-zero. A node $x_j$ on graph $\mathcal G(A)$ is a sensor node if $C_{ij}\neq 0$ for some $i$, and a node $x_k$ is a target node if $F_{ik}\neq 0$ for some $i$. The sets of all sensor and target nodes are denoted $\mathcal S$ and $\mathcal T$, respectively. We assume that each sensor or target is only related to one internal state variable, meaning that each row of $C$ or $F$ has only one non-zero entry. Throughout, the terms \textit{nodes} and \textit{links} are used exclusively in connection with inference graphs, and are not to be confused with the \textit{vertices} and \textit{edges} which are the corresponding terms used for the network systems. Fig.~\ref{fig.functobsvexample}\textit{A} illustrates the representation of the inference graph $\mathcal G(A)$ and the set of nodes estimated by a minimum-order functional observer for the indicated sensor and target nodes in a 10-dimensional system.

\begin{figure*}[t]
	\centering
	\includegraphics[width=0.6\textwidth]{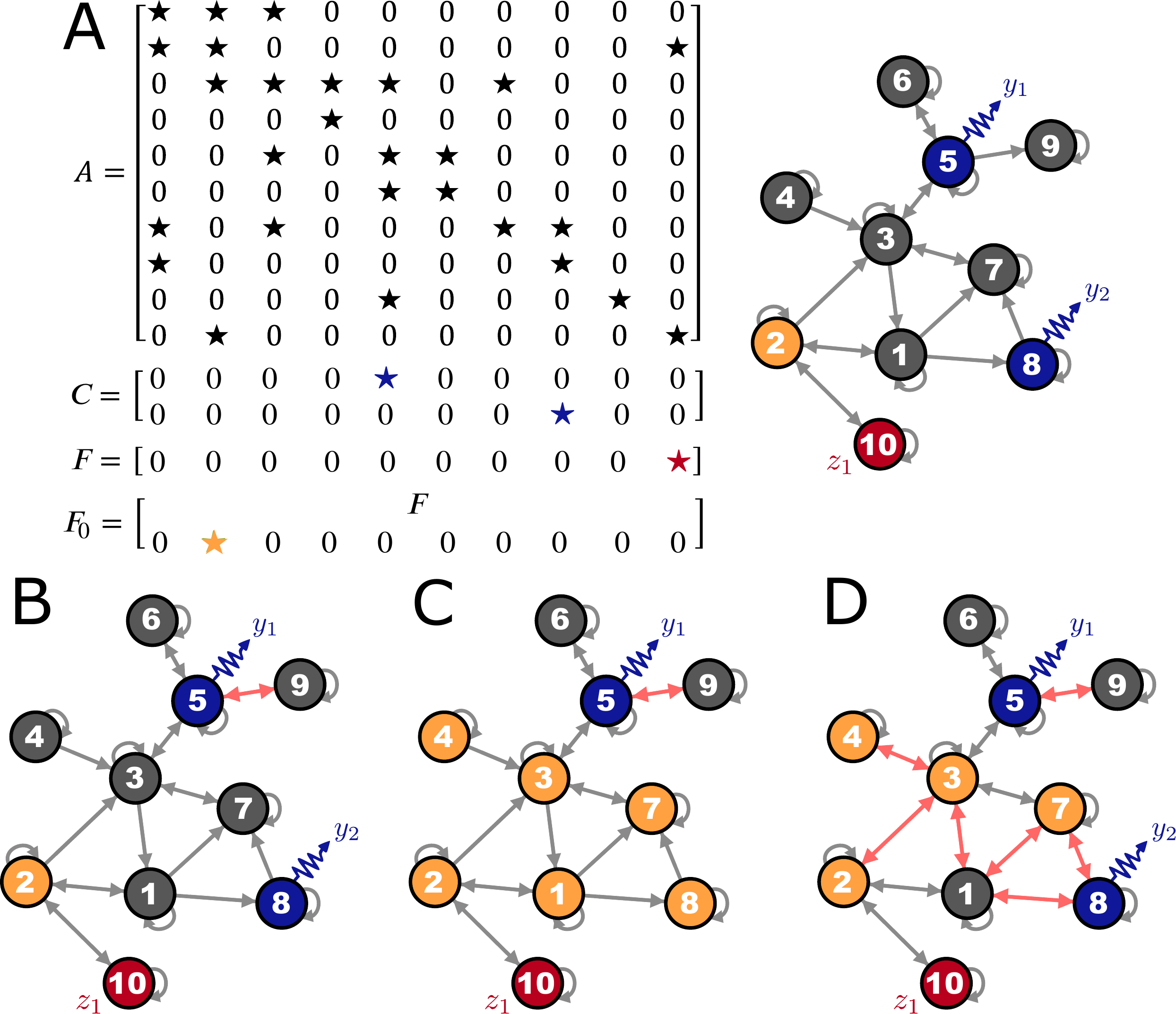}
	\caption{
		Structural functional observability of dynamical systems.
		(\textit{A})~System matrix $A$ and corresponding inference graph $\mathcal G(A)$. The set of state variables $\mathcal X =\{x_1,\ldots,x_{10}\}$ is represented by nodes on the graph, where the set of sensor nodes $\mathcal S = \{x_5, x_8\}$ (defined by $C$) is marked in blue and the set of target nodes $\mathcal T = \{x_{10}\}$ (defined by $F$) is marked in red.
		In this example, the system is unobservable ($\rank(\mathcal O)=9<n$), but it is  functionally observable ($\rank[\mathcal O^\transp \,\, F^\transp]^\transp=\rank(\mathcal O)$), enabling the design of a functional observer. 
		We show the minimum-order matrix $F_0$ such that $(A,C,F_0)$ satisfies conditions \eqref{eq.darouachcond1}--\eqref{eq.darouachcond2}. Non-zero elements in $F_0$ determine the minimum set of ``auxiliary'' nodes (highlighted in orange) whose states also have to be estimated by a functional observer in order to estimate the state of the target node. 
		The number of rows of $F_0$ is the order of the functional observer: $r_0=2$.
		(\textit{B})~Example of a system that is observable and hence functionally observable. Since the system is completely observable, a Luenberger observer can be designed. However, while such Luenberger observer has order $n-q=8$ because it estimates the state of every unmeasured node, a functional observer can estimate the target node $x_{10}$ with the much smaller order $r_0 = 2$.
		(\textit{C})~Minimum sensor set for functional observability. For the same graph and target set as in \textit{B}, the sensor node $x_5$ is a minimum sensor set required for the functional observability of $x_{10}$. However, the absence of sensor node $x_8$ increases the functional observer order to $r_0=7$.		
		(\textit{D})~Functional observability in a strongly connected graph. Due to the stronger connectivity, the functional observer order increases to $r_0 = 5$ (compared to \textit{B}), while remaining smaller than the Luenberger observer order $n-q=8$. Differences between graphs are indicated by the pink links.
	}
	\label{fig.functobsvexample}
\end{figure*}

A system given by the triple $(A,C,F)$ is said to be structurally functionally observable if there exists a functionally observable triple $(\tilde{A},\tilde{C},\tilde{F})$ which shares the \textit{same structure} as $(A,C,F)$. We define that triples $(A,C,F)$ and $(\tilde{A},\tilde{C},\tilde{F})$ have the same structure if, according to the representation described above, they share the same inference graph $\mathcal G(A)$, sensor set $\mathcal{S}$, and target set $\mathcal{T}$.
Thus, structural functional observability is purely determined by the state interaction structure encoded by graph $\mathcal G(A)$, the sensor node set $\mathcal{S}$, and the target node set $\mathcal{T}$\textemdash which are all independent of the specific numerical entries of $(A,C,F)$. In fact, if a triple $(A,C,F)$ is structurally functionally observable, a system that shares the same structure as $(A,C,F)$ is functionally observable with probability 1. 

This structural approach allows us to establish a graph-theoretic characterization of functional observability: 

\begin{adjustwidth}{0.5cm}{0.2cm}
\noindent\emph{A system $(A,C,F)$ is structurally functionally observable if and only if: 1) there exists a direct path from every target node to some sensor node, and 2) no target node is an element of a minimal subset of nodes with a dilation.}
\end{adjustwidth}

\noindent
A rigorous proof of this theorem is given in Section \ref{sec.structfuncobsv} of the Supporting Information. 
This result can be seen as a significant generalization of Lin's theory of structural controllability and, by duality, of structural observability \cite{Lin1974a}. Note that when $F=I$ (or, equivalently, $\mathcal{T}=\mathcal{X}$) our definition of structural functional observability reduces to Lin's structural observability. The latter states that a pair $(A,C)$ is structurally observable if and only if: 1') there exists a direct path from every state node to some sensor node, and 2') the corresponding graph has no dilations. For more background on structural observability, see Section \ref{sec.background} of the Supporting Information. We further illustrate this  characterization in Fig.\ \ref{fig.functobsvexample}\textit{A}, where the inference graph has no dilations due to the presence of self-links and thus satisfies conditions 2 and 2' for structural functional and complete observability, respectively. However, because node $x_9$ does not have a direct path to a sensor node, condition 1' for structural observability is not satisfied for the pair $(A,C)$. Nevertheless, condition 1 is satisfied for structural functional observability of the triple $(A,C,F)$ since one can identify a path from the target node $x_{10}$ to a sensor node ($x_{5}$ or $x_{8}$), hence rendering $(A,C,F)$  structurally functionally observable.

The above result lays a foundation for the functional observer design in large-scale dynamical networks. To enable the algorithm development, we further investigate two main design problems: 

\begin{enumerate}
\item How to select the minimum set of sensor nodes $\mathcal S$ such that a triple $(A,C,F)$ is structurally functionally observable? 

\item Given a structurally functionally observable triple $(A,C,F)$, how to determine the minimum-order matrix $F_0$ such that \eqref{eq.darouachcond1}--\eqref{eq.darouachcond2} are satisfied for $(A,C,F_0)$? (In other words, what is the minimum set of ``auxiliary'' state nodes that must be estimated along with the target nodes so that the systematic functional observer design is possible?)
\end{enumerate}

\noindent
The examples in Fig.~\ref{fig.functobsvexample}\,\textit{B}--\textit{D} illustrate that both questions are intertwined and inherently related to the structure of $\mathcal G(A)$.

In the following sections, we assume that no target node is an element of a minimal subset of nodes with a dilation. A sufficient condition for this latter assumption is that every target node has a self-link. The importance of including self-links in dynamical networks models, especially for state control and estimation applications, has been thoroughly discussed in the literature \cite{Cowan2012,Leitold2017,Montanari2020}. Indeed, dilations are not found in a broad range of dynamical networks, including  diffusively coupled systems. Our assumptions can be satisfied for applications in networks of coupled oscillators \cite{Arenas2008,Rodrigues2016,Eroglu2017}, power grids \cite{Dorfler2013,Nishikawa2015}, neuronal systems \cite{Izhikevich2004,Aguirre2017}, combustion networks \cite{Perini2012,Haber2017}, regulatory networks \cite{Mirsky2009,Mochizuki2013}, consensus problems \cite{OlfatiSaber2004}, and multi-group epidemiological models \cite{Colizza2006}.

\bigskip
\noindent\textbf{Minimum sensor placement for functional observability.}
\label{sec.msp}
According to the theory and assumptions discussed above, the minimal sensor placement problem is to determine a minimum set $\mathcal S$ such that there is a direct path in $\mathcal G(A)$ from every target node to some sensor node. We show that the minimum sensor placement problem can be formulated as a \textit{set cover problem}. For each candidate sensor node, let $\mathcal R_i$ denote the set of target nodes 
that have a direct path to the state node $x_i\in\mathcal X$. By this definition, the minimal sensor placement amounts to identifying the minimal sensor set $\mathcal{S}$ such that the union of the sets $\mathcal{R}_i$ for all $x_i\in \mathcal{S}$ covers the target set $\mathcal{T}$, i.e., $\cup_{x_i\in \mathcal{S}} \mathcal R_i \supseteq \mathcal{T}$. This is an NP-hard problem \cite{Corman1990}, to which we provide an approximate but highly scalable solution via Algorithm~\ref{alg.MSP} (Methods), where a breadth-first search determines $\mathcal R_i$ for each node $x_i \in \mathcal{X}$ and a greedy algorithm solves the set cover problem. Owing to the submodularity of the problem \cite{Fujito1999}, this approximation is guaranteed to be near-optimal.

\begin{figure}
	\centering
	\includegraphics[width=\columnwidth]{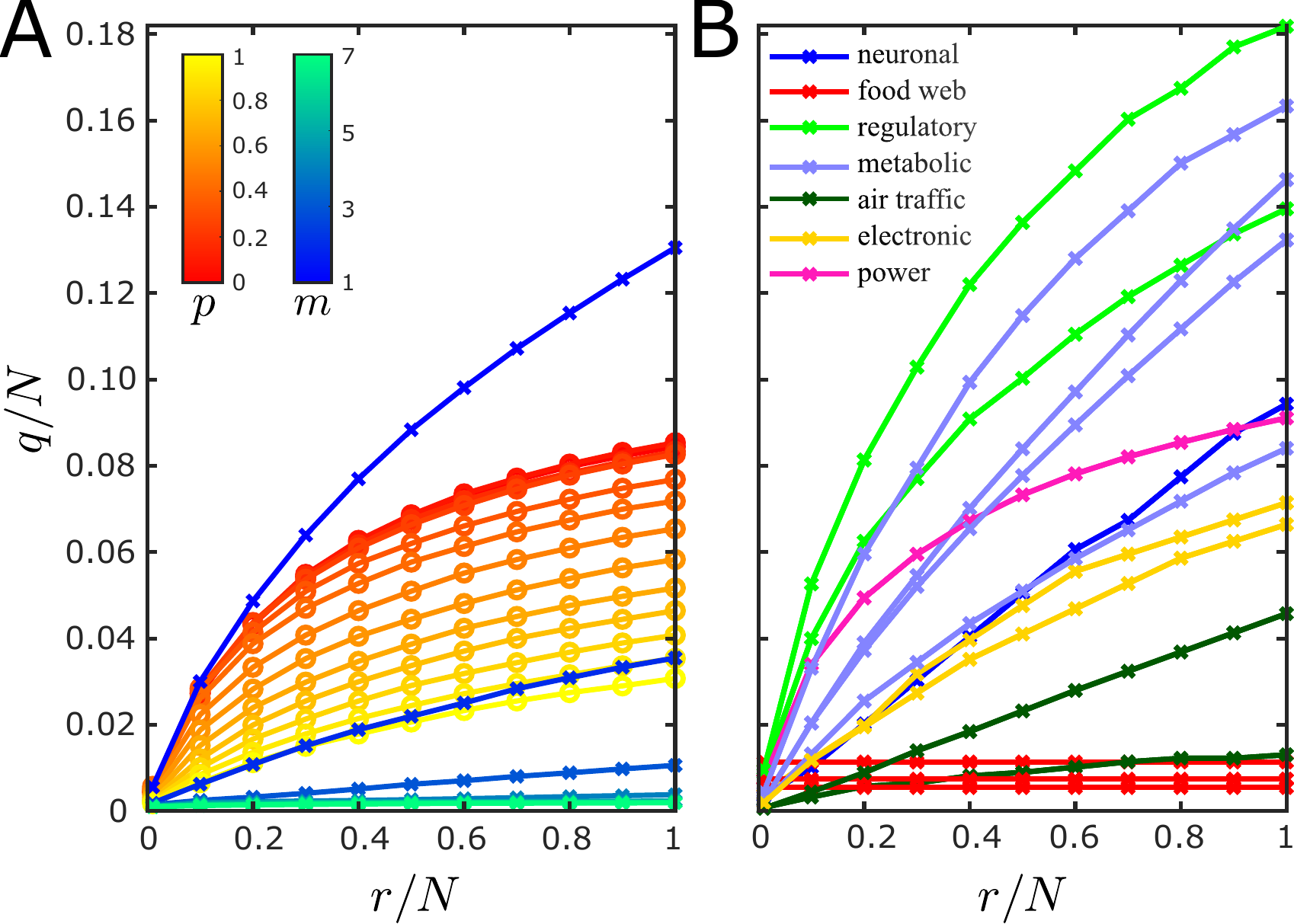}
	\caption{
		Minimum sensor placement in large-scale networks.
		Minimum number of sensors $q$ as a function of the number of target nodes $r$ (normalized by the network size $N$) in: (\textit{A})~randomly generated directed SW and SF networks and (\textit{B})~real-world networks. Each data point is an average over 100 realizations of randomly selected target nodes. The minimum set of sensor nodes $\mathcal S$ is determined using Algorithm~\ref{alg.MSP}. 
		The SW and SF networks were generated with $N=10^4$ vertices, where each vertex is a 3-dimensional subsystem (i.e., $n=3N$), while vertices in the real-world networks are assumed to be 1-dimensional subsystems (i.e., $n=N$). The SW networks were generated using the Newman-Watts model, where $p$ is the probability of adding a new edge, and the SF networks were generated using the Barabási-Albert model, where $m$ is the number of existing vertices a new vertex is connected to. See Methods for more details on the network datasets and models.}
	\label{fig.msp}
\end{figure}

Fig.~\ref{fig.msp} illustrates the application of Algorithm \ref{alg.MSP} to randomly generated small-world (SW) and scale-free (SF) networks as well as a selection of real-world networks. As expected, the results show that a smaller number of target nodes tends to require a smaller number of sensor nodes to guarantee the functional observability of a system. 
Fig.~\ref{fig.msp}\textit{A} shows, however, that the relation between the minimum set of sensor nodes and the number of target nodes depends on the network structure, where the number of sensor nodes can be substantially smaller if the network connectivity is larger (i.e., higher parameters $p$ and $m$ in SW and SF networks, respectively).  
Similar conclusions are also noted for the real-world networks shown in Fig.~\ref{fig.msp}\textit{B}. As $r$ approaches $n$, the minimum number of sensor nodes tends to the number required for complete observability. For example, in the metabolic networks analyzed, complete observability requires monitoring $8$--$16\%$ of all metabolites, which is consistent with previous findings \cite{Liu2013c}.
However, complete observability is often unnecessary for many biomedical applications since the number of biomarkers (e.g., target nodes whose activity is altered by a disease) is usually much smaller than the network size ($r\ll N$) \cite{Barabasi2011}. If only $1\%$ of the metabolites are biomarkers, then functional observability can be guaranteed by placing sensors in only $0.17$--$0.24\%$ of the state nodes.
Moreover, Fig.~\ref{fig.si.msp} (Supporting Information) shows that, for the same metabolic networks, around $78$--$85\%$ of all metabolites are observable from a single optimally placed sensor node. This means, in particular, that in applications where all target nodes belong to this set of observable nodes, functional observability can be achieved with even fewer sensor nodes than shown in Fig.~\ref{fig.msp}\textit{B} (where the targets were randomly chosen). In addition to their significance for biological and ecological networks, these results are also relevant for cyber-physical systems in engineering applications (e.g., power grids and transportation networks), where the monitoring and detection of potential failures or cyber-attacks are often required to be conducted in specific nodes.

\bigskip
\noindent\textbf{Minimum-order functional observer design.}
\label{sec.observerdesign}
After the sensor nodes have been selected, we need to further choose a matrix $F_0$ to enable the design of a functional observer. The theoretical problem of finding a minimum-order $F_0$ that satisfies conditions \eqref{eq.darouachcond1}--\eqref{eq.darouachcond2} was solved in the past decade \cite{Fernando2010}. However, a direct numerical implementation of the method \cite{Fernando2010b} is not scalable for high-dimensional systems because it iteratively uses singular value decomposition (SVD) to construct a matrix $F_0$ that satisfies condition \eqref{eq.darouachcond1} and is followed by a combinatorial search to augment the number of rows of $F_0$ in order to satisfy condition \eqref{eq.darouachcond2} (Supporting Information, Section \ref{sec.graphalgorithm}).
We circumvent these issues by adopting the structural approach described in the previous sections, in which we convert the rank-based conditions \eqref{eq.darouachcond1}--\eqref{eq.darouachcond2} 
into equivalent graph-theoretic ones. This is achieved by first noting that, if the corresponding graph of a dynamical system has a self-link in every target node (as assumed throughout), then condition \eqref{eq.darouachcond1} implies \eqref{eq.darouachcond2} with probability 1 for triples $(A,C,F_0)$ sharing the given structure
(Supporting Information, Corollary~\ref{corol.structdarouachcond}). In light of this, only condition \eqref{eq.darouachcond1} needs to be considered to determine $F_0$ and hence the combinatorial search is no longer needed. We thus propose Algorithm \ref{alg.findF0}  (Methods) as a highly scalable solution to determine matrix $F_0$ with the \textit{smallest} order possible by adding suitable rows to $F$ in such a way that \eqref{eq.darouachcond1} is satisfied. In Algorithm \ref{alg.findF0}, instead of invoking SVD, the rank condition \eqref{eq.darouachcond1} is verified by computing the maximum matching set of an associated bipartite graph. The algorithm is shown to have a computational complexity of order $O(n^{2.5})$ (Methods), which is a substantial improvement compared to the complexity of order $O(n^4)$ of the numerical procedure in \cite{Fernando2010b}. 

\begin{figure*}[t!]
	\centering
	\includegraphics[width=0.65\textwidth]{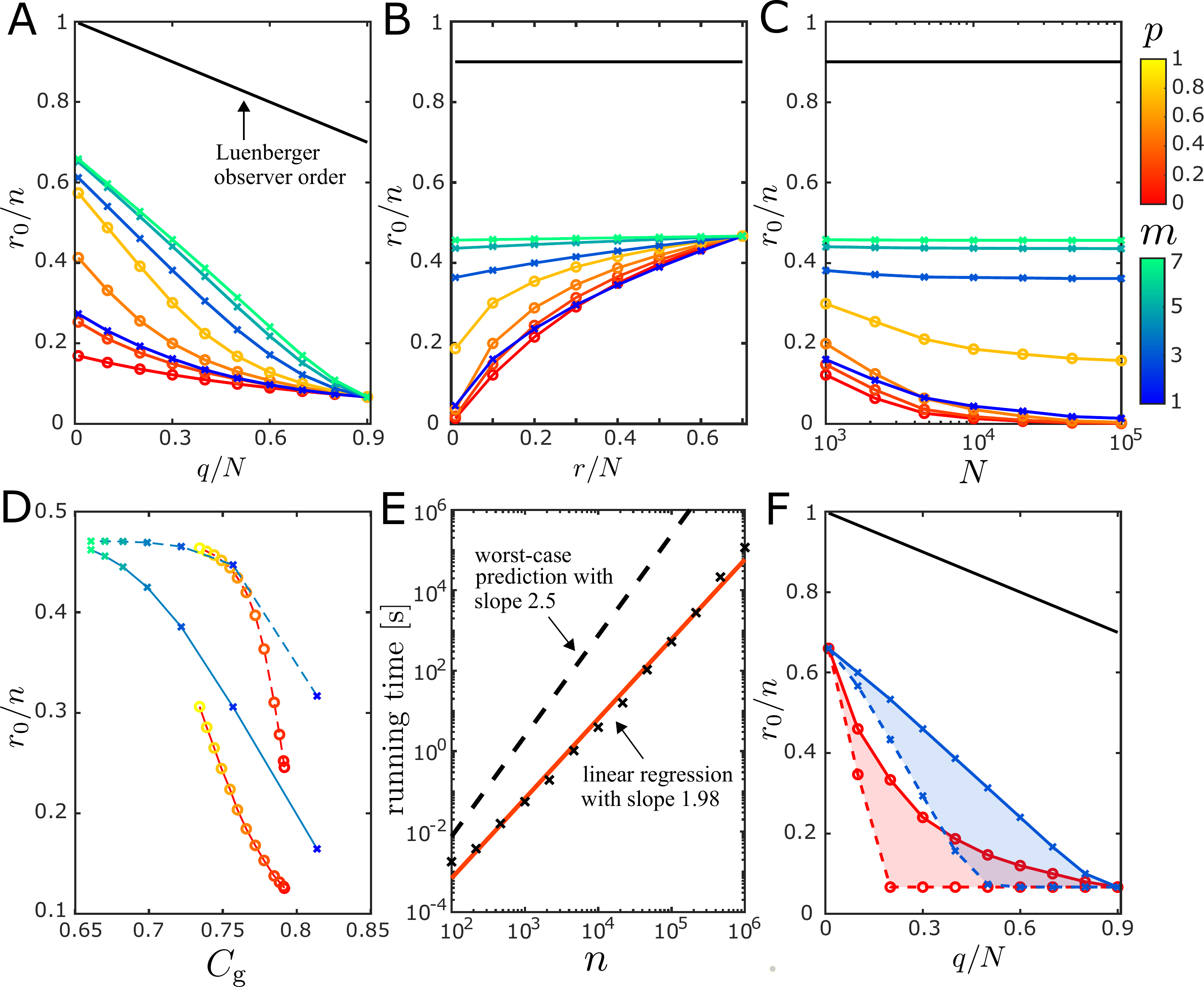}  
	\caption{
		Minimum-order functional observer design in large-scale networks.
		(\textit{A}--\textit{C})~Minimum functional observer order $r_0$ (normalized by the system dimension $n=3N$) as a function of the normalized number of sensor nodes $q/N$~(\textit{A}), normalized number of target nodes $r/N$~(\textit{B}), and network size $N$~(\textit{C}).
		The results are shown for random placement of sensor and target nodes in directed SW and SF networks (color coded by the respective parameters $p$ and $m$).
		The other parameters are set to $(N,r)=(10^4,0.1N)$ for (\textit{A}), $(N,q)=(10^4,0.3N)$ for (\textit{B}), and $(q,r)=(0.3N,100)$ for (\textit{C}).
		The black line indicates the Luenberger observer order $(n-q)$ for comparison. 
		(\textit{D})~Normalized order $r_0/n$ in directed (solid line) and undirected (dashed line) SW and SF networks as a function of the generalized clustering of the corresponding undirected graph (Methods), color coded by $p$ and $m$, for $(N,q,r)=(10^4,0.3N,0.1N)$.
		(\textit{E})~Running time of Algorithm \ref{alg.findF0} as a function of $n$ in directed SW networks for $(q,r,p)=(0.3N,0.1N,0.2)$.
		The simulations were implemented in MATLAB 
		and each network realization was run on a single core of an Intel Xeon CPU E7-8867 v4 at 2.4GHz.
		%
		(\textit{F})~Normalized order $r_0/n$ as a function of $q/N$ in undirected SW (red, $p=0$) and SF (blue, $m=3$) networks for randomly (solid line) and optimally (dashed line) placed sensors, where  $(N,r)=(100,0.1N)$.
		In all panels, each data point corresponds to an average over 100 independent realizations of the network, target placement, and sensor placement (except for the optimal placement in \textit{F}).
	}
	\label{fig.obsvorder}
\end{figure*}

Fig.~\ref{fig.obsvorder}\textit{A,B} illustrates the minimum order of the functional observer for randomly generated networks, determined by Algorithm \ref{alg.findF0}, as a function of the numbers of sensor and target nodes. 
On average, a larger sensor set $\mathcal S$ leads to a lower order $r_0$, whereas a larger target set $\mathcal T$ results in a higher order $r_0$. 
We note that, overall, functional observers are of much lower order compared to the corresponding Luenberger observers. This leads to a significant improvement in computation efficiency and scalability when designing and implementing observers in large-scale networks.
For a fixed number of target nodes, Fig.~\ref{fig.obsvorder}\textit{C} shows that the functional observer order normalized by the system dimension, ${r_0}/{n}$, tends to decrease as the network size increases  (in SF networks, 
$r_0/n$ exhibits weaker dependency on $N$ as $m$ increases). This means that the order reduction gained by the functional observer compared to the Luenberger observer increases with the network size. The extent of this gain depends, however, on other system properties, including the structure of the inference graph $\mathcal G(A)$, the choice of target nodes in $\mathcal T$, and how sensor nodes in $\mathcal S$ are placed. 
In particular, directed links, self-links, and clustering in $\mathcal G(A)$ tend to lead to a larger order reduction in the functional observer design for the random model networks considered. This is illustrated in Fig.~\ref{fig.obsvorder}\textit{D}, where it is shown for both directed and undirected networks that the functional observer order decreases sharply as a function of the generalized clustering $C_{\rm g}$, which is defined to account for both clustering and self-links (Methods). 
Interestingly, although directed graphs require a larger minimum set of sensor nodes to guarantee the structural functional observability of a system compared to undirected graphs (which only require only one sensor node \cite{Cowan2012}), directed graphs allow the design of functional observers of smaller orders. 
This result also highlights that Algorithm \ref{alg.findF0} brings computational improvement for both directed and undirected network applications compared to existing ones. Furthermore, Fig.~\ref{fig.obsvorder}\textit{E} illustrates how the running time of Algorithm~\ref{alg.findF0} scales with the network size, showing that it does not surpass our worst-case prediction.

The results shown in Fig.~\ref{fig.obsvorder}\textit{A--E} concern sensors and targets randomly placed in the inference graph. The sensor placement, in particular, was implemented by first finding the minimum set of sensor nodes for functional observability and then increasing $q$ with randomly placed sensor nodes.
As shown in Fig.~\ref{fig.obsvorder}\textit{F} for undirected networks, the functional observer order $r_0$ decreases on average as the number of sensors increases, even if the placement is random.
However, $r_0$ can be further reduced by optimizing the placement of the additional sensors. 
This is a computationally demanding bi-level optimization problem, which\textemdash for illustration purposes only\textemdash we solve using a (non-scalable) greedy algorithm (Methods). Fig.~\ref{fig.obsvorder}\textit{F} shows that such optimization indeed leads to a functional observer with a consistently smaller order compared to the average order for randomly placed sensors. Even though this specific result is illustrated in a lower-dimensional setting, we extrapolate from Fig.~\ref{fig.obsvorder}\textit{C} that this optimal sensor placement can be relevant for systems of any dimension.

\bigskip
\noindent\textbf{Comparative analysis of the observers.}
\label{sec.compobsv}
Fig.~\ref{fig.obsvestimation} compares the performances of the functional observer and Luenberger observer when estimating the target variables of a large-scale network. For the target state evolution illustrated in Fig.~\ref{fig.obsvestimation}\textit{A}, which is representative of a trajectory starting away from equilibrium, the transients of the target state estimation error $\|\bm z(t) - \hat{\bm z}'(t)\|$ are presented in Fig.~\ref{fig.obsvestimation}\textit{B} for both observers initialized with unknown initial conditions (Supporting Information, Section~\ref{sec.graphalgorithm}). It can be seen that the functional and Luenberger observers have similar dynamical behavior and that their estimates converge to the target states of the system. Statistical analysis further reveals that the two observers perform asymptotically close to each other even under the effects of modeling errors in the system matrix $A$ (Fig.~\ref{fig.obsvestimation}\textit{C}). (Note that if the system were functionally, but not completely, observable, then the target estimation error is only guaranteed to converge for the functional observer.)

\begin{figure*}[t!]
	\centering
	\includegraphics[width=0.55\textwidth]{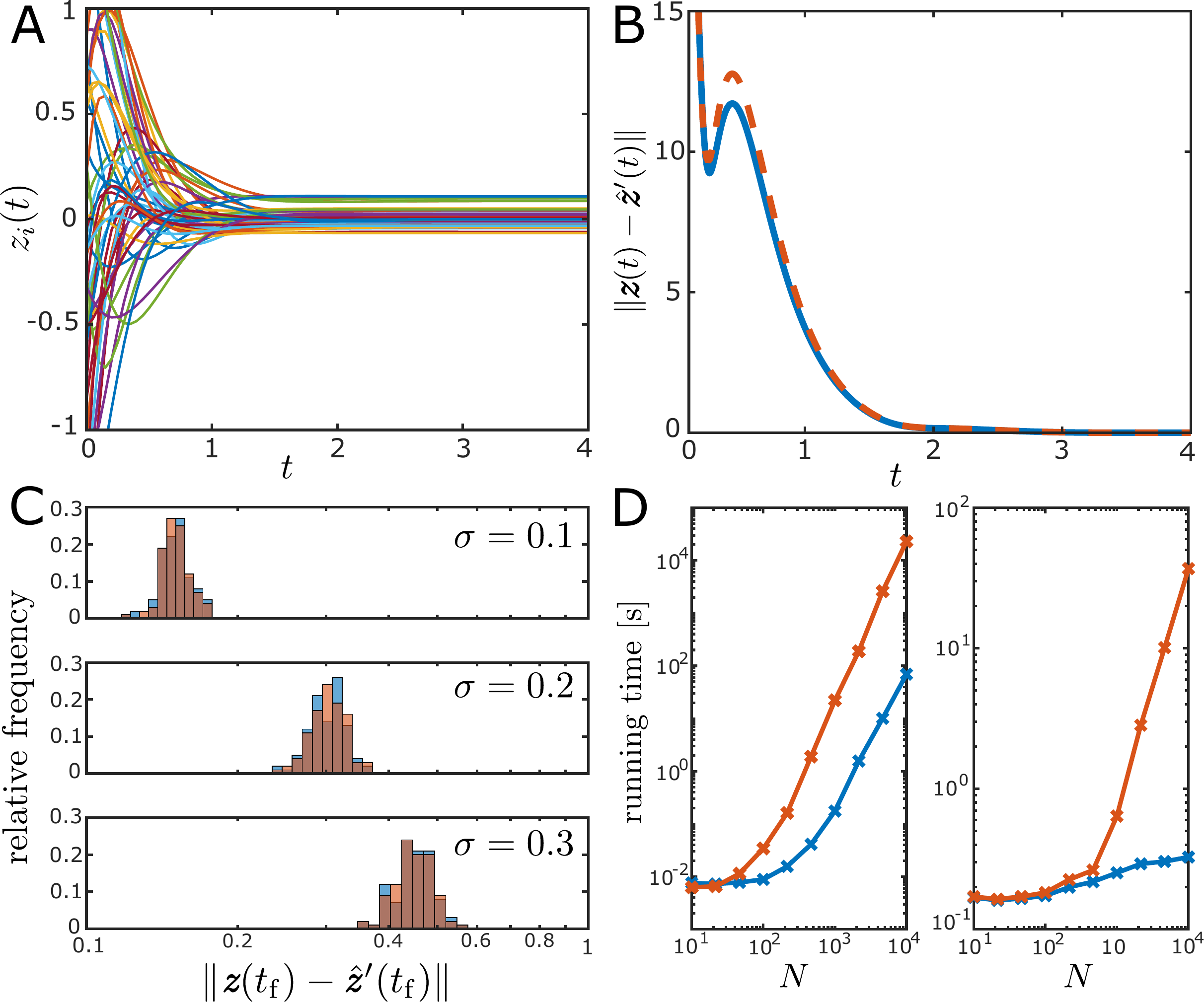}
	\caption{
		Comparative performance of the observers for target state estimation in large-scale networks.
		(\textit{A})~Dynamical evolution of target vector $\bm z(t) = F\bm x(t)$, where each color represents a different target variable $z_i(t)$.
		(\textit{B})~Dynamical evolution of the target state estimation error $\|\bm z(t) - \hat{\bm z}'(t)\|$, where $\bm z(t)$ is the ``true vector value'' of the target state and $\hat{\bm z}'(t)$ is the estimated target state provided by the functional (blue) or Luenberger (orange) observer initialized with random estimates.
		(\textit{C})~Histogram of the steady-state estimation error $\|\bm z(t_{\rm f}) - \hat{\bm z}'(t_{\rm f})\|$ for $t_{\rm f}=4$~s, where both observers are designed using a system matrix $\tilde A$ for different modeling errors $\sigma$. Here, each matrix entry is drawn from a uniform distribution as $\tilde A_{ij}\sim\mathcal U[(1-\frac{\sigma}{2}) A_{ij},(1+\frac{\sigma}{2}) A_{ij}]$.
		(\textit{D})~Running time of the design algorithms (left) and simulation time of the observer dynamics (right) as a function of the network size $N$ (computer specifications in Fig.~\ref{fig.obsvorder}\textit{E}).
		The color code in (\textit{C},\textit{D}) is the same as in (\textit{B}).
		In all panels, the results are shown for directed SW networks with randomly selected sensor and target nodes. The undeclared parameters are set as $(N,n,p,q,r) = (10^3,3N,0.2,0.3N,0.1N)$, where each parameter choice in (\textit{C,D}) corresponds to 100 realizations of the network, sensor and target selection, and modeling errors ($\sigma = 0$ in \textit{D}). See the Supporting Information, Section~\ref{sec.graphalgorithm}, for details on the simulations.
	}
	\label{fig.obsvestimation}
\end{figure*}

Overall, the numerical results show that the considerable order reduction demonstrated here for the functional observer design does not compromise its efficacy. Fig.~\ref{fig.obsvestimation}\textit{D}, on the other hand, shows that such order reduction significantly reduces the computational costs both in the design (Algorithm~\ref{alg.findF0}) and in real-time simulations of the functional observer. This computational advantage of functional observers makes them superior or even indispensable to observe large-scale networks, especially when continual re-design of the observer is expected due to the evolution of the system's equilibrium and/or network structure.


\bigskip
\noindent\textbf{Cyber-attack detection in power grids.}
\label{sec.app.powergrid}
The control of man-made technological systems, such as power grids, supply networks, interconnected autonomous vehicles, and swarms of robots, is supported by sensing and communication infrastructure. 
Decentralized control strategies \cite{Bakule2008}, such as wide-area 
control in power grids \cite{Xue2018}, are important to maintain system stability and, in particular, mitigate the impact of perturbations that could lead to large-scale failures \cite{Yang2017a}. However, such control strategies rely on  resilient communication networks between spatially distributed components, which are arguably more vulnerable to potential failures and cyber-attacks than the physical systems themselves. Indeed, there have been growing threats to cyber-security, including cyber-attacks to supervisory control and data acquisition (SCADA) systems, which led to the massive 2015 power outages in Ukraine \cite{Lee2016}, the 2000 Maroochy Water Services breach in Australia \cite{Slay2007}, the 2010 Stuxnet computer worm attack on Iran's nuclear program  \cite{Farwell2011}, and communication outages in the Western U.S. power grid in March 2019 \cite{NAERC2019}. 

Two common types of cyber-attacks are denial-of-service attacks (e.g., via signal jamming) and deception attacks (e.g., via data corruption) \cite{Amin2009}. Depending on the specifics of the attack, modeling of the physical system dynamics and transmitted data can still be used to design observers capable of recovering lost data through state estimation. Crucially, state estimation can also be used to detect deception attacks \cite{Teixeira2010,Pasqualetti2013a,Giraldo2018}, which is significant because such attacks are designed to evade detection.
We now show, in the context of power grids, how functional observers can be implemented for cyber-attack detection and data recovery, and the extent of their computational efficiency improvement compared to the traditional full-state estimators.

\begin{figure*}[t]
	\centering
	\includegraphics[width=\textwidth]{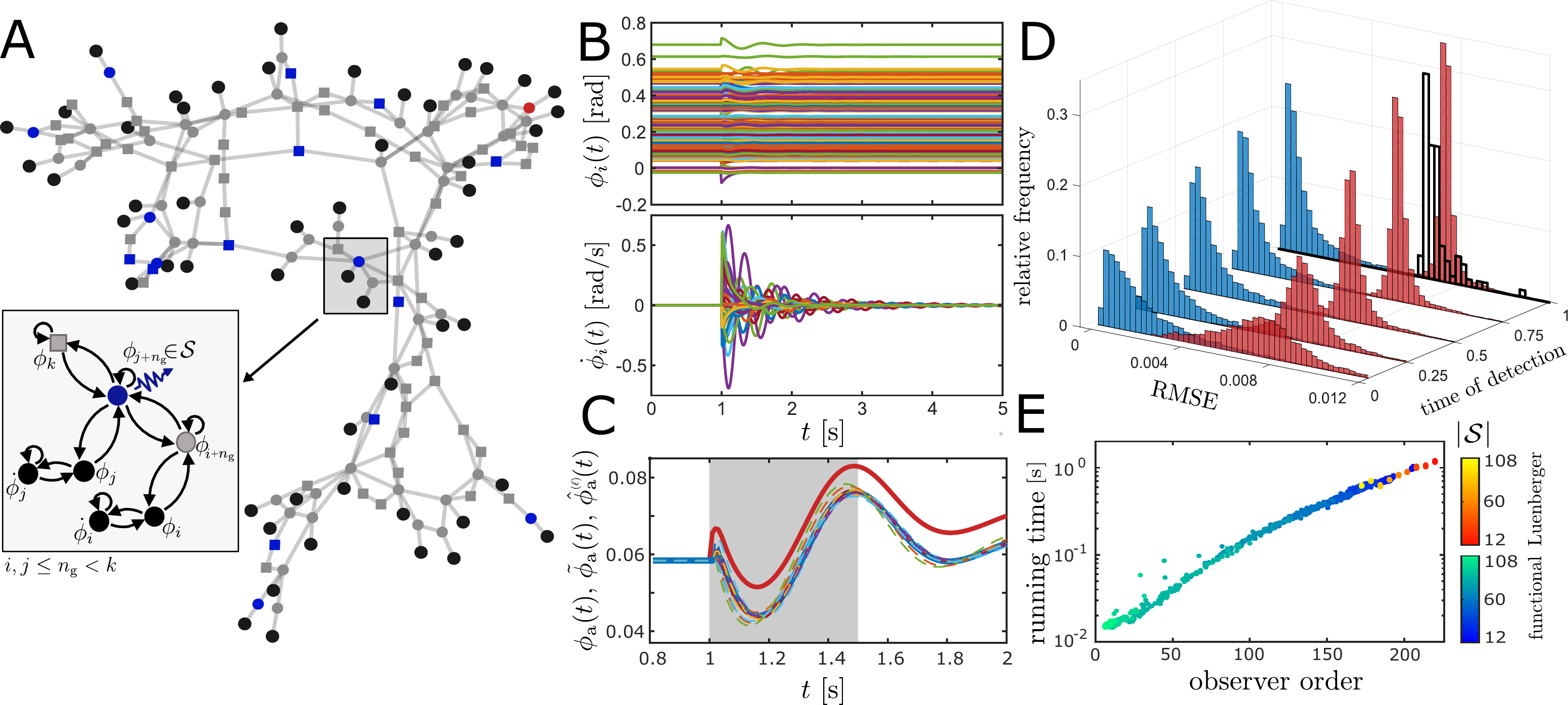}
	\caption{
		Deception-attack detection in a power grid.	
		(\textit{A})~Diagram of the IEEE-118 network, which comprises 118 buses with $n_{\rm g}=54$ generators (black circles), each connected to a terminal bus (gray circles), and $n_{\rm l} = 64$ loads (gray squares). PMUs are randomly placed in $|\mathcal S|=n_{\rm g}/3$ load and generator terminal buses (blue symbols). The deception attack tampers with the transmitted measurement $\bar{\phi}_{\rm a}(t)$ from the highlighted PMU (red circle).
		\arthur{The inset illustrates the inference graph of the dynamical system 
		\eqref{eq.generator}--\eqref{eq.load}, where each generator bus (vertex) is represented by two state nodes (phase $\phi_i$ and frequency $\dot\phi_i$ in \eqref{eq.generator}) and each generator terminal and load bus is represented by a single state node (phase $\phi_i$ in \eqref{eq.load}). Links with state nodes outside the highlighted neighborhood are omitted.}
		(\textit{B})~Dynamics of the oscillators' phases $\{{\phi}_i(t)\}_{i=1}^N$ and the generators' frequencies $\{\dot{\phi}_i(t)\}_{i=1}^{n_{\rm g}}$ over time $t$. An additive perturbation, drawn from the Gaussian distribution $\mathcal N(0,0.01)$, is applied to each generator's phase in steady state at $t=1$~s.
		(\textit{C})~Functional observer-based detection of a deception attack, where the transmitted measurement $\bar{\phi}_{\rm a}(t)$ of the terminal's phase $\phi_{\rm a}(t)$ (blue solid line) is replaced by the false data $\tilde{\phi}_{\rm a}(t)$ (red solid line). The data are reconstructed with the state estimates $\hat{\phi}_{\rm a}^{(\ell)}(t)$ provided by different functional observers (colored dashed lines), where the shaded window illustrates a time of detection $t_{\rm d}=0.5$ s.
		(\textit{D})~Histogram of the RMSE between the transmitted measurement $\bar{\phi}_{\rm a}(t)$ and the state estimates $\hat\phi_{\rm a}^{(\ell)}(t)$ provided by the functional observers $\ell=1,\ldots,100$ as a function of  $t_{\rm d}$. 
		The histograms are shown for the system under attack  
		(red) and not under attack 	
		(blue), where an estimate of the statistical properties of the latter can be inferred from training data.
		Each histogram comprises 10{,}000 data points corresponding to 100 independent realizations of the system perturbations with 100 functional observers designed for each realization. 
		The contoured histogram at $t_{\rm d} = 1$ s represents the hardest-to-detect attack (smallest median value) for the simulated perturbations.
		(\textit{E})~Running time of the observer design as a function of the observer order. The results are color coded for functional (blue scale) and Luenberger (red scale) observers as the number of sensors $|\mathcal S|$ is varied for 100 independent realizations, where the placement of the attack and PMUs is random in each realization and each data point is an average over 100 observers.
		In all simulations, the observers were designed to estimate the target node $\phi_{\rm a}$ using the system model \eqref{eq.generator}--\eqref{eq.load} linearized around the equilibrium point 
		and $\hat{\phi}_{\rm a}^{(\ell)}(t)={\phi}_{\rm a}(t)$ for $t<1$ s (see Methods for details).
	}
	\label{fig.powergrid}
\end{figure*}

The power-grid dynamics can be modeled as a structure-preserving network of coupled first- and second-order Kuramoto oscillators \cite{Dorfler2013,Nishikawa2015}. In this model, the generators dynamics are governed by the so-called swing equation,
\begin{equation}
\frac{2H_i}{\omega_{\rm R}}\ddot{\phi}_i + \frac{D_i}{\omega_{\rm R}}\dot{\phi}_i = P_i + \sum_{j=1,j\neq i}^{N}K_{ij}\sin(\phi_j-\phi_i),
\label{eq.generator}
\end{equation}
for $i=1,\ldots,n_{\rm g}$, and the dynamics of load buses and generator terminals are described as first-order phase oscillators,
\begin{equation}
\frac{D_i}{\omega_{\rm R}}\dot{\phi}_i = P_i + \sum_{j=1,j\neq i}^{N}K_{ij}\sin(\phi_j-\phi_i),
\label{eq.load}
\end{equation}
for $i=n_{\rm g}+1,\ldots,N$, where $n_{\rm g}$ is the number of generators, $n_{\rm l}$ is the number of load buses, $N = 2n_{\rm g}+n_{\rm l}$ is the number of oscillators (vertices), and $n=N+n_{\rm g}$ is the system dimension. Here, $\phi_i(t)$ is the phase angle of oscillator $i$ at time $t$ relative to the frame rotating at reference frequency $\omega_{\rm R}$, and $H_i$ and $D_i$ are the inertia and damping constants, respectively. In addition, $K_{ij}=V_iV_jB_{ij}$, where $B_{ij}$ is the susceptance of the transmission line connecting buses $i$ and $j$, and $V_i$ and $V_j$ are the voltage magnitudes at these buses. If there is no line connecting buses $i$ and $j$, $K_{ij}=0$. The power injection $P_i$ represents power generation for $P_i>0$ and power consumption for $P_i<0$.

We illustrate our framework on the IEEE-118 benchmark system given by the diagram in Fig.~\ref{fig.powergrid}\textit{A} and parameters in Methods. 
\arthur{The inset shows a zoom-in representation of the corresponding inference graph of the power-grid model \eqref{eq.generator}--\eqref{eq.load} around an equilibrium point.}
We assume that the power grid is equipped with phasor measurement units (PMUs) randomly placed on a subset of load and generator terminal buses, comprising the set of sensor nodes $\mathcal S\subseteq \{\phi_{n_{\rm g}+1},\dots,\phi_N\}$.
The PMU measurements are transmitted to a control center in real time to support automated control actions, human decision-making, and cyber-attack detection.
We assume the system initially operates in steady state when, taking advantage of an otherwise inconsequential perturbation at time $t=1$~s (Fig.~\ref{fig.powergrid}\textit{B}), a deception cyber-attack tampers with the measured data $\phi_{\rm a}(t)$ from one of the sensors, transmitting instead false data $\tilde{\phi}_{\rm a}(t)$ to the control center for $t>1$ s. For illustration purposes, in Fig.~\ref{fig.powergrid}\textit{C} we assume that the false data are copied from the measurements of some neighboring vertex $j$, i.e., $\tilde{\phi}_{\rm a}(t)=\{\phi_j(t): K_{{\rm a}j}> 0\}$.

We show that this cyber-attack can be successfully detected by designing functional observers and cross-validating the transmitted measurements $\bar{\phi}_{\rm a}(t)$ against the state estimates $\hat{\phi}^{(\ell)}_{\rm a}(t)$ of each observer $\ell$. 
This cross-validation takes place during the short transient dynamics that follow the perturbation, where $\bar{\phi}_{\rm a}(t)\leftarrow\tilde{\phi}_{\rm a}(t)$ if there is an attack and $\bar{\phi}_{\rm a}(t)\leftarrow{\phi}_{\rm a}(t)$ otherwise.
Since one has no access to the true state estimation error $\phi_{\rm a}(t)-\hat{\phi}_{\rm a}^{(\ell)}(t)$, such cross-validation is performed statistically, relying on the state estimation of \textit{multiple} functional observers designed from distinct $\mathcal S^{(\ell)}\subset\mathcal S$, with cardinality $|\mathcal S^{(\ell)}|=|\mathcal S|/2$, as shown in Fig.~\ref{fig.powergrid}\textit{C}. To cross-validate the state estimates against the transmitted data, we use the root-mean-square error (RMSE) index:

\begin{equation}
    \operatorname{RMSE} = \sqrt{\frac{1}{t_{\rm d}} \int_{1}^{t_{\rm d}+1}\| \bar{\phi}_{\rm a}(t) - \hat{\phi}_{\rm a}^{(\ell)}(t)\|^2 dt}.
\end{equation}

\noindent
where $t_{\rm d}$ is the \textit{time of detection} window, defined as the time it takes to reliably detect an attack after it is launched. 
Clearly, the performance of the detection method depends on $t_{\rm d}$ (shaded window in Fig.~\ref{fig.powergrid}\textit{C}), which is also a lower bound of the time interval under which the system is left unprotected waiting for a decision.

Fig.~\ref{fig.powergrid}\textit{D} shows that, after a short period of time ($\approx$0.25~s), the separation between the histograms corresponding to the attacked and unattacked systems becomes statistically significant so that reliable detection can be made. Since the separation between the two histograms becomes more pronounced as time increases, the detection becomes more accurate when a larger detection window is allowed.
While the histograms show an aggregate representation of attacks under different perturbation scenarios, the same conclusions hold for individual attacks. In our simulations, reliable detection can be achieved even in the hardest-to-detect attack among all realizations, as illustrated by the contoured histogram for $t_{\rm d}=1$~s.
%
\arthur{The asymptotic convergence of the functional observer is guaranteed if the perturbed state remains in the attraction basin of the given equilibrium. For larger perturbations crossing into the basin of a different equilibrium, the functional observer has to be redesigned around the new equilibrium. 
Thus, the extent to which the designed observer remains valid upon large perturbations is ultimately determined by the basin stability of the nonlinear system, which can be assessed numerically as proposed in \cite{Menck2013,Menck2014}.}


In applications with a constantly changing operation point, such as smart power grids, algorithms for the design of controllers and observers have to be sufficiently fast so that they can be implemented in real time following a change of the equilibrium operation point.
Moreover, the statistical significance of the cyber-attack detection method increases with the number of implemented observers. Thus, the method can be used statistically only if, in addition, the algorithms for the observer design are fast enough to allow for a sufficiently large number of observers to be implemented in real time.
To that end, Algorithm~\ref{alg.findF0} provides a fast and scalable solution for the design of minimum-order functional observers. Fig.~\ref{fig.powergrid}\textit{E} 
shows for the IEEE-118 system that the functional observer usually has a much smaller dimension than the Luenberger observer as the number of PMUs increases, leading to a running time reduction by a factor of up to a hundred. This example illustrates for a small power grid the results anticipated in Figs.~\ref{fig.obsvorder} and \ref{fig.obsvestimation}, and we expect that the computational gain of the functional observers 
will increase as the network size increases (Fig.~\ref{fig.obsvestimation}\textit{D}).

\bigskip
\noindent\textbf{Estimation of epidemic spreading under limited testing.}
\label{sec.app.epidemics}
Motivated by the unprecedented impact of the COVID-19 pandemic, recent studies have highlighted the importance of epidemiological models for the design of containment measures.
Such models are useful for understanding the growth patterns and scaling laws governing the epidemic spreading \cite{Singer2020,Blasius2020} as well as for developing control strategies \cite{Lesniewski2020,Tsay2020,Morris2020}, which ultimately support policy-making decisions \cite{Hethcote2000}.
For instance, these models can inform decisions on social distancing and quarantine measures, which cannot be taken lightly as they generally involve social and economic costs.
The quality of the model predictions, and thus their ability to inform decisions, is strongly dependent on the state of the epidemic in a population, which is often only partially known due to limitations in testing and reporting. This is especially the case for a new and rapidly evolving pandemic, since it takes time to mobilize medical resources and ramp up testing capacity.
In particular, as illustrated in early stages of the COVID-19 pandemic, the testing capacity can vary widely across cities even within the same country.
Therefore, it is important to develop the ability to infer as much information as possible from the available incomplete data. Previous work has shown that state estimators can provide meaningful estimates of the true state of the number of infected, susceptible, and recovered individuals in an epidemic when sufficient data are available \cite{Iggidr2019,Tsay2020}. 

\begin{figure*}[t]
	\centering
	\includegraphics[width=\textwidth]{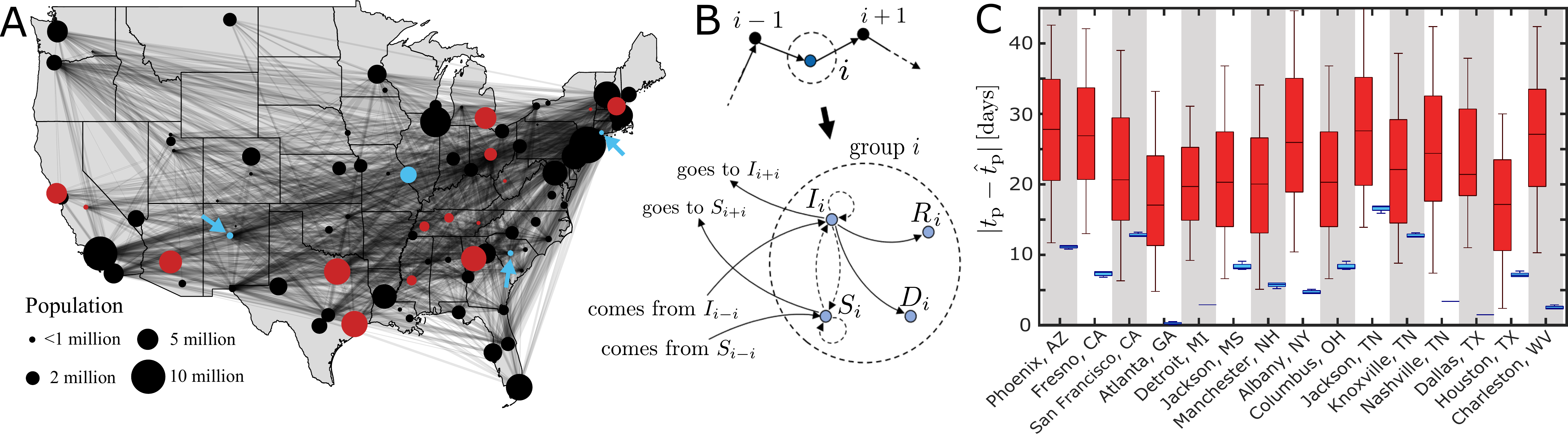}  
	\caption{
		Target state estimation in epidemics.
		(\textit{A})~U.S. air transportation network, where vertices represent cities and edges represent the direct flights. The target (red) and sensor (blue) cities are highlighted.
		(\textit{B})~Network flow between three cities (top) and the corresponding inference graph  (bottom) of the dynamical system \eqref{eq.sird}. The dynamics are taken into account by expanding each vertex $i$ as a set of SIRD state nodes, where links represent the linear (solid lines) and nonlinear (dashed lines) interactions between the state variables in the differential equations.
		(\textit{C})~Boxplot of the error between the time $t_{\rm p}$ of the epidemic peak in each target city and the predicted peak time $\hat{t}_{\rm p}$. The red bars show predictions from free-run simulations, while the blue bars show predictions given by the estimates of the designed functional observer, both for 100 independent realizations with arbitrary initial conditions.
		The bottom, middle and top of each box represent the 25th, 50th (median) and 75th percentiles of the sample, respectively; the whiskers mark the 5th and 95th percentiles.
		For illustration purposes, this example assumed 
		an outbreak initiated in Miami, FL, and its spreading dominated by domestic air transportation. 
        See Methods for simulation and modeling details.
	}
	\label{fig.epidemics}
\end{figure*}

Here, we show that, for target estimation, functional observers can be designed to provide reliable information using reduced testing data compared to full-state estimators. As the underlying dynamics in this case are inherently nonlinear, this will give us the opportunity to demonstrate that the methods established here for linear systems are also informative for nonlinear ones.  We illustrate this application for the estimation of the infected population in a set of ``target cities'' (where testing is inadequate) from the known case-fatality rate in a set of ``sensor cities'' (where sufficient testing is conducted). 
To this end, consider the multi-group 
model of the spreading of an infectious disease mediated by the air transportation network \cite{Colizza2006}:

\begin{equation}
\begin{aligned}
\begin{cases}
\dot S_i = \mathlarger{- \beta_i\frac{S_iI_i}{P_i} -\sum_{j=1,j\neq i}^N K_{ji}\frac{S_i}{P_i}+\sum_{j=1,j\neq i}^N K_{ij}\frac{S_j}{P_j},} \\
\dot I_i = \mathlarger{\beta_i\frac{S_i I_i}{P_i} - \gamma I_i -\sum_{j=1,j\neq i}^N K_{ji}\frac{I_i}{P_i}+\sum_{j=1,j\neq i}^N K_{ij}\frac{I_j}{P_j},} \\
\dot R_i = \mathlarger{(1-\eta)\gamma I_i,} \\
\dot D_i = \mathlarger{\eta\gamma I_i,}
\end{cases}
\end{aligned}
\label{eq.sird}
\end{equation}

\noindent
for $i=1,\ldots,N$, where $N$ is the number of groups, $(S_i,I_i,R_i,D_i)$ are the susceptible, infected, recovered, and dead (SIRD) individuals of group $i$ with population size $P_i=S_i+I_i+R_i+D_i$. Parameters $\gamma$ and $\eta$ are the recovery and fatality rate, respectively, $\beta_i$ is the contact rate in group $i$, and $K$ is the adjacency matrix of the transportation network, where $K_{ij}$ describes the number of individuals traveling from group $j$ to $i$ daily. As indicated in Fig.~\ref{fig.epidemics}\textit{A}, we assume each group to be a city in the United States and matrix $K$ to describe the air traffic between the cities' airports, while $(\beta_i,\gamma,\eta)$ and $K$ are chosen for illustration purposes to reflect the early stages of the COVID-19 epidemic (Methods). The epidemiological model \eqref{eq.sird} can be represented as an inference graph, where each state variable is a node and links represent linear and nonlinear interactions between variables, as illustrated by Fig.~\ref{fig.epidemics}\textit{B}.

Within the idealized model \eqref{eq.sird}, if the exact state of the epidemic is known at a given time, subsequent containment measures to ``flatten the curve'' could be designed based on a free-run simulation of the model. Unfortunately, this would not be the case in practice even if an accurate model were available because, as noted above, there are limitations on how precise are the data on the number of infected individuals in a population.
This is illustrated in Fig.~\ref{fig.epidemics}\textit{C} (red) for the predicted epidemic peaks in a set of 15 target cities.
We circumvent this limitation by designing a functional observer that provides more reliable estimates of the number of infected individuals.

A salient property of this problem is that the epidemiological model \eqref{eq.sird} is nonlinear. 
Our algorithms are not guaranteed to determine a \textit{minimum} set of sensors and a \textit{minimum}-order functional observer if the system is nonlinear. Notwithstanding, in many applications, Algorithms~\ref{alg.MSP} and~\ref{alg.findF0} can still be used to identify a small set of sensor nodes $\mathcal S$ (relative to the number of state variables) and a small order matrix $F_0$ (compared to that of a full-state observer) for the design of the corresponding \textit{nonlinear} functional observer. 
In particular, it can be shown that the nonlinear functional observer for system \eqref{eq.sird} satisfies the theoretical conditions for asymptotic convergence of the state estimates (Supporting Information, Section~\ref{sec.nonlinearfuncobsv}).

To design the functional observer, let a city $i$ be a sensor (target) city if $D_i\in\mathcal S$ (if $I_i\in\mathcal T$) is a sensor (target) node.
Given the specified set of target cities,  Fig.~\ref{fig.epidemics}\textit{A} highlights the selected set of 4 sensor cities for functional observability as provided by Algorithm~\ref{alg.MSP}. We then design the nonlinear functional observer based on the inference graph of system \eqref{eq.sird} and sets $\mathcal S$ and $\mathcal T$ using Algorithm~\ref{alg.findF0} (Supporting Information, Section~\ref{sec.nonlinearfuncobsv}).
Fig.~\ref{fig.epidemics}\textit{C} (blue) shows the estimated time of the epidemic peaks 
as provided by the designed nonlinear functional observer. Clearly, there is a great improvement in the estimation accuracy (i.e., a smaller deviation between realizations of different initial conditions), which highlights the observer's resilience to false initial predictions.
Note that this functional observer is designed in a situation where the system is unobservable but is functionally observable. This highlights a fundamental advantage of our approach: when the conventional full-state estimators are not applicable, our approach may still provide high-quality estimates of the nonlinear system's state from sparse measurement data.

\section*{Discussion}
\label{sec.discussion}

In large-scale complex networks, it often is physically impossible to ensure complete observability or computationally prohibitive to design full-state observers. This poses fundamental challenges to our ability to observe, understand, and control network processes. Yet, many practical applications only require the observation of a small subset of key variables, which we formalize by introducing the notion of structural functional observability. This work establishes graph-theoretical conditions for functional observability, enabling direct application to large-scale network systems. In particular, the resulting theory allows us to devise highly scalable algorithms to optimally allocate sensors and design functional observers for accurate estimation of a target subset of all state variables.

Observability and controllability are dual properties in control theory \cite{Kalman1959}, which leads to the natural question of what would be the dual of functional observability. 
%
By parallelism, 
one might be tempted to assume that a
notion of ``target observability'' for the given target set could be defined as the target controllability condition \cite{Gao2014} for the pair $(A^\transp,C^\transp)$. However, we argue that\textemdash in contrast with the target controllability condition for controller design\textemdash such notion of target observability does not lead to the design of an observer capable of estimating the state of the target nodes. Crucially, target estimation in this sense can only be accomplished with functional observability, which is a stronger condition than target observability. Indeed, based on the observability Gramian, we can show that the initial condition of the target nodes is uniquely reconstructable from measurements if and only if condition \eqref{eq.linearfuncobsv} is satisfied (Supporting Information, Section~\ref{sec.relatedworks}).
This leads us to a fundamental conclusion: \arthur{despite not being strictly dual,} functional observability and target controllability are \arthur{mirror properties} in terms of their functionalities.

Applications of the results presented here may include cyber-security in the decentralized control of infrastructure and multi-agent systems, state estimation for epidemic and ecosystem management from incomplete observation, and identification of biomarkers for prognosis, diagnosis, and treatment.
%
In biomedical and ecological applications, for example, assessment of the state of the system must not interfere significantly with the system dynamics in order to avoid mispredictions and adverse effects.
Our results provide a framework under which variables of interest can be estimated from indirect measurements, avoiding variables that interfere with the system's function and control actions. They also provide an alternative to reduce the computational costs associated with sensing, communication, and data processing in infrastructure, supply, and technological networks that require real-time feedback control \cite{WangMorse2018}.

This work also leads to fundamental questions worth pursuing in future research.
First, because our theory allows a decision on whether the system is functionally observable or not from the graph structure of the model alone, it is natural to consider how to design functional observers when specific parameters of the system are unknown. This could be addressed by combining our graph-based methods with machine learning techniques to enable data-driven state estimation by functional observers.
\arthur{Second, while here we considered the sensor placement problem and functional observer design in a pre-existing cyber-physical network, combining these approaches with the co-design of the communication (and/or physical) layers of the system can lead to further improved resilience against attacks and failures.}
\arthur{In particular, the transition to renewable energy is leading to highly heterogeneous networks that integrate conventional synchronous generators, converter-based distributed generation units (solar and wind), flexible AC transmission systems, and high-voltage DC transmission systems. To ensure the functional observability and stable operation of future power systems, detailed models of these critical components \cite{vittal2019power} can be incorporated in the co-design of the communication network for wide-area monitoring, protection, and control systems \cite{vaccaro2016wide}.}
Third, it would be interesting to examine the estimation accuracy, convergence rate, and stability of functional observers in large-scale systems under modeling uncertainties, measurement noise, and round-off errors.
\arthur{A systematic study of the theoretical performance of functional observers in the presence of bounded modeling and measurement errors is still an open problem in the literature, which may be approached in a framework dual to the problem of robust control \cite{zhou1998essentials}.}
Finally, our application example to epidemics illustrates how the modeling of the system structure can be used to design functional observers in nonlinear systems. Deriving conditions that allow our methods to be extended to more general nonlinear systems with guarantees of optimality is left for future work.

\section*{Methods}

\noindent\textbf{Observer design.}
Both the Luenberger and the functional observer are auxiliary dynamical systems that can be designed under the following structure:
\begin{equation}
\begin{aligned}
\begin{cases}
	\dot{\bm w} = N\bm w + J\bm y + H\bm u, \\
	\hat{\bm z} = D\bm w + E\bm y,
\end{cases}
\end{aligned}
\label{eq.functionalobserverdyn}
\end{equation}

\noindent
where $(N,J,H,D,E)$ are design matrices with consistent dimension. In a properly designed Luenberger observer, $\hat{\bm z}$ converges asymptotically to the $(n-q)$ unmeasured states of the system \eqref{eq.linearsys}, while, in a properly designed functional observer, $\hat{\bm z}$ converges asymptotically to $\bm z_0 = F_0\bm x$. The initial state $\bm w(0)$ of the observer can be assigned arbitrarily and the initial state $\bm x(0)$ is unknown. 

Despite sharing the same equations \eqref{eq.functionalobserverdyn}, the Luenberger and functional observers have \textit{very} different orders (defined by the dimension of vector $\bm w$)
and involve \textit{very} different design procedures. In what follows, the design procedure for each observer is presented using 
the linear transformation
\begin{equation}
P^{-1}AP = \begin{bmatrix}
A_{11} & A_{12} \\ A_{21} & A_{22}
\end{bmatrix},
\,\,\,
PB = \begin{bmatrix} 
B_1 \\ B_2 
\end{bmatrix},
\,\,\,
F_0P = \begin{bmatrix}
F_1 & F_2
\end{bmatrix},
\label{eq.lineartransf}
\end{equation}

\noindent
where $P=[C^\dagger \,\,\, C^\perp]$ is the transformation matrix, $C^\dagger$ is the Moore-Penrose inverse of $C$, and $C^\perp$ is the orthogonal complement of the row space of $C$.

If $(A,C)$ is observable, then a stable Luenberger observer (with arbitrary stable poles) can be designed by defining $N = A_{22}-EA_{12}$, $J=A_{21}-EA_{11} + NE$, $H = B_2 - EB_1$, and $D=I_{n-q}$, where $E$ is a design matrix that can be found using any pole-placement algorithm such that $(A_{22}-EA_{12})$ is Hurwitz. Note that $\bm w\in\R^{n-q}$.

If $(A,C,F)$ is functionally observable and $(A,C,F_0)$ satisfies conditions \eqref{eq.darouachcond1}--\eqref{eq.darouachcond2} for some $\operatorname{row}(F_0)\supseteq\operatorname{row}(F)$, then a stable functional observer (with arbitrary stable poles) can be designed as follows \cite{Darouach2000,Trinh2012}: i)~Compute $N_1=(\Phi\Omega^\dagger A_{12}+F_2A_{22})F^\dagger_2$ and $N_2=(\Omega\Omega^\dagger - I_q)A_{12}F_2^\dagger$, where $\Omega = A_{12}F_2^\perp$ and $\Phi = -F_2A_{22}F_2^\perp$; ii)~Find $Z$ using any pole-placement algorithm such that $N = N_1 - ZN_2$ is Hurwitz; iii)~Compute $T = [T_1 \,\,\, T_2]$, where $T_1 = \Phi\Omega^\dagger + Z(I_q-\Omega\Omega^\dagger)$ and $T_2 = F_2$; iv)~Compute $D = I_r$, $J = T_1A_{11} + T_2A_{21}-NT_1$, $H=TB$, $E = F_1 - DT_1$. Note that $\bm w\in\R^{r_0}$.

In this paper, we use the linear-quadratic regulator (LQR) as a pole-placement algorithm, which requires solving the algebraic Riccati equation $X^\transp P + PX - PYR^{-1}Y^\transp P + Q = 0$ for $P$. 
For the Luenberger observer design, let $E\leftarrow P$, $X\leftarrow A_{22}^\transp+\alpha I$, and $Y\leftarrow A_{12}^\transp$. 
For the functional observer design, let $Z\leftarrow P$, $X\leftarrow N_2^\transp+\alpha I$, $Y\leftarrow N_1^\transp$.
In both cases, we define $\alpha = -100$, $Q = 10^{-3}\cdot I$, and $R=I$,  which leads to observers with
minimum estimation energy ($R\gg Q$). The diagonal terms in $X$ guarantee that $Z$ and $E$ are designed to have the right-most eigenvalues equal to $\alpha$. This ensures that their dynamics are dominated by the same slowest eigenvalue, allowing a consistent comparison of the observers' performance despite their different orders.

\bigskip
\noindent\textbf{Minimum sensor placement algorithm.}
Algorithm \ref{alg.MSP} provides an approximate solution to the minimum sensor placement problem in polynomial time.
The key steps are as follows.
First, a breadth-first search is run for each target node (for-loop), allowing us to determine the sets of target nodes $\mathcal R_i\subseteq\mathcal T$ that have a direct path to each state node $x_i\in\mathcal K\subseteq\mathcal X$ in $\mathcal G(A)$, where $\mathcal K$ is a set of candidate nodes for sensor placement. 
Note that a breadth-first search has a complexity of order $O(n+|\mathcal E|)$ \cite{Newman2010},
where $|\mathcal E|$ is the cardinality of the set of links $\mathcal E$ in $\mathcal G(A)$, and can be run in parallel for each $x_i\in\mathcal T$. 
Second, a greedy algorithm is used (while-loop) to find an approximation of the minimum set of sensor nodes such that structurally functional observability is guaranteed. In the worst-case scenario, the greedy search has a complexity of order $O(n^2)$.

\begin{algorithm}[H]
	\caption{\label{alg.MSP} \small Minimum sensor placement}
	\begin{flushleft}
		
		\textbf{input:} inference graph $\mathcal G(A^\transp)$, target set $\mathcal T$, candidate set $\mathcal K$
		
		\textbf{output:} sensor set $\mathcal S$
		
		\smallskip
		
		initialize $\mathcal R_i\leftarrow\emptyset$, $\forall i=1,\ldots,|\mathcal K|$;
		
		\textbf{for} all $x_i\in\mathcal T$
		
		\begin{itemize}
			\vspace{-4pt}
			
			\item[] starting at node $x_i$ in graph $\mathcal G(A^\transp)$, find the set of reachable nodes $\mathcal R'_i\subseteq\mathcal X$ using a breadth-first search algorithm;
			
			\item[] \textbf{for} all $x_j\in\mathcal K$
			
			\begin{itemize}
			    \vspace{-4pt}
			    
			    \item[] if $x_j\in\mathcal R'_i$, then $\mathcal R_j\leftarrow\mathcal R_j\cup\{x_i\}$; 
			    
			    \vspace{-4pt}
			    
			\end{itemize}
			
			\item[]\textbf{end}
		
			\vspace{-4pt}
		\end{itemize}
	
		\textbf{end}
		
		\smallskip
		
		initialize $\mathcal S\leftarrow\emptyset$.

		\textbf{do}
		\begin{itemize}
			\vspace{-4pt}

			\item[] for all $x_i\in\mathcal K\backslash\mathcal S$, compute gain
			\begin{equation}
			\Delta(x_i) = \Big|\bigcup_{j : x_j\in\mathcal S\cup\{x_i\}}\mathcal R_j\Big| - \Big|\bigcup_{j : x_j\in\mathcal S}\mathcal R_j\Big|;
			\end{equation}
			
			\item[] add the element with the highest gain
			\begin{equation}
			\mathcal S \leftarrow \mathcal S\cup\{\arg\max_{x_i}\Delta(x_i)|x_i\in\mathcal K\backslash\mathcal S\};
			\label{eq.greedygain}
			\end{equation}
			
			\vspace{-4pt}
		\end{itemize}
	
		\textbf{while} $\bigcup_{j:x_j\in\mathcal S}\mathcal R_j \neq \mathcal T$.
	\end{flushleft}
\end{algorithm}

\bigskip
\noindent\textbf{Minimum-order functional observer design algorithm.}
For cases where $(A,C,F)$ is functionally observable, Algorithm \ref{alg.findF0} provides a scalable solution to the problem of determining $F_0$ with \textit{minimum} order such that condition \eqref{eq.darouachcond1} is satisfied for the triple $(A,C,F_0)$ (Supporting Information, Corollary \ref{corol.algminF0}). 
In this algorithm, we avoid numerical computation of the rank condition in \eqref{eq.darouachcond1}, which is numerically unstable and computationally demanding for high-dimensional matrices (e.g., numerical rank computation based on SVD methods has a complexity of order $O(n^3)$). Instead, we compute the structural (or generic) rank of a matrix by finding the maximum matching of the corresponding \textit{bipartite graph} of such a matrix. This is a highly scalable alternative since the maximum matching problem can be solved by the Hopcroft-Karp algorithm, which has a complexity of order $O(\sqrt{n_b}|\mathcal E_b|)$, where $n_b$ and $|\mathcal E_b|$ are the numbers of nodes (columns and rows) and links (non-zero entries) in the bipartite graph (matrix).
Fig.~\ref{fig.algorithm} (Supporting Information) presents an illustrative example of Algorithm~\ref{alg.findF0}, where it becomes clear how we take advantage of the structural properties of a dynamical system to augment $F_0$ at every iteration until condition \eqref{eq.darouachcond1} is satisfied. 
%

Algorithm \ref{alg.findF0} finds the minimum-order $F_0$ in $O((r_0-r)\sqrt n_b|\mathcal E_b|)$ time, where $n_b$ and $|\mathcal E_b|$ are the numbers of nodes and links in $\mathcal B$, respectively. This follows from $F_0$ being determined with $(r_0-r)$ recursive iterations, in which a maximum matching algorithm of $O(\sqrt{n_b}|\mathcal E_b|)$ is run at each iteration. This complexity order can be estimated as a function of $n$ in a worst-case scenario in which one has a single sensor node ($q=1$) and a single target node ($r=1$), but\textemdash in order to satisfy \eqref{eq.darouachcond1}\textemdash all other unmeasured nodes must be estimated, resulting in $r_0 = n-q \approx n$ and hence $r_0-r\approx n$. Since there are at most $n_b=2q+r_0+n\approx 2n$ nodes in $\mathcal B$, let $|\mathcal E_b|=n_bk_{\rm avg}$, where $k_{\rm avg}$ is the average node degree in $\mathcal B$. Thus, the complexity order is $O(n^{2.5})$ if we assume that $k_{\rm avg}\ll n$. Note that, being a worst-case scenario, this is still a very conservative estimate since usually $r_0\ll n$ (Fig.~\ref{fig.obsvorder}). This estimate should be contrasted with the complexity order of $O(n^4)$ in the worst-case scenario for the procedure in \cite{Fernando2010b}, which requires computation of the rank condition via SVD in approximately $n$ iterations (Supporting Information, Section~\ref{sec.graphalgorithm}).

\begin{algorithm}[H]
	\caption{\label{alg.findF0}\small Minimum-order functional observer design}
	\begin{flushleft}
			
		\textbf{input:} functionally observable triple $(A,C,F)$
		
		\textbf{output:} functional observer matrices $(F_0,N,J,H,D,E)$
		
		\smallskip
		
		initialize $F_0\leftarrow F$, $r_0 \leftarrow \rank(F_0)$, $\mathcal M_1\leftarrow\emptyset$, $\mathcal M_2\leftarrow\emptyset$;
		
		\textbf{do}
		
		\begin{itemize}
			\vspace{-6pt}
			
			\item[] update $G \leftarrow [C^\transp \,\,\, (CA)^\transp \,\,\, F_0^\transp]^\transp$;
			
			\item[] build a bipartite graph $\mathcal B(\mathcal V,\mathcal V',\mathcal E_{\mathcal V,\mathcal V'})$, where $\mathcal V=\{v_1,\ldots,v_{2q+r_0}\}$ is a set of nodes with each element corresponding to a row of $G$, $\mathcal V'=\{v'_1,\ldots,v'_n\}$ is the set of nodes with each element corresponding to a column of $G$, and $\mathcal E_{\mathcal V,\mathcal V'}$ is the set of undirected links $(v_i,v'_j)$ defined by the non-zero entries $G_{ij}$ of $G$;
			
			\item[] find the maximum matching set $\mathcal E_{m}$ associated with $\mathcal B(\mathcal V,\mathcal V',\mathcal E_{\mathcal V,\mathcal V'})$ (e.g., via the Hopcroft-Karp algorithm);
			
			\item[] for all $v'_i\in\mathcal V'$, if $v'_i$ and each of its second neighbors are connected to a link in $\mathcal E_m$, then update the set of right-matched nodes $\mathcal M_1\leftarrow\mathcal M_1\cup\{v'_i\}$;
			
			\item[] define the set of candidate nodes $\mathcal K = \mathcal M_2\backslash\mathcal M_1$, where $v'_j\in\mathcal M_2$ if $[F_0A]_{ij}$ is a non-zero entry for some $i$;
			
			\item[] draw  an element $v'_k\in\mathcal K$ 
			and update $F_0\leftarrow[F_0^\transp \,\,\, (F')^\transp]^\transp$ ($r_0\leftarrow r_0+1$), where $F'\in\R^{1\times n}$ and $[F']_{1j}=1$ if $j=k$, and $[F']_{1j}=0$
			otherwise; 			
			
		\end{itemize}
		
		\textbf{while} $\mathcal K \neq \emptyset$;
		
		compute the functional observer matrices $(N,J,H,D,E)$ for a triple $(A,C,F_0)$.
	\end{flushleft}
\end{algorithm}

\bigskip
\noindent\textbf{Optimal sensor placement for minimum-order functional observers.} Given a functionally observable triple $(A,C,F)$, one may be interested in how to optimally place additional sensor nodes in a network such that the functional observer order $r_0$ is minimized.
This is a bi-level optimization problem
\begin{equation}
\min_{
\mathcal S\subseteq\mathcal K\backslash\mathcal T, |\mathcal S| \leq q
} \ \min_{
F_0\in\{0,1\}^{n\times n}
}
J(\mathcal S, F_0), \,\, {\rm s.t. \ conditions} \,\, \eqref{eq.linearfuncobsv} \text{-} \eqref{eq.darouachcond1},
\label{eq.bilevelopt}
\end{equation}

\noindent
where $\mathcal K$ is a set of candidate nodes for sensor placement and $J(\mathcal S,F_0)$ is a cost function that returns the minimum order $r_0=\rank(F_0)$ of a functional observer. However, finding $F_0$ depends on $\mathcal S$ (which defines matrix $C$ in conditions \eqref{eq.linearfuncobsv}--\eqref{eq.darouachcond1}) and is ``embedded'' in a lower-level optimization task that requires, for instance, use of Algorithm \ref{alg.findF0}.  This is a hard-to-solve problem but, for illustration purposes, in Fig. \ref{fig.obsvorder}\textit{F} we implement a (non-scalable) greedy algorithm that recursively adds elements to $\mathcal S$ by letting $\mathcal S\leftarrow\mathcal S\cup\{\arg\max_{x_i}\Delta(x_i)|x_i\in\mathcal K\backslash\mathcal S\}$ until $|\mathcal S|=q$, where $\Delta(x_i) = J(\mathcal S\cup\{x_i\}) - J(\mathcal S), \forall x_i\in\mathcal K\backslash\mathcal S$, and $J(\mathcal S)$ is the order of $F_0$ returned by Algorithm \ref{alg.findF0} for this set $\mathcal S$.

\bigskip
\noindent\textbf{Generation of complex dynamical networks.} For the generation of the $N$-vertex complex networks used in Figs.~\ref{fig.msp}--\ref{fig.obsvestimation}, we explore the following parameters: $m\in\{1,2,\ldots,7\}$ for Barabási-Albert SF networks \cite{Barabasi1999}; $k=2$ and $p\in[0, 1]$ for Newman-Watts SW networks \cite{Newman1999}.
Parameter $m$ is the number of edges of each vertex iteratively added to the network, $k$ is the number of nearest neighbors in a ring graph, and $p$ is the probability of adding a new edge. 
For each of these undirected networks, a \textit{directed} model is generated by randomly assigning a single direction to each edge.

Since we are concerned with \textit{dynamical networks}, we consider that each vertex in an SW or SF network is a dynamical system of possibly more than one dimension. For illustration purposes, we assume that each vertex of a network generated from these models is a 3-dimensional subsystem with the following general structure:
\begin{equation}
A_{\rm v} = \begin{bmatrix} -1 & -1 & 0 \\ 1 & -1 & 0 \\ 1 & 0 & -1 \end{bmatrix}.
\label{eq.nodalstructure}
\end{equation}

\noindent
To include the effects of heterogeneity in the vertex dynamics of the dynamical networks generated, we let the dynamics of each subsystem be defined by $A_i = \lambda_iA_{\rm v}$ for $\lambda_i\sim\mathcal U[2,5]$, where $i=1,\dots,N$ and $\mathcal U[a,b]$ is a uniform distribution in the interval $[a,b]$.
Thus, the system matrix $A$ describing the whole dynamical network is given by $A = {\rm diag}(\lambda_1,\ldots,\lambda_N)\otimes A_{\rm v} - L\otimes M$, where $\otimes$ is the Kronecker product operator, $L$ is the Laplacian matrix of the generated network, and $M\in\{0,1\}^{3\times 3}$ is defined by $M_{ij} = 1$ if $i=j=2$ and $0$ otherwise. The term $L\otimes M$ means that the second state variable of all subsystems are diffusively coupled according to $L$. Note that $A$ has dimension $n=3N$, which is also the number of nodes in $\mathcal G(A)$.

The generalized clustering of the inference graph $\mathcal G(A)$, studied in Fig.~\ref{fig.obsvorder}\textit{D}, is defined as $\operatorname{C_{\rm g}} = \frac{1}{n}\sum_{i,j,k} {A'_{ij} A'_{jk} A'_{ki}}/{k_i^2}$, where $k_i$ is the node degree of node $x_i$ in $\mathcal G(A')$ and $A' = ({A+A^\transp})/{2}$. 
Note that the diagonal entries $a_{ii}$ are included to account for self-edges, which distinguishes $C_{\rm g}$ from the standard clustering coefficient, and that the computation is effectively done for the undirected counterpart of $\mathcal G(A)$.

For the results presented in Figs.~\ref{fig.obsvorder}--\ref{fig.obsvestimation}, 
the target and sensor nodes were chosen randomly among the first state variable of each subsystem (vertex) $A_i$, under the condition that
the sets of sensor nodes and target nodes are non-overlapping (i.e., $\mathcal S\cap\mathcal T = \emptyset$). That is, $F_{ij}$ or $C_{ij}$ is a non-zero entry only if $(j+2)/3$ is integer, and thus
%
$\mathcal S\cup\mathcal T$ can have at most $N$ state nodes.
This procedure is relaxed in Fig.~\ref{fig.msp}, where all state variables are candidates for sensor placement (i.e., $\mathcal K = \mathcal X$).

\bigskip
\noindent\textbf{Real-world networks datasets.} For the real-world networks used in Fig.~\ref{fig.msp}\textit{B}, we take several adjacency matrices $A_{\rm adj}$ available in different real-world datasets (described in Supporting Information, Section~\ref{sec.rwnetsdatabase}). For each real-world network, we define the system matrix $A$ as the Laplacian matrix of $A_{\rm adj}$ in order to model the energy/information flow in such systems as diffusive processes.

\bigskip
\noindent\textbf{Parameters of the IEEE-118 power-grid model.}  
%
\arthur{The IEEE-118 system is derived from a representation of the U.S. Midwest system with 118 buses, 54 generators, and 186 lines \cite{Zimmerman2011}. The network parameters $K_{ij}$ and generator's mechanical power $P_i$ are computed from the nominal AC power flow solution, the generator and load dynamical parameters $H_i$ and $D_i$ are estimated following the method provided in \cite{Nishikawa2015}, and the nominal frequency $w_{\rm R}$ is 60~Hz.}
The power flow equations were numerically solved using the MATPOWER toolbox \cite{Zimmerman2011}. The initial conditions were set, for all $i$, assuming that the power system is in a synchronized steady state $\dot{\phi}_i(0) = 0$, with $\phi_i(0)$ determined by the power flow solution.

\bigskip
\noindent\textbf{Parameters of the epidemic spreading model.} The parameters in \eqref{eq.sird} were set as $(\beta,\gamma,\eta)=(0.4,0.16,0.01)$ in order to mimic the coronavirus spreading in each group according to results reported in \cite{Blasius2020}, where here $\beta$ is the average value of $\beta_i$. We define the contact rate of each group as $\beta_i\sim\mathcal N(\beta,0.01)$ to account for group heterogeneity in the simulation. The air transportation network defined by $K$ describes the traffic of passengers according to the TranStats database for international and domestic flights (\url{http://www.transtats.bts.gov/}). Multiple airports belonging to the same city are combined into a single group (vertex) $i$ with a population $P_i$ given by the corresponding city's population. In Fig.~\ref{fig.epidemics}, the epidemic peak time $t_{\rm p}$ is determined by numerically integrating \eqref{eq.sird} with the initial conditions $I_j(0)=10^3$ and $S_j(0) = P_j-10^3$, if $j$ is the index for Miami, FL, and $I_j(0) = 0$ and $S_j(0) = P_j$, otherwise. The predicted peak time $\hat t_{\rm p}$ is determined based on the free-run simulations and functional observer estimations, both initialized with a false guess of the outbreak in a randomly selected city $k$. That is, $I_j(0) = 1$ and $S_j(0) = P_j-1$, if $j=k$, and $I_j(0) = 0$ and $S_j(0) = P_j$, otherwise.

\bigskip
\noindent\textbf{Data Availability.}
Codes and data have been deposited in GitHub (\url{https://github.com/montanariarthur/FunctionalObservability}).

\begin{acknowledgments}
The authors acknowledge support from Brazil's Coordena\c{c}\~ao de Aperfei\c{c}oamento de Pessoal de N\'ivel Superior (finance code 001) (A.N.M.) and Conselho Nacional de Desenvolvimento Cient\'ifico e Tecnol\'ogico (Grant 03412/2019-4) (L.A.A.), and U.S. Army Research Office (Grant W911NF-19-1-0383) (C.D. and A.E.M.).
\end{acknowledgments} 


%


\balancecolsandclearpage
\clearpage

\begin{center}
    \Large\textbf{Supporting Information}
\end{center}

\bigskip
\noindent\textbf{Notation}
\label{sec.nomenclature}
Throughout the main and supplementary text, we adopt the following nomenclature and notation. Column vectors are represented by bold lower-case letters (e.g., $\bm x, \bm y, \bm z$), matrices are represented by capital letters (e.g., $X, Y, Z$), and sets are represented by calligraphed capital letters (e.g., $\mathcal X, \mathcal Y, \mathcal Z$). 
The exceptions to this notation are the observability and the controllability matrices, which are denoted by $\mathcal O$ and $\mathcal C$, respectively, to avoid notation conflict with the complexity of order $O(\cdot)$ and the output matrix $C$.
$|\mathcal V|$ is the cardinality (number of elements) of a set $\mathcal V$.
$I_n$ denotes the identity matrix of size $n$ and $\it 0_{m\times n}$ is an $m\times n$ null matrix (the subscript is often omitted when self-evident from the context). 
$A^\perp$ is the orthogonal complement of the row space of $A$.
$A^\dagger$ is the Moore-Penrose inverse of $A$ (i.e., $AA^\dagger = I_p$).
${\rm row}(A)$ is the row space of $A$.
${\mathcal A}\subseteq {\mathcal B}$ indicates that ${\mathcal A}$ is a subset of ${\mathcal B}$, and ${\mathcal A}\subset {\mathcal B}$ further indicates a proper subset.
${\rm spec}(A)$ is the spectrum of matrix $A$. 
$\operatorname{diag}(A_1,\ldots,A_n)$ is a block-diagonal matrix formed by matrices $A_1,\ldots,A_n$. 
$[A]_{ij}$ and $A_{ij}$ are used to denote the $(i,j)$-th entry of matrix $A$. 
The operator $\otimes$ indicates the Kronecker product. 
$x\sim N(\mu,\sigma)$ is a realization of a random variable drawn from a Gaussian distribution with mean $\mu$ and standard deviation $\sigma$. 
$x\sim\mathcal U[a,b]$ is a realization of a random variable drawn from a uniform distribution in the interval $[a,b]$.

\section{Background on structural observability}
\label{sec.background}

We first recall Lin's definition for structural controllability \cite{Lin1974a} and observability \cite{Willems1986}. Let a matrix $A\in\{0,\star\}^{n\times n}$ be called a structured matrix if $A_{ij}$ is either a fixed zero entry or an independent free parameter, denoted by a $\star$. A matrix $\tilde A$ is a numerical realization of $A$ if real numbers are assigned to all free parameters of $A$. 

\begin{defin}
	The structured pair $(A,C)$ (or pair $(A,B)$)
	is structurally observable (controllable) if and only if there exists some numerical realization $(\tilde A,\tilde C)$ (or $(\tilde A,\tilde B)$) that is observable (controllable).
	\label{def.obs.linsdefinition}
\end{defin}


\begin{remark}
	Note that $\rank(\tilde A)\leq\rank(A)$. This upper bound is also known as structural or generic rank.
	\label{remark.structuralrank}
\end{remark}

The definition of structural observability  has a strong graph-theoretic interpretation. In order to interpret the results below, we present additional notation and definitions.

The corresponding \textit{inference graph} of a dynamical system $(A,B,C)$ is denoted by $\mathcal G(A,B,C)=\{\mathcal V,\mathcal E \}$, where $\mathcal V = \mathcal X\cup\mathcal U\cup\mathcal S$ and $\mathcal E = \mathcal E_{\mathcal X}\cup\mathcal E_{\mathcal U}\cup\mathcal E_{\mathcal S}$. The node sets are the state variables $\mathcal X = \{x_1,\ldots, x_n\}$, the input variables $\mathcal U = \{u_1,\ldots,u_p\}$, and the output variables $\mathcal S = \{y_1,\ldots, y_q\}$. A link $(x_i,x_j)$ (directed arrow from $x_j$ to $x_i$) is an element of $\mathcal E_{\mathcal X}$ if $A_{ij}$ is a free parameter entry of the structured matrix $A$, a link $(x_i,u_j)$ is an element of $\mathcal E_{\mathcal U}$ if $B_{ij}$ is a free parameter entry of the structured matrix $B$, and a link $(y_i,x_j)$ is an element of $\mathcal E_{\mathcal S}$ if $C_{ij}$ is a free parameter entry of the structured matrix $C$. For brevity, when studying the observability property of a pair $(A,C)$ we refer to the corresponding inference graph simply as $\mathcal G(A,C)=\{\mathcal X\cup \mathcal S, \mathcal E_{\mathcal X}\cup\mathcal E_{\mathcal S}\}$. Likewise, we use $\mathcal G(A,B)=\{\mathcal X\cup \mathcal U, \mathcal E_{\mathcal X}\cup\mathcal E_{\mathcal U}\}$  when considering the controllability property of a pair $(A,B)$.

A subset of nodes $\mathcal V'\subseteq\mathcal X$ has a \textit{dilation}\footnote{Note that this definition of dilation is dual to Lin's definition \cite{Lin1974a}, as here we focus on structural observability rather than controllability.} in an inference graph $\mathcal G(A,C)$ if and only if $|T(\mathcal V')| < |\mathcal V'|$, where $T(\mathcal V')$ is the set of all nodes $v_i\in\mathcal X\cup\mathcal S$ with the property that there is a direct link from a node in $\mathcal V'$ to $v_i$.
Let $\mathcal D_k$ be a \textit{minimal dilation set} of $\mathcal G(A,C)$, that is, a set with the property that, for all $\mathcal D_k'\subset \mathcal D_k$, the subset $\mathcal D_k'$ has no dilations. Let $\mathcal D = \bigcup_k \mathcal D_k$ be the union of all minimal dilation sets of $\mathcal G(A,C)$.

\begin{theor}
	\cite{Lin1974a}  The pair $(A,C)$ is structurally observable if and only if the corresponding inference graph $\mathcal G(A,C)$ satisfies the following conditions:
	\begin{enumerate}
		\item every state variable $x_i\in\mathcal X$ has a path to some output variable $y_i\in\mathcal S$;
		\item $\mathcal G(A,C)$ has no dilations.
	\end{enumerate}
	\label{theor.structuralobsv}
\end{theor}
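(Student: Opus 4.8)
The starting point is that $(A,C)$ is structurally observable if and only if the generic rank of the observability matrix $\mathcal O$ in \eqref{eq.obsvmatrix} equals $n$: by Definition~\ref{def.obs.linsdefinition}, structural observability asks that \emph{some} realization $(\tilde A,\tilde C)$ have $\rank(\mathcal O)=n$, and since the generic rank is the maximum of $\rank(\mathcal O)$ over all realizations (Remark~\ref{remark.structuralrank}), this maximum equals $n$ exactly when an observable realization exists; the realizations attaining it then form the complement of the common zero set of the $n\times n$ minors of $\mathcal O$, hence an open dense set (the ``probability~$1$'' statement). Equivalently one may transpose and argue with $(A^\transp,C^\transp)$, turning the claim into Lin's structural controllability criterion \cite{Lin1974a}; I argue directly, the steps being identical up to reversing arrows.

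\textbf{Necessity.} Suppose condition~1 fails, so some $x_k\in\mathcal X$ has no path to $\mathcal S$. Let $\mathcal W\subseteq\mathcal X$ be the set of state nodes with no path to $\mathcal S$; then $\mathcal W\neq\emptyset$, and no link leaves $\mathcal W$ for $(\mathcal X\setminus\mathcal W)\cup\mathcal S$ (else that endpoint would itself reach $\mathcal S$). Ordering the states with $\mathcal X\setminus\mathcal W$ first forces, for \emph{every} realization, $\tilde A=\bigl[\begin{smallmatrix}\tilde A_{11}&0\\ \tilde A_{21}&\tilde A_{22}\end{smallmatrix}\bigr]$ and $\tilde C=[\,\tilde C_1\ \ 0\,]$, whence $\tilde C\tilde A^{j}=[\,\tilde C_1\tilde A_{11}^{j}\ \ 0\,]$ and the $|\mathcal W|$ columns of $\mathcal O$ indexed by $\mathcal W$ vanish; thus $\rank(\mathcal O)\le n-|\mathcal W|<n$ always. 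Next suppose condition~2 fails: some $\mathcal V'\subseteq\mathcal X$ has $|T(\mathcal V')|<|\mathcal V'|$. The $|\mathcal V'|$ columns of $\bigl[\begin{smallmatrix}A\\ C\end{smallmatrix}\bigr]$ indexed by $\mathcal V'$ have all their nonzero entries in the rows indexed by $T(\mathcal V')$, so for every realization they are linearly dependent; picking $v\neq0$ with $\tilde Av=0$ and $\tilde Cv=0$ gives $\tilde C\tilde A^{j}v=0$ for all $j\ge0$, hence $\mathcal Ov=0$ and $\rank(\mathcal O)<n$. In either case $(A,C)$ is not structurally observable.

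\textbf{Sufficiency.} Assume conditions~1 and~2; it suffices to exhibit one observable realization. \emph{(i)~Combinatorial step.} Apply Hall's marriage theorem to the bipartite graph whose left vertices are the $n$ columns of $\bigl[\begin{smallmatrix}A\\ C\end{smallmatrix}\bigr]$, whose right vertices are its $n+q$ rows, and whose edges are its free entries: condition~2 says exactly that every set $\mathcal V'$ of left vertices has neighborhood $T(\mathcal V')$ of size $\ge|\mathcal V'|$, so a matching saturating all $n$ columns exists. Such a matching assigns to each state node a distinct successor (another state node or an output), decomposing $\mathcal X$ into vertex-disjoint paths and cycles; a standard argument (Lin's cactus construction, here dualized) then uses condition~1 to route and extend these into a spanning cactus, i.e.\ vertex-disjoint cycles and paths covering $\mathcal X$ with every path ending at an output and every cycle grafted by one link onto an already-reached node. \emph{(ii)~Genericity step.} Give generic nonzero values to the free parameters lying on the cactus and $0$ to all others. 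Each state node now has a unique cactus successor; following these yields, for each output, a family of disjoint cactus paths, and selecting from the blocks $C,CA,CA^{2},\dots$ of $\mathcal O$ the rows matching these paths produces an $n\times n$ submatrix $M$ whose Leibniz expansion contains the monomial equal to the product of all cactus-link parameters with coefficient $\pm1$; no other term can reproduce this monomial (it would have to use a link set to $0$ or fail to be a permutation). Hence $\det M\not\equiv0$, so $\rank(\mathcal O)=n$ for this realization and $(A,C)$ is structurally observable.

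\textbf{Main obstacle.} Necessity is elementary linear algebra, and recasting ``no dilation'' as a saturating matching is just Hall's theorem; the real work is the sufficiency direction. Within it, the genericity step --- showing the distinguished monomial of $\det M$ survives, so that a spanning cactus genuinely certifies full generic rank of $\mathcal O$ --- is the crux, together with the combinatorial bookkeeping in step~(i) that turns an arbitrary cycle--path cover into a cactus properly rooted at the output set. This is exactly the content of Lin's original analysis \cite{Lin1974a}, transported here to the observability setting by duality.
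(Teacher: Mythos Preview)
The paper does not supply its own proof of this theorem; it is stated in Section~\ref{sec.background} purely as background and attributed to Lin~\cite{Lin1974a}, so there is no in-paper argument to compare against. Your proposal is a reasonable sketch of the classical proof (dualized from controllability to observability), and your necessity arguments are correct and complete as written.

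For sufficiency, your outline is the right one, and you correctly flag that the substantive work is the cactus construction and the monomial-survival argument. Two places where the sketch is loose: first, the passage from ``matching $\Rightarrow$ disjoint paths and cycles'' to ``spanning cactus rooted at $\mathcal S$'' needs condition~1 not just to attach each cycle to some already-reached node, but to do so in a way that preserves vertex-disjointness of the stems and assigns each bud a unique distinguished link---this is exactly the inductive reorganization in Lin's paper and is where most proofs spend their effort. Second, in step~(ii) the claim that the distinguished monomial of $\det M$ is not cancelled requires specifying \emph{which} row of $\mathcal O$ you select for each state (namely, the row of $CA^{d}$ for the output at the end of that state's stem, with $d$ the cactus distance), and then arguing that any other permutation contributing the same monomial would force a second path of the same length between the same endpoints using only cactus links, which the cactus structure forbids. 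None of this is wrong in your write-up, but it is where a full proof would need to be explicit.
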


\begin{remark}
	An inference graph $\mathcal G(A,C)$ has a dilation if there exists a set of $k$ columns of $[A^\transp \,\,\, C^\transp]^\transp$ which contains nonzero entries in less than $k$ rows of the submatrix formed by these $k$ columns. In fact, $\mathcal G(A,C)$ has a dilation if and only if $\rank [A^\transp \,\,\, C^\transp]^\transp < n$ \cite{Brisk1984}.
	\label{remark.dilationrankdeff}
\end{remark}

\section{Structural functional observability}
\label{sec.structfuncobsv}

We generalize the concept of structural observability to structural functional observability. 
For all subsequent
definitions and theorems, consider that the following assumptions hold.

\begin{assump}
	Let $F=[F_1^\transp \,\, \ldots \,\, F_r^\transp]^\transp$, where each row $F_i\in\R^{1\times n}$ has $n-1$ zero entries and only a single nonzero entry (a free parameter entry). Let $\mathcal T\subseteq\mathcal X$ be a set of state variables that we wish to estimate, also referred to as ``target nodes,'' where $x_j\in\mathcal T$ if and only if the $j$-th entry of some row $F_i$ is a nonzero entry. In other words, state variables are targeted (i.e., sought to be estimated) \textit{independently}. 
	\label{assump.independentF}
\end{assump}

\begin{assump}
	Let $\rank[C^\transp \,\,\, F^\transp]^\transp=q+r$, where $\rank(C)=q$ and $\rank(F)=r$.
	\label{assump.independentCF}
\end{assump}

\begin{remark}
	 Assumption \ref{assump.independentCF} involves no loss of generality since the state of any node in the $\operatorname{row}([C^\transp \,\,\, F^\transp]^\transp)$ can be determined as a linear combination of outputs and estimated target nodes, hence not requiring estimation.  
\end{remark}

\begin{defin}
	The structured triple $(A,C,F)$ is structurally functionally observable if and only if there exists some numerical realization $(\tilde A,\tilde C, \tilde F)$ that is functionally observable.
	\label{def.structuralfuncobsvdefin}
\end{defin}

\begin{theor}
	A triple $(A,C,F)$ is structurally functionally observable if and only if the corresponding inference graph $\mathcal G(A,C)$ satisfies the following conditions:
	\begin{enumerate}
		\item every state variable $x_i\in\mathcal T$ has a path to some output variable $y_i\in\mathcal S$;
		\item $\mathcal T\cap\mathcal D=\emptyset$, where $\mathcal D$ is the union of all minimal dilation sets of $\mathcal G(A,C)$.
	\end{enumerate}
	\label{theor.structuralfuncobsvdefin}
\end{theor}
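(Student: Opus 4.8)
\medskip
\noindent\textbf{Proof strategy.}
The plan is to peel off the functional-observability rank condition and reduce everything to a per-target-node statement about the dual pair $(A^{\transp},C^{\transp})$. By Assumption~\ref{assump.independentF}, $\operatorname{row}(F)=\operatorname{span}\{e_j:x_j\in\mathcal T\}$, where $e_j$ is the $j$-th canonical basis vector of $\R^n$; hence \eqref{eq.linearfuncobsv} holds for a given numerical realization if and only if $e_j^{\transp}\in\operatorname{row}(\mathcal O)$ for every target $x_j$, i.e.\ every vector of the unobservable subspace $\ker\mathcal O$ has a vanishing $j$-th entry. Since $\operatorname{row}(\mathcal O)=\operatorname{col}(\mathcal O^{\transp})$ is exactly the reachable subspace of $(A^{\transp},C^{\transp})$, ``$x_j$ is functionally observable'' is the same as ``$x_j$ is reachable in the dual system.'' So the theorem will follow once we establish, node by node, that in a generic numerical realization $e_j^{\transp}\in\operatorname{row}(\mathcal O)$ if and only if (i) $x_j$ has a path to some sensor node in $\mathcal G(A,C)$ and (ii) $x_j$ lies in no minimal dilation set of $\mathcal G(A,C)$; the equivalence between Definition~\ref{def.structuralfuncobsvdefin} and this generic statement is the usual genericity bookkeeping of structured systems theory, under Assumptions~\ref{assump.independentF}--\ref{assump.independentCF} and Remark~\ref{remark.structuralrank}.

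For the necessity of (i) and (ii) I would show that if either fails for a target $x_j$, then \eqref{eq.linearfuncobsv} fails. If $x_j$ has no path to any sensor node, let $\mathcal W\subseteq\mathcal X$ collect all state nodes with no path to $\mathcal S$; reading the graph off the matrices one checks that $\operatorname{span}\{e_k:x_k\in\mathcal W\}$ is $A$-invariant (an out-neighbour of a node with no path to $\mathcal S$ again has no path to $\mathcal S$) and lies in $\ker C$ (such a node is not a sensor node), hence is contained in $\ker\mathcal O$; as $x_j\in\mathcal W$ and $e_j\neq0$, this gives $e_j^{\transp}\notin\operatorname{row}(\mathcal O)$ for \emph{every} realization. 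If instead $x_j$ belongs to a minimal dilation set $\mathcal D_k$, minimality forces every proper subset of $\mathcal D_k$ to satisfy Hall's condition, so $|T(\mathcal D_k)|=|\mathcal D_k|-1$; consequently the submatrix of $[A^{\transp}\ C^{\transp}]^{\transp}$ formed by the columns of $\mathcal D_k$ has all of its nonzero rows among the $|\mathcal D_k|-1$ rows indexed by $T(\mathcal D_k)$, hence a nontrivial kernel, and---because each proper subset of $\mathcal D_k$ has full structural rank $|\mathcal D_k|-1$ (Remark~\ref{remark.dilationrankdeff})---a generic kernel vector has all $|\mathcal D_k|$ of its coordinates nonzero. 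Padding it with zeros produces $\tilde w\in\ker A\cap\ker C\subseteq\ker\mathcal O$ with $\tilde w_j\neq0$, so again $e_j^{\transp}\notin\operatorname{row}(\mathcal O)$. In both cases no realization is functionally observable, so $(A,C,F)$ is not structurally functionally observable.

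For sufficiency, assume every target $x_j$ is output-connected and lies outside $\mathcal D$; the goal is a realization (equivalently, the generic realization) with $e_j^{\transp}\in\operatorname{row}(\mathcal O)$ for all $x_j\in\mathcal T$, i.e.\ one whose dual $(A^{\transp},C^{\transp})$ has every target state in its reachable subspace. The plan is to transpose the whole picture: reversing all links of $\mathcal G(A,C)$ and promoting the sensor nodes to input nodes yields $\mathcal G(A^{\transp},C^{\transp})$; condition (i) becomes ``there is a path from an input node to $x_j$,'' and---since by Remark~\ref{remark.dilationrankdeff} dilations of $\mathcal G(A,C)$ are precisely rank deficiencies of $[A^{\transp}\ C^{\transp}]$, hence dilations of $\mathcal G(A^{\transp},C^{\transp})$ in the controllability sense---condition (ii) becomes ``$x_j$ lies in no minimal controllability dilation set of the dual graph.'' It then remains to prove a single-state refinement of Lin's Theorem~\ref{theor.structuralobsv}: a state that lies on a path from an input and belongs to no minimal dilation set is generically reachable. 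I would prove this by a maximum-matching argument on the bipartite graph of $[A^{\transp}\ C^{\transp}]$---selecting a maximum matching that saturates the column $x_j$ (possible exactly because $x_j\notin\mathcal D$) and splicing it with the input-connecting path to build, in the manner of Lin's cactus construction, a realization whose dual reachable subspace contains $e_j$; the delicate point is to promote reachability modulo $\ker[A^{\transp}\ C^{\transp}]$ to genuine membership in the reachable subspace, which is handled by following the path and exploiting that the coupling coefficients are free parameters.

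The main obstacle is precisely this last lemma: showing that ``no target lies in any minimal dilation set'' is the exact structural condition allowing each target column to be matched compatibly with the output/input-connecting path. All the remaining ingredients---the reduction to a per-node statement, the two necessity arguments, and the transposition to the dual graph---are routine given Theorem~\ref{theor.structuralobsv}, Remark~\ref{remark.dilationrankdeff}, and the standard genericity machinery for structured systems.
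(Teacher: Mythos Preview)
Your necessity arguments are sound and, in fact, closely parallel the paper's. Your argument for condition~1 (the $A$-invariant coordinate subspace supported on the non-output-connected nodes lies in $\ker\mathcal O$) is the $\ker\mathcal O$ translation of the paper's Lemma~\ref{lemma.somelambda}. Your argument for condition~2 (a minimal dilation set $\mathcal D_k$ of size $m$ has $|T(\mathcal D_k)|=m-1$, and the generic kernel vector of the corresponding $m$ columns has no zero coordinate) is a clean kernel-side reformulation of the paper's Lemma~\ref{lemma.lambda2}; note that both arguments are intrinsically generic, which is enough since structural functional observability is a generic property.

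The gap is in sufficiency. Your ``single-state refinement of Lin's theorem''---that an input-connected node outside every minimal dilation set lies generically in the reachable subspace of the dual pair---\emph{is} the theorem, so this step carries the entire weight of the proof. The cactus/matching construction you sketch does not deliver it: covering $x_j$ by a stem only shows that $x_j$ participates in the reachable dynamics, not that the coordinate vector $e_j$ itself lies in the reachable subspace, which is generally not coordinate-aligned. (In your own 4-node style example, $x_1$ sits on the stem $u_1\to x_3\to x_2\to x_1$, yet $e_1$ is generically \emph{not} reachable because the branch through $x_4$ tilts the reachable plane.) The ``delicate point'' you flag---lifting from reachability modulo $\ker A\cap\ker C$ to genuine membership---is precisely the heart of the matter, and nothing in a matching saturating column $x_j$ controls the nonzero eigenvalues of the unobservable part.

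The paper resolves this by abandoning the Kalman/observability-matrix route and working instead with the Hautus-type equivalent~\eqref{eq.linearfuncobsv2}, which decouples the two graph conditions along the spectrum. For $\lambda=0$ the question reduces to the static rank of $\begin{bmatrix}A\\C\end{bmatrix}$, and there your matching intuition is exactly right: since $\mathcal X\setminus\mathcal D$ contains no dilation, its columns have full structural rank, so $e_j^\transp\in\operatorname{row}\begin{bmatrix}A\\C\end{bmatrix}$ for every $x_j\notin\mathcal D$ (Lemma~\ref{lemma.lambda5}). For $\lambda\neq 0$ the paper invokes van~der~Woude's structural matrix-pencil rank (Lemma~\ref{lemma.matrixpencil}): removing row $i$ and viewing $x_i$ as an input, a path from $x_i$ to $\mathcal S$ forces the truncated pencil to have generic rank $n$, and a polynomial argument over all $i$ (Lemma~\ref{lemma.lambda4}) shows that the only eigenvalue at which the rank can drop, independently of the free parameters, is $\lambda=0$. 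This spectral decoupling is the missing idea in your plan; once you adopt~\eqref{eq.linearfuncobsv2}, your matching argument handles $\lambda=0$ and the path condition handles the rest.
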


\begin{proof} 
	This proof follows from the five following lemmas.
	Lemma \ref{lemma.somelambda} and \ref{lemma.lambda2} establish the necessity of conditions 1 and 2, respectively. Lemmas \ref {lemma.lambda4} and \ref{lemma.lambda5}, together, establish the sufficiency of conditions 1 and 2. The proof of Lemma \ref {lemma.lambda4} is based on borrowing the basic idea from \cite[Theorem C.3]{Sundaram2012} and extending it to functional observability. Lemma \ref{lemma.matrixpencil} is used in the proof of Lemma \ref{lemma.lambda4}.
	
	We remind the reader that an equivalent condition to \eqref{eq.linearfuncobsv}
	for functional observability is \cite{Jennings2011}  
	\begin{equation}
		\rank\begin{bmatrix} A - \lambda I \\ C \\ F \end{bmatrix}
		=
		\rank\begin{bmatrix} A - \lambda I \\ C	\end{bmatrix}
		\label{eq.linearfuncobsv2}
	\end{equation}
	\noindent for all $\lambda\in \mathbb{C}$. Since condition \eqref{eq.linearfuncobsv2} holds trivially for 
	$\lambda\notin{\rm spec}(A)$, we only need to consider the cases in which $\lambda \in{\rm spec}(A)$ to establish the equality for a given triple $(A,C,F)$.
\end{proof}

\begin{lemma}[Necessity of condition 1]
	If there exists at least one state variable $x_i\in\mathcal T$ that does not have a path to some output variable $y_i\in\mathcal S$, then, for any independent choice of free parameters in the system matrices $(A,C,F)$, there is at least one $\lambda\in\C$ such that $\rank\begin{bmatrix} A^\transp-\lambda I & C^\transp & F^\transp\end{bmatrix}^\transp>\rank\begin{bmatrix} A^\transp-\lambda I & C^\transp\end{bmatrix}^\transp$.
	\label{lemma.somelambda}
\end{lemma}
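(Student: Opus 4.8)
The plan is to exhibit a spectral value $\lambda$ at which the row space of $F$ sticks out of the row space of $\bigl[\begin{smallmatrix} A-\lambda I \\ C\end{smallmatrix}\bigr]$, using the structural (graph-theoretic) information that some target node $x_i$ has no directed path to any sensor. First I would pass to the transposed, controllability-type picture: $x_i$ has no path to any output node in $\mathcal{G}(A,C)$ iff, in the graph of $(A^\transp, C^\transp)$, the node $x_i$ is not reachable \emph{from} the sensor set. Collect all nodes $x_j$ that cannot reach $\mathcal{S}$ into a set $\mathcal{U}\subseteq\mathcal{X}$ with $x_i\in\mathcal{U}$; by construction there are no links from $\mathcal{X}\setminus\mathcal{U}$ into $\mathcal{U}$ and no links from $\mathcal{U}$ to $\mathcal{S}$. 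Reordering the state variables so that the nodes of $\mathcal{U}$ come last, $A$ becomes block upper-triangular, $A = \bigl[\begin{smallmatrix} A_{11} & A_{12} \\ 0 & A_{22}\end{smallmatrix}\bigr]$, and $C = [\,C_1 \ \ 0\,]$, where the zero block reflects that sensors do not read any node in $\mathcal{U}$ and the zero block in $A$ reflects that no $\mathcal{U}$-node feeds an $(\mathcal{X}\setminus\mathcal{U})$-node.

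Next I would choose $\lambda$ to be an eigenvalue of the sub-block $A_{22}$ that is \emph{not} an eigenvalue of $A_{11}$; for generic (independent) choices of the free parameters this is possible — indeed for generic parameter values the spectra of $A_{11}$ and $A_{22}$ are disjoint, so any eigenvalue of $A_{22}$ works, and $A_{22}$ has at least one eigenvalue since $\mathcal{U}\neq\emptyset$. With this $\lambda$ the matrix $\bigl[\begin{smallmatrix} A-\lambda I \\ C\end{smallmatrix}\bigr]$ has the block form
\begin{equation}
\begin{bmatrix} A_{11}-\lambda I & A_{12} \\ 0 & A_{22}-\lambda I \\ C_1 & 0 \end{bmatrix},
\end{equation}
and since $A_{11}-\lambda I$ is invertible, the PBH-type pencil $\bigl[\begin{smallmatrix} A_{11}-\lambda I \\ C_1\end{smallmatrix}\bigr]$ already has full column rank on the first block of columns; hence the column rank of the whole pencil is exactly $n$ minus the number of columns on which the last block $\bigl[\begin{smallmatrix} A_{22}-\lambda I \\ 0\end{smallmatrix}\bigr]$ fails to be full rank. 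Concretely, the rank of the $(A,C)$ pencil equals $\dim(\mathcal{X}\setminus\mathcal{U}) + \operatorname{rank}(A_{22}-\lambda I) < n$, the deficiency being the geometric multiplicity $m\ge 1$ of $\lambda$ in $A_{22}$. Now append $F = [\,F_1 \ \ F_2\,]$; since $x_i\in\mathcal{T}\cap\mathcal{U}$, the row of $F$ corresponding to $x_i$ has its single nonzero entry inside the $\mathcal{U}$-block, i.e.\ $F_2$ is a nonzero structured matrix. Generically the added row is not in the span of the existing rows restricted to the deficient directions — one shows that a generic nonzero entry of $F_2$ hitting a coordinate of $\ker(A_{22}-\lambda I)^\transp$ strictly increases the rank — so
\begin{equation}
\operatorname{rank}\begin{bmatrix} A-\lambda I \\ C \\ F \end{bmatrix} > \operatorname{rank}\begin{bmatrix} A-\lambda I \\ C \end{bmatrix},
\end{equation}
which is the claimed strict inequality.

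The main obstacle I expect is the last genericity step: arguing that for \emph{independent} free parameters (not just "some" realization but the relevant generic set) the single nonzero entry contributed by $F$ in the $\mathcal{U}$-block genuinely lies outside the column/row space forced by the pencil at $\lambda$. The clean way to handle this is to note that the condition "$\operatorname{rank}[\,\ldots,F^\transp\,]^\transp = \operatorname{rank}[\,\ldots\,]^\transp$" is itself a nontrivial polynomial equation in the free parameters (a minor identity), so either it fails for all generic realizations or it holds identically; exhibiting one realization where it fails — built by choosing $A_{22}$ with a simple eigenvalue $\lambda\notin\operatorname{spec}(A_{11})$, a left eigenvector $v$ with $v^\transp A_{12}=0$ impossible in general, so rather one picks the $F$-entry on a coordinate where the corresponding left null vector of the pencil is nonzero — then forces the identity to fail generically. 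I would isolate this as a short sublemma (a rank/minor argument analogous to Remark on dilations and generic rank) rather than redo it inline, and lean on the standard fact that the set of realizations achieving the generic rank is Zariski-open and dense.
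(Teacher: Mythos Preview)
Your decomposition has the block structure backwards, and this is not a harmless typo --- it breaks the rank computation. If $\mathcal U$ is the set of state nodes with no directed path to any sensor, then what is forced to vanish is the set of links \emph{from} $\mathcal U$ \emph{to} $\mathcal X\setminus\mathcal U$ (a node in $\mathcal U$ cannot point to a node that reaches $\mathcal S$, otherwise it would reach $\mathcal S$ itself). Links from $\mathcal X\setminus\mathcal U$ into $\mathcal U$ are perfectly allowed: a node that reaches $\mathcal S$ through one branch may simultaneously point into $\mathcal U$. So with $\mathcal U$ ordered last the system matrix is \emph{lower} block-triangular,
\[
A=\begin{bmatrix} A_{11} & 0 \\ A_{21} & A_{22}\end{bmatrix},\qquad C=[\,C_1\ \ 0\,],
\]
exactly as in the paper. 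With your upper-triangular form the $A_{12}$ block survives, and after reducing you pick up an extra contribution $-C_1(A_{11}-\lambda I)^{-1}A_{12}$ in the $\mathcal U$-columns; then the claimed identity ``rank $=\dim(\mathcal X\setminus\mathcal U)+\operatorname{rank}(A_{22}-\lambda I)$'' is no longer justified. With the correct lower-triangular structure the second column block of the pencil is simply $[\,0;\ A_{22}-\lambda I;\ 0\,]$, and the reduction you want goes through without needing $\lambda\notin\operatorname{spec}(A_{11})$ at all.

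Once the block form is fixed, your strategy of committing to a single eigenvalue $\lambda$ of $A_{22}$ and then arguing genericity is workable but unnecessarily delicate: for a \emph{fixed} $\lambda$, whether the unit row of $F_2$ lies in $\operatorname{row}(A_{22}-\lambda I)$ depends on whether the corresponding right eigenvector has a nonzero entry at the target coordinate, and that can fail structurally for some $\lambda$ (think of $A_{22}$ itself being block-triangular). The paper avoids this by arguing the contrapositive over \emph{all} eigenvalues at once: if the rank equality held for every $\lambda$, then $\operatorname{row}(F_2)\subseteq\operatorname{row}(A_{22}-\lambda I)$ for every $\lambda\in\operatorname{spec}(A_{22})$; intersecting these row spaces and using that a generic $A_{22}$ has eigenvectors spanning the whole space forces $F_2=0$, contradicting $x_i\in\mathcal T\cap\mathcal U$. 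This replaces your ``exhibit one realization and invoke Zariski openness'' program by a two-line eigenspace argument, and it also makes the extra hypothesis $\operatorname{spec}(A_{11})\cap\operatorname{spec}(A_{22})=\emptyset$ unnecessary.
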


\begin{proof}
	Suppose that some nodes in $\mathcal X$ do not have a path to some output node in $\mathcal S$. Let $\mathcal X_1$ be the set of state variables that have a path to some output variable, and $\mathcal X_2 = \mathcal X\backslash \mathcal X_1$ denote all state variables that do not have a path to any output variable. Let $|\mathcal X_1|=k$ and $|\mathcal X_2|=n-k$. After applying a permutation of coordinates such that the nodes in $\mathcal X_1$ appear first, the matrices $(A,C,F)$ have the form
	\begin{equation}
	A = \begin{bmatrix} A_{11} & \it 0 \\ A_{21} & A_{22}\end{bmatrix},
	C = \begin{bmatrix} C_1 & \it 0 \end{bmatrix},
	F = \begin{bmatrix} F_1 & F_2 \end{bmatrix},
	\end{equation}
	
	\noindent
	where $A_{11}\in\R^{k\times k}$, $A_{21}\in\R^{(n-k)\times k}$, $A_{22}\in\R^{(n-k)\times (n-k)}$, $C_1\in\R^{q\times k}$, $F_1\in\R^{r\times k}$, $F_2\in\R^{r\times (n-k)}$, and $\it 0$ indicates a null vector or matrix of appropriate dimension. Note that there are no links from a node in $\mathcal X_2$ to $\mathcal X_1\cup\mathcal S$. Thus, we have the following matrix pencil:
	\begin{equation}
	\begin{bmatrix}
	A-\lambda I \\ C \\ F
	\end{bmatrix}
	=
	\begin{bmatrix}
	A_{11}-\lambda I & \it 0 \\
	A_{21}		     & A_{22}-\lambda I \\
	C_1				 & \it 0 \\
	F_1				 & F_2
	\end{bmatrix}
	.
	\label{eq.matrixpencil}
	\end{equation}
	Now assume that \eqref{eq.linearfuncobsv2} is satisfied. From \eqref{eq.matrixpencil}, we have $\text{row}(F_2) \subseteq \text{row}(A_{22}-\lambda I)$ for all $\lambda \in \text{spec}(A_{22})$, which implies
		\begin{equation}
			\text{row}(F_2) \subseteq \underset{{\lambda \in \text{spec}(A_{22})}}{\bigcap} U_{\lambda}^\perp = \left(
			\underset{\lambda \in \text{spec}(A_{22})}{\bigoplus} U_{\lambda}\right)^\perp = \emptyset,
		\end{equation}
		 where $\oplus$ is the direct sum operator, $U_{\lambda}$ is the left eigenspace of $A_{22}$ corresponding to the eigenvalue $\lambda$, and the second equality comes from the fact that, for an independent numerical realization of the free parameters, $A$ has a complete set of eigenvectors. This shows that $F_2$ contains only all-zero rows, which contradicts the assumption that $\mathcal T\cap\mathcal X_2\neq\emptyset$. Therefore, we have $\rank\begin{bmatrix} A^\transp-\lambda I & C^\transp & F^\transp\end{bmatrix}^\transp>\rank\begin{bmatrix} A^\transp-\lambda I & C^\transp\end{bmatrix}^\transp$.
%
\end{proof}


\begin{lemma}[Necessity of condition 2]
	$\rank[A^\transp\,\,\, C^\transp\,\,\, F^\transp]>\rank[A^\transp\,\,\, C^\transp]^\transp$ if $\mathcal T\cap\mathcal D\neq \emptyset$.
	\label{lemma.lambda2}
\end{lemma}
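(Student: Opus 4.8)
The plan is to reduce the claim to producing a single vector in the kernel of $\begin{bmatrix} A \\ C \end{bmatrix}$ that is not annihilated by $F$. Indeed, $\operatorname{row}(F)\subseteq\operatorname{row}\begin{bmatrix} A \\ C \end{bmatrix}$ is equivalent to $\ker\begin{bmatrix} A \\ C \\ F \end{bmatrix}=\ker\begin{bmatrix} A \\ C \end{bmatrix}$, so exhibiting a $v$ with $\begin{bmatrix} A \\ C \end{bmatrix}v=0$ and $Fv\neq 0$ makes the always-valid inclusion $\ker\begin{bmatrix} A \\ C \\ F \end{bmatrix}\subseteq\ker\begin{bmatrix} A \\ C \end{bmatrix}$ strict, and hence gives $\rank\begin{bmatrix} A \\ C \\ F \end{bmatrix}>\rank\begin{bmatrix} A \\ C \end{bmatrix}$, which is the assertion (the matrix in the statement is this one up to transposition, and rank is transpose-invariant). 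Such a strict increase is only possible because $\mathcal T\cap\mathcal D\neq\emptyset$ forces $\mathcal G(A,C)$ to contain a dilation, hence $\rank\begin{bmatrix} A \\ C \end{bmatrix}<n$ by Remark~\ref{remark.dilationrankdeff}; combined with the spectral reformulation \eqref{eq.linearfuncobsv2} evaluated at $\lambda=0$ (which is an eigenvalue of $A$ for an independent realization precisely because a dilation in $\mathcal G(A,C)$ also yields one in $\mathcal G(A)$), this is what ultimately obstructs functional observability in the main theorem.

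To build $v$, first I would fix $x_j\in\mathcal T\cap\mathcal D$ together with a minimal dilation set $\mathcal D_k\ni x_j$ of $\mathcal G(A,C)$, say $|\mathcal D_k|=k$, and extract two combinatorial consequences of minimality. First, $|T(\mathcal D_k)|=k-1$ exactly: we already have $|T(\mathcal D_k)|\le k-1$ since $\mathcal D_k$ has a dilation, and if $|T(\mathcal D_k)|\le k-2$ then deleting any single node from $\mathcal D_k$ would still leave the image set strictly smaller than the node set, contradicting minimality. Second, $T(\mathcal D_k\setminus\{x_j\})=T(\mathcal D_k)$: the inclusion $\subseteq$ is immediate, and since the proper subset $\mathcal D_k\setminus\{x_j\}$ has no dilation we also have $|T(\mathcal D_k\setminus\{x_j\})|\ge k-1=|T(\mathcal D_k)|$, forcing equality. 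Translating to matrices — recall, as in Remark~\ref{remark.dilationrankdeff}, that for $\mathcal W\subseteq\mathcal X$ the set $T(\mathcal W)$ indexes exactly the nonzero rows of the column-submatrix of $\begin{bmatrix} A \\ C \end{bmatrix}$ on $\mathcal W$ — the first fact says the $k$ columns indexed by $\mathcal D_k$ have all their nonzero entries confined to the $k-1$ rows indexed by $T(\mathcal D_k)$; hence this submatrix has rank at most $k-1<k$ for any realization, so it has a nonzero right null vector $\bar v\in\C^k$.

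The crux is to show that $\bar v$ has a nonzero entry in the $x_j$-coordinate. Here I would invoke the dilation/generic-rank correspondence once more, applied to the columns $\mathcal D_k\setminus\{x_j\}$: since $\mathcal D_k\setminus\{x_j\}$ has no dilation, Hall's marriage theorem on the bipartite graph of nonzero entries gives a matching saturating these $k-1$ columns, i.e. the $(k-1)\times(k-1)$ block of $\begin{bmatrix} A \\ C \end{bmatrix}$ on rows $T(\mathcal D_k\setminus\{x_j\})=T(\mathcal D_k)$ and columns $\mathcal D_k\setminus\{x_j\}$ is invertible for an independent realization of the free parameters. Splitting $\bar v=(\bar v',\bar v_j)$ along $\mathcal D_k\setminus\{x_j\}$ and $\{x_j\}$, the relation $M'\bar v'+M_j\bar v_j=0$ shows that $\bar v_j=0$ would force $\bar v'=0$ and thus $\bar v=0$, a contradiction; so $\bar v_j\neq 0$. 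Extending $\bar v$ by zeros outside $\mathcal D_k$ gives $v\in\C^n$ with $\begin{bmatrix} A \\ C \end{bmatrix}v=0$ and $v_j\neq 0$, and since $x_j$ is a target node there is a row $F_i$ of $F$ whose unique nonzero entry lies in column $j$ (Assumption~\ref{assump.independentF}), whence $(Fv)_i=F_{ij}v_j\neq 0$. This $v$ is the witness needed in the first paragraph.

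The main obstacle is precisely this crux step: converting the purely topological minimality of the dilation set into the analytic statement that a null vector of $\begin{bmatrix} A \\ C \end{bmatrix}$ cannot avoid the target coordinate. Two ingredients need care — pinning down the exact counts $|T(\mathcal D_k)|=k-1$ and $T(\mathcal D_k\setminus\{x_j\})=T(\mathcal D_k)$ from minimality, and the standard but load-bearing equivalence between "a set of columns has no sub-dilation" and "that column-submatrix has full generic column rank," which is the structural-rank (König/Hall) content underlying Remark~\ref{remark.dilationrankdeff}. Once these are in place, the reduction to $\lambda=0$ and the final kernel-dimension bookkeeping are routine.
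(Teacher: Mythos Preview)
Your proof is correct and follows essentially the same idea as the paper's: both select a target $x_j$ in a minimal dilation set $\mathcal D_k$ and use minimality to conclude that the columns indexed by $\mathcal D_k\setminus\{x_j\}$ in $\begin{bmatrix} A \\ C \end{bmatrix}$ have full generic column rank, which is exactly what forces the strict rank increase. The only difference is packaging---the paper argues by contradiction on the row space (if the $x_j$-row of $F$ lay in $\operatorname{row}\begin{bmatrix} A \\ C \end{bmatrix}$, then $\mathcal D_k\setminus\{x_j\}$ would itself be a dilation, contradicting minimality), whereas you construct the dual kernel witness directly and make the Hall-matching step explicit.
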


\begin{proof} 
	Assume that $\mathcal T\cap\mathcal D\neq\emptyset$. Pick $x_i \in \mathcal{T} \cap\mathcal D$ and there will be a minimal dilation set $\mathcal{D}'\subseteq \mathcal{D}$ that contains $x_i$. Let $\mathcal X_2=\mathcal{D}'$ and $\mathcal X_1 = \mathcal X\backslash \mathcal X_2$, where $|\mathcal X_1|=n-k$ and $|\mathcal X_2|=k$. After applying a permutation of coordinates such that the nodes in $\mathcal X_1$ appear first, the matrices $(A,C,F)$ 
	take the form
	\begin{equation}
	A = \begin{bmatrix} A_{1} & A_{2}\end{bmatrix}, \
	C = \begin{bmatrix} C_1 & C_2 \end{bmatrix}, \
	F = \begin{bmatrix} F_1 & F_2 \end{bmatrix},
	\label{eq.dilationnewform}
	\end{equation}
	\noindent
	where $A_{1}\in\R^{k\times n}$, $A_{2}\in\R^{n\times k}$, $C_1\in\R^{q\times (n-k)}$, $C_2\in\R^{q\times k}$, $F_1\in\R^{r\times (n-k)}$, and $F_2\in\R^{r\times k}$. Since $\mathcal{D}'$ is a dilation set, $[A_2^\transp \,\,\, C_2^\transp]^\transp$ has at most $k-1$ nonzero rows due to Remark \ref{remark.dilationrankdeff}. In addition, $F_2$ contains at least one nonzero row because $x_i \in \mathcal{T}$. Let us pick $\bm\phi^\transp \in \mathbb{R}^{1\times k}$, which is the nonzero row of $F_2$ that corresponds to the target node $x_i$. By Assumption \ref{assump.independentF}, the row vector $\bm\phi^\transp$ contains a single nonzero entry in the corresponding column of $x_i$.	
	Now we are ready to prove the Lemma by contradiction. Assuming that $\rank[A^\transp\,\,\, C^\transp\,\,\, F^\transp]^\transp=\rank[A^\transp\,\,\, C^\transp]^\transp$, it follows that
	$\bm\phi^\transp \in \text{row} ([A_2^\transp \,\,\, C_2^\transp]^\transp)$, i.e., there exists a nonzero vector $\bm y^\transp \in \mathbb{R}^{1\times (n+q)} $ such that $\bm\phi^\transp = \bm y^\transp[A_2^\transp \,\,\, C_2^\transp]^\transp$. This implies $\bm y^\transp[A_2^\transp \,\,\, C_2^\transp]_{\mathcal{D}'\setminus \{x_i\}}^\transp = \it 0_{1\times (k-1)}$, where $[A_2^\transp \,\,\, C_2^\transp]_{\mathcal{D}'\setminus \{x_i\}}^\transp$ is the matrix $[A_2^\transp \,\,\, C_2^\transp]^\transp$ after removing the corresponding column of state variable $x_i$. It follows that $\text{rank}([A_2^\transp \,\,\, C_2^\transp]_{\mathcal{D}'\setminus \{x_i\}}^\transp)<k-1$, i.e., $\mathcal{D}'\setminus \{x_i\}$ is also a dilation set. This contradicts the assumption that $\mathcal{D}'$ is a minimal dilation set. Therefore, we conclude that $\rank[A^\transp\,\,\, C^\transp\,\,\, F^\transp]>\rank[A^\transp\,\,\, C^\transp]^\transp$.
\end{proof}

\begin{lemma}
	\cite{VanderWoude1991,VanderWoude1999} 
	Over all choices of free parameters in $(A,B,C,D)$ and $\lambda \in \mathbb{C}$,
	the structural rank of a matrix pencil
	\begin{equation}
	P(\lambda) = \begin{bmatrix}
	A - \lambda I & B \\ C & {D}
	\end{bmatrix}
	\end{equation}
	is equal to $n+l$, where $l$ is the largest number of disjoint paths from the input nodes $u_i\in\mathcal U$ to the output nodes $y_i\in\mathcal S$ in $\mathcal G(A,B,C)$.
	\label{lemma.matrixpencil}
\end{lemma}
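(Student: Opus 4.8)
The plan is to reduce Lemma~\ref{lemma.matrixpencil} to a statement about maximum matchings in a bipartite graph and then prove that statement by a path/cycle decomposition; since the result is due to van der Woude \cite{VanderWoude1991,VanderWoude1999}, the aim is a short self-contained argument. First I would invoke the classical characterization of the generic rank of a structured matrix pencil of Rosenbrock form: treating $\lambda$ as one more free parameter, so that every diagonal entry $A_{ii}-\lambda$ of the $(1,1)$-block is a non-identically-zero entry, the maximum of $\rank P(\lambda)$ over all free parameters and $\lambda\in\C$ equals the \emph{term rank} of $P$, i.e.\ the size of a maximum matching in the bipartite graph $\mathcal H$ whose vertices are the $n+q$ rows and the $n+p$ columns of $P$, with an edge for each non-identically-zero entry. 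So it suffices to show a maximum matching of $\mathcal H$ has size exactly $n+\ell$. It is convenient to read a matching as an arc set in the digraph $\mathcal G^{+}$ obtained from $\mathcal G(A,B,C)$ by adding a self-loop at every state node (the self-loop at $x_i$ representing the entry $A_{ii}-\lambda$): an entry in row $r$ and column $c$ becomes the arc $c\to r$, and since distinct matched rows (columns) force in-degree (out-degree) at most $1$, any matching, viewed in $\mathcal G^{+}$, is a vertex-disjoint union of simple directed paths and simple directed cycles.

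For the inequality $\ge n+\ell$, take $\ell$ vertex-disjoint $\mathcal U\!\to\!\mathcal S$ paths in $\mathcal G(A,B,C)$ and let $\mathcal P\subseteq\mathcal X$ be the set of state nodes they traverse. Reading each path as arcs yields one entry in each state-row and state-column it visits, plus its source input column and its sink output row; by disjointness these entries lie in distinct rows and columns, giving a partial matching of size $|\mathcal P|+\ell$ that leaves untouched the diagonal entries of the $n-|\mathcal P|$ state nodes outside $\mathcal P$. Completing it by matching each such off-path node to itself through its self-loop adds $n-|\mathcal P|$ more edges, all in fresh rows and columns, so the term rank is $\ge n+\ell$.

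For the inequality $\le n+\ell$, take any matching $M$, decompose it in $\mathcal G^{+}$ into vertex-disjoint paths and cycles, and count arcs. Input nodes have no incoming arcs and output nodes no outgoing arcs, so every cycle lies among state nodes and a cycle on $c$ state nodes contributes $c$ arcs; a path is of exactly one of four kinds according to whether each endpoint is an input, an output, or a state node, and a path on $k$ state nodes contributes $k+1$ arcs if it runs $\mathcal U\!\to\!\mathcal S$, $k$ arcs if it runs $\mathcal U\!\to\!\mathcal X$ or $\mathcal X\!\to\!\mathcal S$, and $k-1$ arcs if it runs $\mathcal X\!\to\!\mathcal X$. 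As the state nodes used are all distinct, summing gives $|M|\le n+\#\{\,\mathcal U\!\to\!\mathcal S\text{ paths in }M\,\}$. A $\mathcal U\!\to\!\mathcal S$ path is simple, hence uses no self-loop, so it is a genuine path of $\mathcal G(A,B,C)$; these paths are moreover vertex-disjoint, so their number is at most $\ell$, and $|M|\le n+\ell$. Combined with the previous step, the term rank, and therefore the generic rank of $P(\lambda)$, equals $n+\ell$. If $D\neq 0$ one merely adds direct arcs $u\to y$ for the nonzero entries of $D$ and reads $\ell$ as the largest number of vertex-disjoint paths in $\mathcal G(A,B,C,D)$; in the uses made of the lemma $D=0$, so $\mathcal G(A,B,C)$ suffices.

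The step I expect to be the main obstacle is making the first reduction fully rigorous: because $\lambda$ occurs simultaneously in all $n$ diagonal entries, the usual Frobenius--K\"onig argument with independent indeterminates does not apply verbatim, and one must show---as in the structured-systems literature---that the determinant of the $(n+\ell)\times(n+\ell)$ submatrix selected by a maximum matching is not identically zero despite these shared occurrences (e.g.\ by letting $\lambda\to\infty$ when $D$ is generic, or by passing to the transfer matrix $C(\lambda I-A)^{-1}B$ and using the Menger-type linking theorem for generic normal rank). The remaining combinatorial bookkeeping---the path/cycle decomposition and the four-way case split---is then routine.
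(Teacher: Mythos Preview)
The paper does not prove this lemma: its proof reads, in full, ``See \cite{VanderWoude1991,VanderWoude1999}.'' So there is nothing to compare against beyond the original references. Your sketch is in fact a faithful reconstruction of van der Woude's argument: identify the generic rank of $P(\lambda)$ with the term rank of the associated bipartite graph, then interpret matchings as vertex-disjoint path/cycle families in the system digraph with self-loops at every state node, and count. Your lower bound (disjoint input--output paths completed by self-loops) and your upper bound (the four-way case split giving $|M|\le n+\#\{\mathcal U\!\to\!\mathcal S\text{ paths}\}$) are both correct; note that self-loops are length-one cycles, not path edges, so they never appear on an input--output path, exactly as you say.

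The one point you flag as the obstacle is real and is precisely where the work lies in \cite{VanderWoude1991}: because $\lambda$ is shared across all diagonal entries, the Edmonds/Frobenius--K\"onig identification of generic rank with term rank does not apply directly, and one must argue that the $(n+\ell)\times(n+\ell)$ minor picked out by your lower-bound matching is not identically zero in $(\lambda,\text{free parameters})$. Your suggested fixes are the standard ones; the cleanest is to observe that in this minor the highest power of $\lambda$ comes exactly from the self-loop entries and carries as coefficient a product of \emph{distinct} free parameters along the chosen input--output paths, hence cannot be cancelled. With that detail filled in, your proof is complete and essentially the same as the cited source.
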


\begin{proof}
	See \cite{VanderWoude1991,VanderWoude1999}.
\end{proof}

\begin{lemma}[Sufficiency of condition 1 for \eqref{eq.linearfuncobsv2} for all $\lambda \neq 0$]
	If every state variable $x_i\in\mathcal T$ has a path to some output variable $y_i\in\mathcal S$, then, for almost any independent choice of free parameters in $(A,C,F)$, $\rank\begin{bmatrix}A^\transp-\lambda I & C^\transp & F^\transp\end{bmatrix}^\transp = \rank\begin{bmatrix} A^\transp-\lambda I & C^\transp\end{bmatrix}^\transp$ for every $\lambda\in \mathbb{C}\setminus  \{0\}$.
	\label{lemma.lambda4}
\end{lemma}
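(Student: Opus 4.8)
The plan is to rephrase condition \eqref{eq.linearfuncobsv2} for $\lambda\neq 0$ as a statement about the \emph{supports} of null vectors of the observability pencil, and then eliminate the obstructing configurations by a dimension count over parameter space. First, by the kernel characterization of rank, the asserted equality $\rank\begin{bmatrix}A^\transp-\lambda I & C^\transp & F^\transp\end{bmatrix}^\transp=\rank\begin{bmatrix}A^\transp-\lambda I & C^\transp\end{bmatrix}^\transp$ is equivalent to $\ker\begin{bmatrix}A-\lambda I\\ C\end{bmatrix}\subseteq\ker F$; and since each row of $F$ is supported on a single target node (Assumption \ref{assump.independentF}), this is in turn equivalent to requiring that every right eigenvector $\bm v$ of $A$ for the eigenvalue $\lambda$ with $C\bm v=0$ vanishes on every target node. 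So I would argue by contradiction: suppose that, for a given choice of free parameters, there exist $\lambda\neq 0$ and $\bm v\neq 0$ with $A\bm v=\lambda\bm v$, $C\bm v=0$, and $v_t\neq 0$ for some target node $x_t$.

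Let $\mathcal W=\operatorname{supp}(\bm v)\subseteq\mathcal X$. Two structural facts follow at once: (i) $\mathcal W$ contains no sensor node, since $C\bm v=0$ with each row of $C$ supported on a single sensor forces $v_s=0$ for all $x_s\in\mathcal S$; and (ii) every node $x_i\in\mathcal W$ has an in-neighbor in $\mathcal W$, because $\sum_jA_{ij}v_j=\lambda v_i\neq 0$ — this is exactly where $\lambda\neq 0$ is used — forces some $A_{ij}\neq 0$ with $v_j\neq 0$, i.e. a link $x_j\to x_i$ with $x_j\in\mathcal W$. Also $x_t\in\mathcal W\cap\mathcal T$. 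By condition~1 of Theorem \ref{theor.structuralfuncobsvdefin}, $x_t$ has a directed path to some sensor node, which by (i) lies outside $\mathcal W$; hence this path crosses at least one link from $\mathcal W$ to $\mathcal X\setminus\mathcal W$, so the integer $E_{\mathcal W}:=|\{x_i\in\mathcal X\setminus\mathcal W:\ \exists\,x_j\in\mathcal W,\ x_j\to x_i\}|$ satisfies $E_{\mathcal W}\ge 1$. (Note $\mathcal W\subsetneq\mathcal X$, since $\mathcal W=\mathcal X$ would require $\mathcal S=\emptyset$, contradicting condition~1 when $\mathcal T\neq\emptyset$; the case $\mathcal T=\emptyset$ is trivial.)

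The crux is then to show that, for each fixed subset $\mathcal W$ with properties (i)–(ii) and $\mathcal W\cap\mathcal T\neq\emptyset$, the set of parameter vectors admitting such a pair $(\lambda,\bm v)$ with $\operatorname{supp}(\bm v)=\mathcal W$ lies in a proper algebraic subvariety of parameter space. Restricting $A\bm v=\lambda\bm v$ to the rows indexed by $\mathcal W$ and $C\bm v=0$, $A\bm v=\lambda \bm v$ to the $E_{\mathcal W}$ rows of $\mathcal X\setminus\mathcal W$ with an in-neighbor in $\mathcal W$, one gets exactly $|\mathcal W|+E_{\mathcal W}$ linear constraints on the free entries of $A$ (the remaining rows are vacuous because $v_s=0$ at sensors and $v_j=0$ off $\mathcal W$). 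For generic $(\bm v|_{\mathcal W},\lambda)$ with all coordinates nonzero these constraints are independent and each strictly drops dimension, because distinct rows of $A$ involve disjoint free parameters and each of these $|\mathcal W|+E_{\mathcal W}$ rows contains at least one relevant nonzero-coefficient free parameter — for the $\mathcal W$-rows this is property (ii), for the $E_{\mathcal W}$-rows it is the definition of $E_{\mathcal W}$. Counting dimensions: the locus of $(\text{params},\bm v|_{\mathcal W},\lambda)$ has dimension $M+1-E_{\mathcal W}$ (with $M$ the total number of free parameters); since rescaling $\bm v$ preserves all hypotheses, the fiber of the projection to parameter space over any point of its image is at least one-dimensional, so the image has dimension at most $M-E_{\mathcal W}\le M-1$. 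Since there are only finitely many admissible $\mathcal W$, the union of these images is a proper subvariety, outside of which \eqref{eq.linearfuncobsv2} holds for \emph{every} $\lambda\neq0$. The supporting fact that, for generic parameters and generic $\lambda$, the pencil already has full column rank (so the exceptional $\lambda$ are finitely many) can be read from Lemma \ref{lemma.matrixpencil} with an empty input set, which is also the point of contact with the matrix-pencil machinery and with the block decomposition of Lemma \ref{lemma.somelambda} (in which condition~1 is what forces the $F$-block of the non-output-connected states to vanish).

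The main obstacle — and the reason this case, unlike its companions, is delicate — is precisely the passage from ``for generic $\lambda$'' to ``for every $\lambda\neq 0$'': the naive route of fixing $\lambda$, showing the bad parameter set is proper, and then taking a union fails because $\lambda$ ranges over a continuum. What makes the argument work is organizing the analysis by the discrete datum $\mathcal W=\operatorname{supp}(\bm v)$ (finitely many) instead of by $\lambda$, and extracting enough independent parameter constraints from (i)–(ii) and condition~1. The hypothesis $\lambda\neq 0$ is indispensable for (ii), which explains why condition~2 (no dilation) must be invoked separately to cover $\lambda=0$.
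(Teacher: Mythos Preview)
Your proof is correct and takes a genuinely different route from the paper. The paper's argument proceeds by removing, for each state node $x_i$, the $i$-th row of the pencil and treating $x_i$ as an input; invoking Lemma~\ref{lemma.matrixpencil} (the Van der Woude pencil-rank result) to show the reduced pencils $P_i(\lambda)$ and $\bar P_i(\lambda)$ share structural rank $n$ whenever there is a path from $x_i$ to $\mathcal S$; and then arguing that any bad $\lambda$ must be a common root of all $n+1$ determinant polynomials $\xi_0(\lambda)=\det(A-\lambda I)$ and $\xi_i(\lambda)=\det P_i(\lambda)$. Since each free parameter of $(A,C)$ is absent from at least one of these polynomials, a common root cannot depend on any parameter and is therefore forced to be $\lambda=0$.

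Your approach instead characterizes failure of \eqref{eq.linearfuncobsv2} via the support $\mathcal W$ of a kernel vector $\bm v$, extracts the structural properties (i)--(ii) of $\mathcal W$ directly from $A\bm v=\lambda\bm v$ and $C\bm v=0$ with $\lambda\neq 0$, and performs an explicit dimension count over parameter space, organized by the finitely many admissible $\mathcal W$. The key gain of organizing by $\mathcal W$ rather than by $\lambda$ is that it sidesteps the continuum of $\lambda$ values entirely; the count of $|\mathcal W|+E_{\mathcal W}$ independent affine constraints (independence coming from the row-disjointness of $A$'s free parameters), together with the scaling symmetry of $\bm v$, yields the codimension bound $E_{\mathcal W}\ge 1$ directly from condition~1. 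Your argument is more elementary in that it does not actually rely on Lemma~\ref{lemma.matrixpencil} (your closing reference to it is decorative, not load-bearing), whereas the paper's argument stays closer to the control-theoretic pencil machinery and makes the role of the individual state nodes $x_i$ more explicit. Both proofs isolate $\lambda=0$ as the exceptional case for the same structural reason: only at $\lambda=0$ can a kernel vector of $\begin{bmatrix}A-\lambda I\\ C\end{bmatrix}$ have a support that is not in-closed, which is why condition~2 must be invoked separately via Lemma~\ref{lemma.lambda5}.
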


\begin{proof}
Let matrix $\bar{P}_i(\lambda)$ be formed by removing the $i$-th row of $\begin{bmatrix}A^\transp-\lambda I & C^\transp & F^\transp\end{bmatrix}^\transp $ and permuting the $i$-th column to the last column, i.e.,
\begin{equation}
\bar{P}_i(\lambda)=	\begin{bmatrix}
	A_i-\lambda I_{n-1} & \bm b_i \\
	C_i & \bm c_i \\
	F_i & \bm f_i
	\end{bmatrix},
\end{equation}
where $A_i $ is the matrix formed by removing the $i$-th row and $i$-th column of $A$; $C_i$ and $F_i$ are formed by removing the corresponding $i$-th column of matrices $C$ and $F$, respectively; $\bm b_i$, $\bm c_i$, and $\bm f_i$ are the $i$-th column of $A-\lambda I$, $C$, and $F$, respectively. In $\mathcal G(A,C)$, this corresponds to removing all incoming links to the $i$-th state variable $x_i$. Thus, by maintaining all outgoing links from $x_i$, we  can view $x_i$ as an input node corresponding to the input vector $\bm b_i $. We further define ${P}_i(\lambda)$ as the matrix formed by removing the $i$-th row from matrix $\begin{bmatrix}A^\transp-\lambda I & C^\transp \end{bmatrix}^\transp $.

For any node $x_i \in \mathcal{T}\cup \mathcal{S}$, by assumption there is a path from $x_i \in \mathcal{T}$ to some node in $\mathcal{S}$. According to Lemma \ref{lemma.matrixpencil}, this leads to $\rank\bar{P}_i(\lambda)=n-1+1=n$. In addition, the path from $x_i $ to some node in $\mathcal{S}$ is unaffected if we remove from $\bar{P}_i(\lambda)$ all rows corresponding to matrix $F$.  Therefore, for independent choices of the free parameters of $(A,C,F)$ and $\lambda \in \mathbb{C}$,
\begin{equation}
	\rank\bar{P}_i(\lambda) = 
	\rank\begin{bmatrix}
	A_i-\lambda I_{n-1} & \bm b_i \\
	C_i & \bm c_i
	\end{bmatrix}
	=
	\rank{P}_i(\lambda) 
	.
	\label{eq.rankcond_lemma4}
\end{equation}

For any node $x_i \notin \mathcal{T}\cup \mathcal{S}$,  we examine separately the cases in which there is and there is not a path from $x_i$ to $\mathcal{T}$. 
 If there is a path from $x_i$ to some node $x_j$ in $\mathcal{T}$, by assumption there is a path from $x_j$ to some node in $\mathcal{S}$,
so we conclude that there is a path from $x_i$ to some node in $\mathcal{S}$. As a result, equation \eqref{eq.rankcond_lemma4} also holds true for this type of node $x_i$. Assume now that there are no paths from $x_i$ to $\mathcal{T}$, which leads to two further possibilities: either there is a path from $x_i$ to $\mathcal{S}$ or no such path exists. In the former case, $\rank\bar{P}_i(\lambda)=\rank{P}_i(\lambda) =n$, while in the latter case we have $\rank\bar{P}_i(\lambda)=\rank{P}_i(\lambda) =n-1$. As a result, equation \eqref{eq.rankcond_lemma4} still holds true for both cases. In summary, equation \eqref{eq.rankcond_lemma4} holds for all $n$ state variables.

Next, assume that, for some particular choice of $\lambda\in \mathbb{C}$, $\rank\begin{bmatrix}A^\transp-\lambda I & C^\transp & F^\transp\end{bmatrix}^\transp > \rank\begin{bmatrix} A^\transp-\lambda I & C^\transp\end{bmatrix}^\transp$ holds for any independent choice of the free parameters in $(A,C,F)$. Since the rank of both matrices in this inequality 
is upper bounded by $n$, we have $\rank\begin{bmatrix} A^\transp-\lambda I & C^\transp\end{bmatrix}^\transp<n$. 
Thus, with this particular choice of $\lambda$, $\rank{P}_i(\lambda) <n$ for any $x_i \in \mathcal{X}$. 
That is, $\lambda$ is the common root for the polynomial $\xi_i(\lambda) = \text{det}({P}_i(\lambda) ) $, $\forall x_i \in \mathcal{X} $. Meanwhile, a necessary condition for $\rank\begin{bmatrix} A^\transp-\lambda I & C^\transp\end{bmatrix}^\transp<n$ is $\rank( A-\lambda I )<n$ or, equivalently, $\lambda$ is a root of the polynomial $\xi_0(\lambda) = \text{det}(A-\lambda I) $.
Note that each polynomial $\xi_i(\lambda)$ does not depend on the free parameters from the $i$-th row of $A$, and that the polynomial $\xi_0(\lambda)$ does not depend on any free parameters from $C$. Thus, each free parameter in the system matrices $(A,C)$ does not appear in at least one of these polynomials. As a result, any common root of all polynomials must not be a function of any of the free parameters. The only possible common root that does not depend on any numerical realization of the free parameters is $\lambda =0$, which exists only when $\begin{bmatrix} A^\transp & C^\transp\end{bmatrix}^\transp$ is rank deficient. Therefore, for any $\lambda\in \mathbb{C}\setminus  \{0\}$, $\rank\begin{bmatrix}A^\transp-\lambda I & C^\transp & F^\transp\end{bmatrix}^\transp = \rank\begin{bmatrix} A^\transp-\lambda I & C^\transp\end{bmatrix}^\transp$.
\end{proof}

\begin{lemma}[Sufficiency of condition 2 for \eqref{eq.linearfuncobsv2} for $\lambda = 0$]
	$\rank[A^\transp\,\,\, C^\transp\,\,\, F^\transp]=\rank[A^\transp\,\,\, C^\transp]^\transp$ if $\mathcal T\cap\mathcal D=\emptyset$.
	\label{lemma.lambda5}
\end{lemma}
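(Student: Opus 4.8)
\noindent\emph{Proof plan.}
I would treat this as the $\lambda=0$ instance of the equivalent eigenvalue condition \eqref{eq.linearfuncobsv2} and reduce the required row-space containment to a statement about the support of a kernel. If $\mathcal G(A,C)$ has no dilation, then by Remark~\ref{remark.dilationrankdeff} $[A^\transp\,\, C^\transp]^\transp$ has full column rank for a generic realization, so its row space is all of $\R^{1\times n}$ and the claim is immediate; hence assume $\mathcal G(A,C)$ does have dilations. By Assumption~\ref{assump.independentF} every row of $F$ is a unit row vector $e_t^\transp$ with $x_t\in\mathcal T$, so it suffices to show that $e_t^\transp\in\operatorname{row}[A^\transp\,\, C^\transp]^\transp$ for every target $x_t$ and a generic choice of the free parameters. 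Writing $M=[A^\transp\,\, C^\transp]^\transp$, this is equivalent to demanding that no target index lies in the support of the right kernel $\ker M=\ker A\cap\ker C$, since $e_t^\transp\notin\operatorname{row}(M)$ if and only if there is some $v\in\ker M$ with $v_t\neq0$.

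Next I would identify that support combinatorially. By the defect (Hall/König) form of the generic-rank theorem, the structural rank of $M$ equals $n-d$ with $d=\max_{\mathcal V'\subseteq\mathcal X}\bigl(|\mathcal V'|-|T(\mathcal V')|\bigr)>0$, where for a set of columns $S\subseteq\mathcal X$ of the bipartite graph of $M$ (state nodes on one side, rows of $M$ on the other, an edge per nonzero entry) the neighbourhood is precisely $T(S)$. A state node $x_t$ belongs to the generic support of $\ker M$ exactly when appending the row $e_t^\transp$ strictly raises the structural rank of $M$, i.e.\ exactly when $x_t$ is left unmatched by some maximum matching of that bipartite graph; call such columns \emph{inessential} and let $\mathcal I$ be their set. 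The claim then reduces to the purely structural implication $\mathcal T\cap\mathcal D=\emptyset\Rightarrow\mathcal T\cap\mathcal I=\emptyset$, which in turn follows from the identity $\mathcal I=\mathcal D$.

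Establishing $\mathcal I=\mathcal D$ is the heart of the argument, and I would prove both inclusions. For $\mathcal D\subseteq\mathcal I$: a minimal dilation set $\mathcal D_k$ has deficit exactly $1$, and for each $x_c\in\mathcal D_k$ minimality forces $|T(\mathcal D_k\setminus\{x_c\})|=|\mathcal D_k|-1=|T(\mathcal D_k)|$, hence $T(\mathcal D_k\setminus\{x_c\})=T(\mathcal D_k)$ and (Hall's condition holding on every proper subset of $\mathcal D_k$) $\mathcal D_k\setminus\{x_c\}$ can be perfectly matched onto $T(\mathcal D_k)$; a routine alternating-path argument extends this to a maximum matching of the whole bipartite graph that misses $x_c$, so $x_c\in\mathcal I$. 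For $\mathcal I\subseteq\mathcal D$: fix $x_c\in\mathcal I$ and a maximum matching $\mathcal M$ leaving $x_c$ unmatched, and let $\mathcal W$ be the set of state nodes reachable from $x_c$ along $\mathcal M$-alternating paths. Maximality of $\mathcal M$ forces every row in $T(\mathcal W)$ to be matched into $\mathcal W$, so $|T(\mathcal W)|=|\mathcal W|-1$ and $\mathcal W$ is a dilation set; by parity of alternating paths, $x_c$ is the \emph{only} $\mathcal M$-unmatched column in $\mathcal W$, so every dilation subset of $\mathcal W$ must contain $x_c$. Hence any inclusion-minimal dilation set contained in $\mathcal W$ contains $x_c$, giving $x_c\in\mathcal D$.

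Combining the pieces, $\mathcal T\cap\mathcal D=\emptyset$ gives $\mathcal T\cap\mathcal I=\emptyset$, so for a generic realization $e_t^\transp\in\operatorname{row}[A^\transp\,\, C^\transp]^\transp$ for every target $x_t$; since these unit vectors span $\operatorname{row}(F)$ (Assumption~\ref{assump.independentCF}), we obtain $\operatorname{row}(F)\subseteq\operatorname{row}[A^\transp\,\, C^\transp]^\transp$ and therefore $\rank[A^\transp\,\, C^\transp\,\, F^\transp]^\transp=\rank[A^\transp\,\, C^\transp]^\transp$. I expect the genuine obstacle to be the inclusion $\mathcal I\subseteq\mathcal D$: the alternating-reachable set $\mathcal W$ is itself typically \emph{not} a minimal dilation set, and the point is that $x_c$ alone carries the deficit of $\mathcal W$ and is thus forced into every dilation subset of $\mathcal W$, so it survives into a minimal one. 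Dropping ``minimal'' here would make the statement false, since a non-minimal dilation set may contain an essential column; the remaining ingredients (genericity of the structural rank, defect-Hall bookkeeping, and the alternating-path extension lemma) are standard.
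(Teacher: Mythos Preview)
Your argument is correct and takes a genuinely different route from the paper. The paper partitions $\mathcal X$ into $\mathcal X_1=\mathcal X\setminus\mathcal D$ and $\mathcal X_2=\mathcal D$, notes that $F=[F_1\ \ 0]$ under this ordering, and shows by contradiction that the $\mathcal X_1$-block $[A_1^\transp\ \ C_1^\transp]^\transp$ has full structural column rank $|\mathcal X_1|$ (otherwise $\mathcal X_1$ would contain a dilation and hence a minimal one, contradicting $\mathcal X_1\cap\mathcal D=\emptyset$); from this it concludes $\operatorname{row}(F)\subseteq\operatorname{row}([A^\transp\ \ C^\transp]^\transp)$ ``since the second block of $F$ is all-zeros.'' You instead characterise the generic support of $\ker M$ as the inessential-column set $\mathcal I$ and prove the Dulmage--Mendelsohn-type identity $\mathcal I=\mathcal D$ via alternating paths. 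Two remarks. First, only the inclusion $\mathcal I\subseteq\mathcal D$ is needed for the lemma (it gives $\mathcal T\cap\mathcal D=\emptyset\Rightarrow\mathcal T\cap\mathcal I=\emptyset$ directly); your $\mathcal D\subseteq\mathcal I$ argument is correct but superfluous here. Second, your $\mathcal I\subseteq\mathcal D$ step is exactly what underlies the paper's terse final sentence: the implication ``$\operatorname{rank}(M_1)=|\mathcal X_1|$ and $F_2=0$ imply $\operatorname{row}([F_1\ \ 0])\subseteq\operatorname{row}([M_1\ \ M_2])$'' fails for arbitrary numerical matrices, and in the structural setting it is equivalent to the statement that no column outside $\mathcal D$ can be left unmatched by a maximum matching of $M$. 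So the paper's proof is more economical, while yours makes the combinatorial core explicit and is arguably more complete on that point.
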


\begin{proof}
Let $\mathcal D$ be the union of all minimal dilation sets of $\mathcal G(A,C)$. Let $\mathcal X_2=\mathcal D$ and $\mathcal X_1 = \mathcal X\backslash \mathcal X_2$, where $|\mathcal X_1|=k$ and $|\mathcal X_2|=n-k$. After applying a permutation of coordinates such that the nodes in $\mathcal X_1$ appear first, the matrices $(A,C,F)$ take the form
\begin{equation}
	A = \begin{bmatrix} A_{1} & A_{2}\end{bmatrix},
	C = \begin{bmatrix} C_1 & C_2 \end{bmatrix},
	F = \begin{bmatrix} F_1 & \it{0} \end{bmatrix},
\end{equation}

\noindent
where $A_{1}\in\R^{k\times n}$, $A_{2}\in\R^{n\times (n-k)}$, $C_1\in\R^{q\times k}$, $C_2\in\R^{q\times (n-k)}$, and $F_1\in\R^{r\times k}$. The second block in $F$ contains only zero entries due to the assumption $\mathcal T\cap\mathcal D=\emptyset$. Now, if we assume that $\text{rank} [A_1^\transp \,\,\, C_1^\transp]^\transp<k$, it follows that there is a subset $\mathcal{D}'\subseteq\mathcal{X}_1$ such that the submatrix formed by the corresponding columns of $A_1$ has less than $|\mathcal{D}'|$ nonzero rows. This means that $\mathcal{X}_1$ contains a dilation and thus contains a minimal dilation set, which contradicts to the assumption that $\mathcal{D}$ is the union of all minimal dilation sets. As a result,  $\text{rank} [A_1^\transp \,\,\, C_1^\transp]^\transp=k$, i.e., $\text{row}(F_1) \subseteq \text{row} ([A_1^\transp \,\,\, C_1^\transp]^\transp)$ and also $\text{row}(F) \subseteq \text{row} ([A^\transp \,\,\, C^\transp]^\transp)$ since the second block of $F$ is all-zeros. Therefore, 	$\rank[A^\transp\,\,\, C^\transp\,\,\, F^\transp]=\rank[A^\transp\,\,\, C^\transp]^\transp$.
\end{proof}

\begin{remark}
	\label{remark.funcobsvgenobsv}
	Note that we can have $\mathcal T = \mathcal X$, $\mathcal T\cap\mathcal D=\emptyset$ if and only if $\mathcal D=\emptyset$, which implies that the inference graph $\mathcal G(A,C)$ has no dilations. Therefore, the conditions of Theorem \ref{theor.structuralfuncobsvdefin} reduce to the conditions of Theorem \ref{theor.structuralobsv}.
\end{remark}

\section{Functional observability and the related literature on target controllability}
\label{sec.relatedworks}

Motivated by the fact that controlling the entire state vector of a dynamical system is often unfeasible in large-scale network applications, Gao \textit{et al.} \cite{Gao2014} proposed the concept of \textit{target controllability}, which is based on the concept of \textit{output controllability} from control theory \cite[Section 9.6]{OgataBook}. Formally, a triple $(A,B,F)$ is said to be target controllable if, for any initial state $\bm z(0)=F\bm x(0)$ and final state $\bm z(t_1)=F\bm x(t_1)$, there exists an input $\bm u(t)$ that steers the target vector $\bm z(t)=F\bm x(t)$ from $\bm z(0)$ to $\bm z(t_1)$ in finite time. A triple $(A,B,F)$ is target controllable if and only if 
\begin{equation}
\rank(F\mathcal C) = r,
\label{eq.outputctrb}
\end{equation}

\noindent
where $\mathcal C = [B \,\,\, AB \,\,\, A^2B \,\,\, \ldots \,\,\, A^{n-1}B]$ is the controllability matrix and $F\in\R^{r\times n}$ determines the target variables to be controlled. Thus, target controllability is a sufficient and necessary condition for the invertibility of $FW_{\rm c}(t)F^\transp$, where $W_{\rm c}(t)=\int_0^t e^{A(t-\tau)}BB^\transp e^{A^\transp(t-\tau)}{\rm d}{\tau}$ is the controllability Gramian, and therefore for the existence of a control law $u(t) = - B^\transp e^{A^\transp(t_1-t)} F^\transp (FW_{\rm c}(t_1)F^\transp)^{-1} F(e^{At_1}\bm x(0)-\bm x(t_1))$ capable of steering from $\bm z(0)$ to $\bm z(t_1)$ in finite time \cite[Section 9.6]{OgataBook}.

Given the duality between controllability and observability, one might naively expect that the dual notion of ``target observability'' could be derived by duality from target controllability. For instance, the dual condition of \eqref{eq.outputctrb} would be
\begin{equation}
\rank(\mathcal O F^\transp) = r,
\label{eq.outputobsv}
\end{equation}

\noindent
where $\mathcal O$ is the observability matrix defined in \eqref{eq.obsvmatrix}. However, while condition \eqref{eq.outputctrb} is sufficient and necessary for the design of a controller capable of driving the state of the target variables to any final state, we argue that condition \eqref{eq.outputobsv} does not lead to the design of an estimator capable of reconstructing the state of the target variables. \arthur{In fact, we now show that functional observability is the appropriate necessary and sufficient condition for the design of such estimator.} 

Formally, a pair $(A,C)$ is (completely) observable if, for any unknown initial state $\bm x(0)$, there exists a finite time $t_1>0$ such that knowledge of the input $\bm u(t)$ and output $\bm y(t)$ over $t\in[0,t_1]$ suffices to uniquely determine $\bm x(0)$. Observability is a sufficient and necessary condition for the invertibility of the observability Gramian $W_{o}(t)=\int_0^t e^{A^\transp\tau}CC^\transp e^{A\tau}{\rm d}{\tau}$, and thus for the reconstruction of the initial condition $\bm x(0)$ from the output measurements $\bm y(t)$ over $t\in[0,t_1]$. That is,
\begin{equation}
    \bm x(0) = W_{\rm o}^{-1}(t_1)\int_0^{t_1}e^{A^\transp\tau}C^\transp y(\tau){\rm d}\tau,
    \label{eq.x0gramian}
\end{equation}

\noindent
where we assume here that $\bm u = 0$ without loss of generality \cite{Chi-TsongChen1999}.

Like target controllability, we show that functional observability is a property that characterizes the sufficient and necessary condition for the unique reconstruction of an unknown initial target state $\bm z(0) = F\bm x(0)$ from $\bm y(t)$ over $t\in[0,t_1]$. Note that \eqref{eq.x0gramian} implies
\begin{equation}
    W_{o}(t_1)\bm x(0) = \int_0^{t_1}e^{A^\transp\tau}C^\transp y(\tau){\rm d}\tau
    \label{eq.obsvgramianx0}
\end{equation}

\noindent
and that there exists some matrix $K$ such that $KW_o(t_1) = F$ if and only if $\operatorname{row}(F)\subseteq\operatorname{row}(W_o)$, where\textemdash because  $\operatorname{row}(W_o) = \operatorname{row}(\mathcal O)$\textemdash the latter holds true  if and only if condition \eqref{eq.linearfuncobsv} is satisfied.
Thus, if the system is functionally observable, multiplying \eqref{eq.obsvgramianx0} by $K$ leads to
\begin{equation}
\begin{aligned}
    F\bm x(0) &= K\int_0^{t_1}e^{A^\transp\tau}C^\transp y(\tau){\rm d}\tau.
\end{aligned}
\end{equation}

\noindent
Therefore, functional observability establishes a sufficient and necessary condition for the unique reconstruction of the initial target state $\bm z(0)=F\bm x(0)$ from $\bm y(t)$.

This proof leads to the conclusion that, even though the target controllability condition \eqref{eq.outputctrb} is not dual to the functional observability condition \eqref{eq.linearfuncobsv} in the form commonly articulated in linear system theory, both properties are related in the sense that they are necessary and sufficient conditions for the existence of a target controller (functional observer) capable of driving (estimating) the state of the desired target variables.
For completeness, we note that the functional observability condition \eqref{eq.linearfuncobsv} implies condition \eqref{eq.outputobsv}.  This follows from the fact that if condition \eqref{eq.linearfuncobsv} is satisfied, then $\operatorname{row}(F)\subseteq\operatorname{row}(\mathcal O)$. The converse, however, is not true as shown in the following counter-example.

\begin{figure}
    \centering
	\includegraphics[width=0.3\columnwidth]{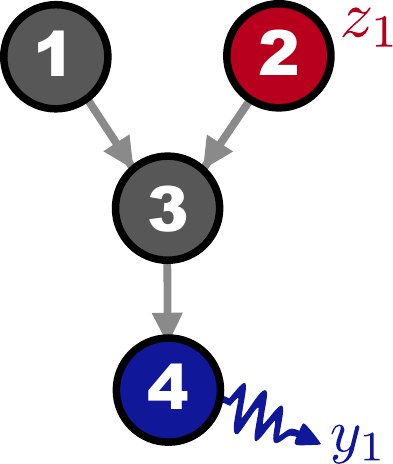}
	\caption{
		Inference graph $\mathcal G(A)$ for the triple $(A,C,F)$ in \eqref{eq.matricestargetobsv}. The state variables $\{x_1,x_2,x_3,x_{4}\}$ are represented by nodes, where  $x_4$ is the sensor node (defined by $\bm y = C\bm x$) and $x_{2}$ is the target node (defined by $\bm z=F\bm x$).
	}
	\label{fig.targetobsv}
\end{figure}

\begin{example}
	\label{examp.outputctrb} 
	
	Consider the complete observability condition \eqref{eq.obsvmatrix}, the functional observability condition \eqref{eq.linearfuncobsv}, and the condition \eqref{eq.outputobsv} for the triple
	\begin{equation}
	A = \begin{bmatrix}
	0 & 0 & 0 & 0 \\ 0 & 0 & 0 & 0 \\ a_{31} & a_{32} & 0 & 0 \\ 0 & 0 & a_{43} & 0
	\end{bmatrix}
	,\,
	C = \begin{bmatrix}
	0 & 0 & 0 & 1
	\end{bmatrix}
	,\,
	F = \begin{bmatrix}
	0 & 1 & 0 & 0
	\end{bmatrix}.
	\label{eq.matricestargetobsv}
	\end{equation}
Fig.~\ref{fig.targetobsv} illustrates the inference graph $\mathcal G(A)$ as well as the target and sensor nodes corresponding to \eqref{eq.matricestargetobsv}. In this scenario, the above conditions reduce to
	\begin{equation}
	\begin{aligned}
	\rank({\mathcal O}) &= 
	\rank\begin{bmatrix}
	0 & 0 & 0 & 1 \\
	0 & 0 & a_{43} & 0 \\
	a_{43}a_{31} & a_{43}a_{32} & 0 & 0 \\
	0 & 0 & 0 & 0 
	\end{bmatrix} = 3 < 4 = n \\ &{\rm (observability)},
	\\
	\rank\begin{bmatrix} \mathcal O \\ F \end{bmatrix} &=  
	\rank\begin{bmatrix}
	0 & 0 & 0 & 1 \\
	0 & 0 & a_{43} & 0 \\
	a_{43}a_{31} & a_{43}a_{32} & 0 & 0 \\
	0 & 0 & 0 & 0 \\
	0 & 1 & 0 & 0
	\end{bmatrix} = 4 > \rank(\mathcal O) \\ &{\rm (functional \,\, observability)},
	\\
	\rank({\mathcal O}F^\transp) &=  
	\rank\begin{bmatrix}
	0 \\ a_{43} \\ 0 \\ 0
	\end{bmatrix} = 1 := r \\ &{\rm (condition \,\, \eqref{eq.outputobsv})}.
	\end{aligned} \label{eq.exampleconditions}
	\end{equation}

	\noindent
	From \eqref{eq.exampleconditions}, we have that the triple $(A,C,F)$ is neither observable nor functionally observable for any choice of parameters in $A$, but it satisfies \eqref{eq.outputobsv}. \QEDA
\end{example}

\section{Minimum-order functional observer design}
\label{sec.graphalgorithm}

\noindent\textbf{Scalability issues of previous numerical procedures.}
Fernando and Trinh \cite{Fernando2010} provided a theoretical solution to the problem of designing a minimum-order functional observer, that is, of finding a minimum-order matrix $F_0$ such that conditions \eqref{eq.darouachcond1}--\eqref{eq.darouachcond2} are satisfied for a triple $(A,C,F_0)$, where $F_0$ is subjected to ${\rm row}(F_0)\supseteq{\rm row(F)}$. For completeness, we show the scalability issues present in the direct numerical implementation \cite{Fernando2010b} of the theoretical results of \cite{Fernando2010} for the design of a minimum-order functional observer. Despite only commenting on this algorithm, we note that other implementations \cite{Fernando2014,Rotella2016,Mohajerpoor2016}
have reported improvements in performance but with no improvement in the scalability of the design algorithm.

A two-stage algorithm is proposed in \cite{Fernando2010b}, where a recursive augmentation of $F_0$ with extra row vectors is carried out in each stage until conditions \eqref{eq.darouachcond1} and \eqref{eq.darouachcond2} are satisfied. The numerical rank conditions in \eqref{eq.darouachcond1} and \eqref{eq.darouachcond2} are computed using SVD, which has complexity of order $O(n^3)$
and thus is not a very scalable approach to be used recursively
(SVD is also often unstable for high-dimensional matrices). 
In a worst-case scenario, $q=r=1$ and one has to estimate $r_0=n-q\approx n$ variables in order to design a stable functional observer. Under these circumstances, the worst-case scenario for the first stage of this algorithm requires finding the minimum $F_0$ that satisfies \eqref{eq.darouachcond1} with $n$ recursive iterations. Since each iteration requires at least one SVD computation, the first stage of the algorithm has complexity of order $O(n^4)$.
Thus, the low scalability of this algorithm is a direct consequence of the use of SVD methods to compute the numerical rank.
For the second stage, the worst-case scenario requires checking the rank condition \eqref{eq.darouachcond2} for up to $\sum_{k=1}^n\binom{n}{k}=2^n-1$ possible submatrices, which has complexity of order $O(2^n)$. Despite this worst-case scenario complexity, usually the second stage of the algorithm requires checking \eqref{eq.darouachcond2} for just a few submatrices, or none at all (for instance, if the conditions of Corollary \ref{corol.structdarouachcond} stated below are satisfied).

\bigskip
\noindent\textbf{Structural conditions for functional observer design.}
With applications in network systems in mind, consider that Assumptions \ref{assump.independentF} and \ref{assump.independentCF} hold for the derivation of the following results.

\begin{corol}
	\label{corol.structdarouachcond}
	If every target node $x_i\in\mathcal T$ has a self-link in $\mathcal G(A,C)$ and condition \eqref{eq.darouachcond1} is true for a structured triple $(A,C,F)$, then condition \eqref{eq.darouachcond2} is true for the same triple.
\end{corol}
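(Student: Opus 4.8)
The plan is to use \eqref{eq.darouachcond1} to collapse \eqref{eq.darouachcond2} into a single row-space inclusion that can be checked for a generic numerical realization of the structure. First I would note that $\operatorname{row}(\lambda F-FA)\subseteq\operatorname{row}(F)+\operatorname{row}(FA)$ and that \eqref{eq.darouachcond1} gives $\operatorname{row}(FA)\subseteq\operatorname{row}[C^\transp\;(CA)^\transp\;F^\transp]^\transp$; hence the row space of the left-hand matrix of \eqref{eq.darouachcond2} is always contained in that of $[(CA)^\transp\;C^\transp\;F^\transp]^\transp$, so the rank on the left never exceeds the rank on the right. Therefore \eqref{eq.darouachcond2} holds for a fixed $\lambda$ if and only if the reverse inclusion holds, and because the $C$- and $CA$-blocks are common to both matrices this amounts to $\operatorname{row}(F)\subseteq\operatorname{row}[(\lambda F-FA)^\transp\;(CA)^\transp\;C^\transp]^\transp$. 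Since the equivalent spectral form \eqref{eq.linearfuncobsv2} only needs to be checked for $\lambda\in\operatorname{spec}(A)$, I would split the argument into $\lambda\neq 0$ and $\lambda=0$, mirroring the organization of the proof of Theorem~\ref{theor.structuralfuncobsvdefin}.

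For $\lambda\neq 0$, the identity $F=\tfrac{1}{\lambda}\big[(\lambda F-FA)+FA\big]$ reduces the task to $\operatorname{row}(FA)\subseteq\operatorname{row}[(\lambda F-FA)^\transp\;(CA)^\transp\;C^\transp]^\transp$, and I would establish this by adapting the genericity argument of Lemma~\ref{lemma.lambda4}. The values of $\lambda$ at which the rank of the left-hand matrix of \eqref{eq.darouachcond2} drops below $\rank[(CA)^\transp\;C^\transp\;F^\transp]^\transp$ form the common zero set of finitely many minor polynomials whose coefficients depend polynomially on the free parameters of $(A,C)$; each of these, together with the characteristic polynomial $\det(A-\lambda I)$, fails to depend on at least one free parameter (a minor obtained by deleting a row of the associated pencil does not involve the corresponding row of $A$, while $\det(A-\lambda I)$ involves no entry of $C$), so any common root is realization-independent. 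The only realization-independent candidate is $\lambda=0$, excluded here, so for a generic realization \eqref{eq.darouachcond2} holds at every $\lambda\in\operatorname{spec}(A)\setminus\{0\}$.

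For $\lambda=0$ the self-link hypothesis becomes essential. Writing $\mathcal T$ for the target set and $A_{\mathcal T\mathcal T}$ for the principal submatrix of $A$ on the rows and columns indexed by $\mathcal T$, the assumption that $A_{jj}$ is a nonzero free parameter for every $x_j\in\mathcal T$ makes the monomial $\prod_{j\in\mathcal T}A_{jj}$ appear in the Leibniz expansion of $\det(A_{\mathcal T\mathcal T})$ with no possibility of cancellation, so $A_{\mathcal T\mathcal T}$ is nonsingular for a generic realization. The rows of $FA$ are exactly the rows $A_{j,:}$ with $x_j\in\mathcal T$, whose projections onto the $\mathcal T$-coordinates are the rows of the invertible matrix $A_{\mathcal T\mathcal T}$; combining this with $\operatorname{row}(FA)\subseteq\operatorname{row}[C^\transp\;(CA)^\transp\;F^\transp]^\transp$ from \eqref{eq.darouachcond1} and with the standing fact that every target node has a directed path to a sensor (so the triple is functionally observable), one recovers $\operatorname{row}(F)\subseteq\operatorname{row}[(FA)^\transp\;(CA)^\transp\;C^\transp]^\transp$, i.e.\ \eqref{eq.darouachcond2} at $\lambda=0$. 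A final appeal to the notion of generic rank (Remark~\ref{remark.structuralrank}) promotes these statements to the claimed ``with probability~1'' conclusion.

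The step I expect to be the main obstacle is precisely this $\lambda=0$ case: turning the nonsingularity of $A_{\mathcal T\mathcal T}$ into the inclusion $\operatorname{row}(F)\subseteq\operatorname{row}[(FA)^\transp\;(CA)^\transp\;C^\transp]^\transp$ requires a careful disentangling of $\operatorname{row}(F)$, $\operatorname{row}(CA)$, and the non-target coordinate directions, and it is here that functional observability (equivalently, condition~1 of Theorem~\ref{theor.structuralfuncobsvdefin}) is used in an essential way---without it the implication can fail, e.g.\ when the target set forms an isolated self-looped subnetwork disjoint from every sensor.
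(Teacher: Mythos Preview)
Your overall reduction is sound: condition~\eqref{eq.darouachcond1} does give $\operatorname{row}(\lambda F-FA)\subseteq\operatorname{row}[C^\transp\,(CA)^\transp\,F^\transp]^\transp$, so the whole corollary collapses to showing $\operatorname{row}(F)\subseteq\operatorname{row}[(\lambda F-FA)^\transp\,(CA)^\transp\,C^\transp]^\transp$ for each $\lambda\in\operatorname{spec}(A)$. But both branches of your argument have real gaps, and the paper handles them by a different and much shorter device that you are missing.

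\textbf{The $\lambda=0$ case.} You correctly locate the difficulty here, but the detour through $A_{\mathcal T\mathcal T}$ and the appeal to ``the standing fact that every target node has a directed path to a sensor'' are both unnecessary; the latter is \emph{not} a hypothesis of the corollary, and the paper's proof does not invoke it. The missing idea is this: condition~\eqref{eq.darouachcond1} says precisely that $FA=D_1\begin{bsmallmatrix}C\\CA\end{bsmallmatrix}+D_2F$ for some $D_1,D_2$. Under Assumption~\ref{assump.independentF}, the self-link at each $x_j\in\mathcal T$ forces a nonzero entry of $FA$ in column $j$, which in turn forces $[D_2]_{jj}$ to be a nonzero free parameter; hence $D_2$ is generically invertible and one can solve for $F=D_2^{-1}FA-D_2^{-1}D_1\begin{bsmallmatrix}C\\CA\end{bsmallmatrix}$, giving $\operatorname{row}(F)\subseteq\operatorname{row}[(FA)^\transp\,(CA)^\transp\,C^\transp]^\transp$ directly. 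No path-to-sensor information is needed.

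\textbf{The $\lambda\neq0$ case.} Your adaptation of Lemma~\ref{lemma.lambda4} does not go through as stated. In Lemma~\ref{lemma.lambda4} the pencil $[A-\lambda I;\,C]$ has one row per state node, so deleting row $i$ genuinely removes every free parameter in the $i$-th row of $A$. The matrix $[\lambda F-FA;\,CA;\,C]$ has only $r+2q$ rows, indexed by $\mathcal T\cup\mathcal S$; when $\rank[C^\transp\,(CA)^\transp\,F^\transp]^\transp=r+2q$ (the generic situation for $r+2q\le n$), every maximal minor uses all rows, so there is no minor polynomial independent of a given $a_{ij}$ with $i\in\mathcal T$, and the ``common root is parameter-free'' argument collapses. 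The paper avoids this machinery entirely: once the $\lambda=0$ case gives $\operatorname{row}[C^\transp\,(CA)^\transp\,F^\transp]^\transp=\operatorname{row}[C^\transp\,(CA)^\transp\,(FA)^\transp]^\transp$, one has $\operatorname{row}[(\lambda F-FA)^\transp\,(CA)^\transp\,C^\transp]^\transp\subseteq\operatorname{row}[C^\transp\,(CA)^\transp\,F^\transp]^\transp$ trivially, and the reverse inequality (hence equality) holds for a generic realization because $\lambda I-D_2$ is invertible whenever $\lambda\notin\operatorname{spec}(D_2)$.
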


\begin{proof}
	Given the condition that every target node $x_i\in\mathcal T$ has a self-link in $\mathcal G(A,C)$, we show that condition \eqref{eq.darouachcond1} implies \eqref{eq.darouachcond2} for all $\lambda$. 
	Consider first condition \eqref{eq.darouachcond2} for $\lambda = 0$. Since, by assumption, condition \eqref{eq.darouachcond1} holds true for $F_0=F$, then condition \eqref{eq.darouachcond2} can be restated for $\lambda = 0$ as
	\begin{equation}
	    \rank\begin{bmatrix}
	    FA \\ C \\ CA
	    \end{bmatrix}
	    =
	    \rank\begin{bmatrix}
	    C \\ CA \\ F \\ FA
	    \end{bmatrix}.
	    \label{eq.corol1proof_lambda0}
	\end{equation}
	
	\noindent
	That is, condition \eqref{eq.corol1proof_lambda0} holds true if $\operatorname{row}(F)\subseteq\operatorname{row}([C^\transp \,\,\, (CA)^\transp \,\,\, (FA)^\transp]^\transp)$. Since condition \eqref{eq.darouachcond1} is true, it follows that $\operatorname{row}(FA)\subseteq\operatorname{row}([C^\transp \,\,\, (CA)^\transp \,\,\, F^\transp]^\transp)$, i.e.,
	\begin{equation}
	    FA = D_1\begin{bmatrix}C \\ CA\end{bmatrix} + D_2 F
	    \label{eq.corol1proof_FAlinearcomb}
	\end{equation}
	
	\noindent
	for some $D_1\in\R^{2q\times n}$ and $D_2\in\R^{r\times r}$. If matrix $D_2$ is invertible, from equation \eqref{eq.corol1proof_FAlinearcomb}, we have that
	\begin{equation}
	    F = D_2^{-1}FA - D_2^{-1}D_1\begin{bmatrix}C \\ CA\end{bmatrix}.
	    \label{eq.Fislineardependent}
	\end{equation}
	
	\noindent
	This means that $\operatorname{row}(F)\subseteq\operatorname{row}([C^\transp \,\,\, (CA)^\transp \,\,\, (FA)^\transp]^\transp)$, satisfying condition \eqref{eq.corol1proof_lambda0}. 
	
	We now show that $D_2$ is indeed invertible if every target node $x_i\in\mathcal T$ has a self-link in $\mathcal G(A,C)$. If this is true, the $i$-th entry of at least one row of $FA$ is a nonzero entry. Moreover, since $x_i$ is a target node, then, from Assumption \ref{assump.independentF}, the $i$-th entry of one row of $F$ is a nonzero entry. As a result, there is always a nonzero value that can be assigned to $[D_2]_{ii}$ such that \eqref{eq.corol1proof_FAlinearcomb} holds true. Because this result holds for all target nodes $x_i\in\mathcal T$, by induction, it follows that $D_2$ has nonzero entries in all of its diagonal elements. Moreover, since $D_2$ is a map between structured matrices $F$ and $FA$,  it is also a structured matrix with independent free parameters \cite{Yamada1985}. 
	As a result, there is always some numerical realization of $(A,C,F)$, and hence of $D_2$, such that $\rank(D_2) = r$ and thus $D_2$ is invertible, which implies \eqref{eq.Fislineardependent} and that condition \eqref{eq.corol1proof_lambda0} is satisfied.
	
	Consider now condition \eqref{eq.darouachcond2} for $\lambda \neq 0$. From the results above, we have that
	\begin{equation}
	    \rank\begin{bmatrix}
	    C \\ CA \\ F
	    \end{bmatrix}
	    =
	    \rank\begin{bmatrix}
	    C \\ CA \\ FA
	    \end{bmatrix}
	    \label{eq.corol1proof_lambda01}.
	\end{equation}
	Since $[(\lambda F-FA)^\transp \,\,\, C^\transp \,\,\, (CA)^\transp]^\transp$ is a linear combination of $[F^\transp \,\,\, C^\transp \,\,\, (CA)^\transp]^\transp$ and $[FA^\transp \,\,\, C^\transp \,\,\, (CA)^\transp]^\transp$, both with the same rank since \eqref{eq.corol1proof_lambda01} is true, it follows that $\rank[(\lambda F-FA)^\transp \,\,\, C^\transp \,\,\, (CA)^\transp]^\transp \leq \rank[F^\transp \,\,\, C^\transp \,\,\, (CA)^\transp]^\transp$. Finally, because $(A,C,F)$ are structured matrices, there is always some numerical realization $(\tilde A,\tilde C,\tilde F)$ such that the equality holds in this relation (and thus condition \eqref{eq.darouachcond2})  for all $\lambda\neq 0$.
\end{proof}

\begin{corol}
	\label{corol.algminF0}
	If $(A,C,F)$ is structurally functionally observable, then Algorithm \ref{alg.findF0} returns a matrix $F_0$ with the smallest order possible such that the rank condition \eqref{eq.darouachcond1} is satisfied for a structured triple $(A,C,F_0)$ subject to the constraint that $F_0$ has only one nonzero entry per row.
\end{corol}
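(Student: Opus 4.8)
The plan is to establish three facts in sequence: \emph{correctness} (on termination, $(A,C,F_0)$ satisfies \eqref{eq.darouachcond1} generically), \emph{termination} (the loop of Algorithm~\ref{alg.findF0} always halts), and \emph{minimality} (no admissible competitor has fewer rows). The first move is a structural reformulation: condition \eqref{eq.darouachcond1} is equivalent to the row-space inclusion $\operatorname{row}(F_0A)\subseteq\operatorname{row}(G)$ with $G=[C^\transp\ (CA)^\transp\ F_0^\transp]^\transp$. Since every row of $F_0$ is a standard basis vector $e_k^\transp$ (Assumption~\ref{assump.independentF}), the set of nonzero columns of $F_0A$ is exactly the $\mathcal M_2$ built by the algorithm, and I would show that \eqref{eq.darouachcond1} holds generically if and only if $\mathcal M_2\subseteq\mathcal M_1$, where $\mathcal M_1=\{\,j:\ e_j^\transp\in\operatorname{row}(G)\ \text{generically}\,\}$. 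One inclusion is immediate (a vector supported on columns that all lie in $\operatorname{row}(G)$ is itself in $\operatorname{row}(G)$); for the converse I would use that the free parameters of the row $A_{k,\cdot}$ of $F_0A$ do not appear anywhere in $G$ (because $x_k\notin\mathcal S$), so a generic $A_{k,\cdot}$ escapes $\operatorname{row}(G)$ as soon as one column in its support fails to be generically recoverable. This makes the loop-exit condition $\mathcal K=\mathcal M_2\setminus\mathcal M_1=\emptyset$ equivalent to \eqref{eq.darouachcond1}, proving correctness --- \emph{provided} one first discharges the supporting lemma that the maximum-matching test of Algorithm~\ref{alg.findF0} returns exactly this set $\mathcal M_1$, which is the bipartite-graph restatement of generic rank (in the spirit of Lemma~\ref{lemma.matrixpencil}) together with the Dulmage--Mendelsohn description of the columns saturated by every maximum matching. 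Termination is then easy: any node indexing a row of $F_0$ is generically recoverable, hence excluded from $\mathcal K$ at every later step, so the drawn nodes are distinct and lie in $\mathcal X\setminus(\mathcal S\cup\mathcal T)$; the loop runs at most $n-q-r$ times and $r_0\le n-q$. (If every target node carries a self-link, Corollary~\ref{corol.structdarouachcond} additionally certifies \eqref{eq.darouachcond2}, so the closing design step is well posed.)

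For minimality, the key idea I would exploit is that, with $V_0=\operatorname{row}([C^\transp\ (CA)^\transp\ F^\transp]^\transp)$ fixed and $E_{\mathcal B}$ stacking the indicators of an auxiliary set $\mathcal B\subseteq\mathcal X\setminus\mathcal T$, the function $\rho(\mathcal B):=\dim\bigl(V_0+\operatorname{span}\{e_b:b\in\mathcal B\}\bigr)-\dim V_0$ is the rank function of a linear matroid $\mathbf M$ on the ground set $\mathcal X$ (the coordinate matroid contracted by $V_0$), whose closure operator is exactly $\mathcal B\mapsto\mathcal M_1(\mathcal T\cup\mathcal B)$ since $\operatorname{row}([C^\transp\ (CA)^\transp\ F^\transp\ E_{\mathcal B}^\transp]^\transp)=V_0+\operatorname{span}\{e_b:b\in\mathcal B\}$. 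In this language an auxiliary set $\mathcal B$ is admissible iff $\operatorname{InNbr}(\mathcal T\cup\mathcal B)\subseteq\operatorname{cl}_{\mathbf M}(\mathcal B)$, and the goal becomes: the set $\mathcal B_{\mathrm{alg}}$ produced by Algorithm~\ref{alg.findF0} has minimum cardinality among all admissible $\mathcal B$. Two observations drive this. First, each node $v'_k$ drawn by the algorithm satisfies $e_k^\transp\notin V_0+\operatorname{span}\{e_b:b\ \text{drawn earlier}\}$, so $\mathcal B_{\mathrm{alg}}$ is independent in $\mathbf M$ and $|\mathcal B_{\mathrm{alg}}|=\rho(\mathcal B_{\mathrm{alg}})$. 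Second, for any fixed admissible $\mathcal B'$ I would carry along the run the invariant that the current auxiliary set lies in $\operatorname{cl}_{\mathbf M}(\mathcal B')$, the inductive step placing the newly drawn node there via the chain $\operatorname{InNbr}(\text{current }\mathcal N)\subseteq\operatorname{InNbr}(\mathcal T\cup\mathcal B')\subseteq\operatorname{cl}_{\mathbf M}(\mathcal B')$, where the last inclusion is admissibility of $\mathcal B'$. Combining the two gives $\mathcal B_{\mathrm{alg}}\subseteq\operatorname{cl}_{\mathbf M}(\mathcal B')$ and therefore $|\mathcal B_{\mathrm{alg}}|=\rho(\mathcal B_{\mathrm{alg}})\le\rho(\operatorname{cl}_{\mathbf M}(\mathcal B'))=\rho(\mathcal B')\le|\mathcal B'|$, as required.

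The hard part, as the previous paragraph already hints, is the invariant in the lower bound: the algorithm picks $v'_k\in\mathcal K$ \emph{arbitrarily}, and along the run its current auxiliary set need only lie in $\operatorname{cl}_{\mathbf M}(\mathcal B')$ --- not in $\mathcal B'$ itself --- while $\operatorname{cl}_{\mathbf M}(\mathcal B')$ is not, in general, closed under taking in-neighbors, so the first inclusion in the chain above (equivalently, $\text{current }\mathcal N\subseteq\mathcal T\cup\mathcal B'$) is exactly what can fail. I expect resolving this to require either (a) strengthening the comparison target to a \emph{minimum} admissible set $\mathcal B'$ and proving that minimum admissible sets are in-neighbor-closed modulo $\mathbf M$ (so that any in-neighbor of a recovered-but-non-auxiliary node is itself recoverable), or (b) an exchange argument showing that whenever the algorithm could draw a node outside $\operatorname{cl}_{\mathbf M}(\mathcal B')$, drawing a suitable alternative candidate restores the invariant without changing the eventual count, using the symmetry of linear dependence (if $e_a^\transp$ becomes recoverable upon adjoining $e_b^\transp$ to a common subspace, then so does $e_b^\transp$ upon adjoining $e_a^\transp$). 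A second, logically prior technical debt is the matching lemma flagged in the first paragraph --- that the second-neighbor saturation test in Algorithm~\ref{alg.findF0} identifies precisely the generically recoverable columns; this is a standard-flavored but delicate genericity argument on which both correctness and minimality ultimately rest, and it should be isolated and proved before anything else.
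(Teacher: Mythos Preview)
Your decomposition into correctness, termination, and minimality is sound, and your reformulation of \eqref{eq.darouachcond1} as $\mathcal M_2\subseteq\mathcal M_1$ matches the paper's reasoning exactly. Where you diverge is in the minimality argument: the paper does \emph{not} attempt a self-contained proof. It invokes \cite[Lemma~1]{Fernando2010}, which asserts that minimum order is achieved by augmenting $F_0$ with row vectors from $\operatorname{row}([C^\transp\,(CA)^\transp\,F_0^\transp\,(F_0A)^\transp]^\transp)\cap([C^\transp\,(CA)^\transp\,F_0^\transp]^\transp)^\perp$, and then argues that under the single-nonzero-entry constraint the minimum collection of standard basis vectors spanning a superset of this intersection is exactly the set indexed by $\mathcal K=\mathcal M_2\setminus\mathcal M_1$. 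The identification of $\mathcal M_1$ with the maximum set $\mathcal B_1$ of basis vectors lying in $\operatorname{row}(G)$, and of $\mathcal M_2$ with the support of $F_0A$, is essentially your correctness step; the paper then declares minimality by appeal to the cited lemma rather than by any matroid or exchange argument.

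Your matroid-theoretic route is therefore genuinely different and more ambitious. The two technical debts you flag---the invariant under arbitrary choice of $v'_k\in\mathcal K$, and the identification of $\mathcal M_1$ with the generically recoverable columns via the second-neighbor matching test---are real, and the paper's proof does not discharge them either: it sidesteps the first by citing \cite{Fernando2010} and the second by implicitly equating maximum-matching size with generic rank. If you want a self-contained argument, your plan~(a) (show that a minimum admissible $\mathcal B'$ is in-neighbor-closed modulo $\mathbf M$) is the more promising of the two you sketch, since it exploits the specific structure of admissibility rather than relying on arbitrary matroid exchange. But if the goal is merely to match the paper, the short path is to accept \cite[Lemma~1]{Fernando2010} and reduce to the per-iteration characterization of $\mathcal K$ that you already have.
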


\begin{proof}
	From \cite[Lemma 1]{Fernando2010}, condition \eqref{eq.darouachcond1} can be satisfied for a $F_0$ of minimum order by incrementally augmenting $F_0$ with row vectors from
	\begin{equation}
	\operatorname{row}
	\begin{bmatrix}
	    C \\ CA \\ F_0 \\ F_0A
	\end{bmatrix}
	\cap
	\begin{bmatrix}
	C \\ CA \\ F_0
	\end{bmatrix}
	^\perp.
	\label{eq.rowspaceCCAFFA}
	\end{equation}
	
	\noindent
	Here, we impose the constraint that $F_0$ (and $F$, from Assumption \ref{assump.independentF}) have only one nonzero entry per row. Therefore, condition \eqref{eq.darouachcond1} is only satisfied for a minimum-order $F_0$ if $F_0$ is augmented by the minimum collection of standard basis vectors $\mathcal B$ such that $\operatorname{row}(\mathcal B)$ is a superset of \eqref{eq.rowspaceCCAFFA}, where $\bm b_i\in\mathcal B$ denotes a $n$-dimensional row vector with a nonzero element in the $i$-th coordinate and zeros elsewhere. Since condition \eqref{eq.darouachcond1} is satisfied if and only if $\operatorname{row}(F_0A)\subseteq\operatorname{row}([C^\transp\,\, (CA)^\transp\,\,F_0^\transp]^\transp$, it follows that $\mathcal B$ is the minimum set of standard basis vectors such that $\operatorname{row}(\mathcal B)$ is a superset of \eqref{eq.rowspaceCCAFFA} if
	\begin{equation}
	    \operatorname{row}(\mathcal B) \supseteq \operatorname{row}(F_0A)\cap
	    \begin{bmatrix}
        	C \\ CA \\ F_0
    	\end{bmatrix}
    	^\perp.
	    \label{eq.BinCCAF}
	\end{equation}
	
	Let each element $v'_k\in\mathcal M_2$ correspond to a standard basis vector $\bm b_k$. By the definition of $\mathcal M_2$ in Algorithm \ref{alg.findF0}, $\mathcal B_2 = \{\bm b_k \,:\ v'_k\in\mathcal M_2, \forall k\}$ is the minimum set of standard basis vectors such that $\operatorname{row}(F_0A)\subseteq\operatorname{row}(\mathcal B_2)$. 
	Likewise, the maximum matching search in Algorithm \ref{alg.findF0} determines the set of right-matched nodes $\mathcal M_1$, where each element $v'_j\in\mathcal M_1$ corresponds to a standard basis vector $\bm b_j$. Note that $\mathcal B_1 = \{\bm b_j \, : \, v'_j\in\mathcal M_1,\forall j\}$ is the maximum set of standard basis vectors such that $\operatorname{row}(\mathcal B_1)\subseteq\operatorname{row}([C^\transp \,\, (CA)^\transp \,\, F_0^\transp]^\transp)$. 
	Given $\mathcal B_1$, we can define a minimum set of standard basis vectors $\mathcal B_1'$ such that $\operatorname{row}(\mathcal B_1')=\mathcal B_1^\perp$. By orthogonality, $\mathcal B_1'$ is the minimum set of standard basis vectors such that $([C^\transp \,\, (CA)^\transp \,\, F_0^\transp]^\transp)^\perp\subseteq\mathcal B_1^\perp = \operatorname{row}(\mathcal B_1')$.
    
	Note that the elements of $\mathcal K = \mathcal M_2\backslash\mathcal M_1$ correspond to the standard basis vectors in $\mathcal B=\mathcal B_2\backslash\mathcal B_1 = \mathcal B_2\cap\mathcal B_1'$. Hence, $\operatorname{row}(\mathcal B)= \operatorname{row}(\mathcal B_2)\cap\mathcal B_1^\perp$. Since $\mathcal B_1'$ and $\mathcal B_2$ are the minimum sets of standard basis vectors such that
	$([C^\transp \,\, (CA)^\transp \,\, F_0^\transp]^\transp)^\perp\subseteq\mathcal B_1^\perp$ and $\operatorname{row}(F_0A)\subseteq\operatorname{row}(\mathcal B_2)$, respectively, it can be concluded that $\mathcal B'$ is the minimum set of standard basis vectors such that \eqref{eq.BinCCAF} is satisfied (and, therefore, $\operatorname{row}(\mathcal B)$ is a superset of \eqref{eq.rowspaceCCAFFA}).
	Thus, Algorithm \ref{alg.findF0} returns a minimum-order $F_0$ subject to the stated constraint if $F_0$ is incrementally augmented by standard basis vectors in $\mathcal B$, which are associated with elements in $\mathcal K$.
\end{proof}

Fig.~\ref{fig.algorithm} illustrates the application of Algorithm~\ref{alg.findF0} to find the minimum-order matrix $F_0$ for the triple $(A,C,F)$ in Fig.~\ref{fig.functobsvexample}\textit{A}.

\begin{figure*}
    \centering
	\includegraphics[width=0.6\textwidth]{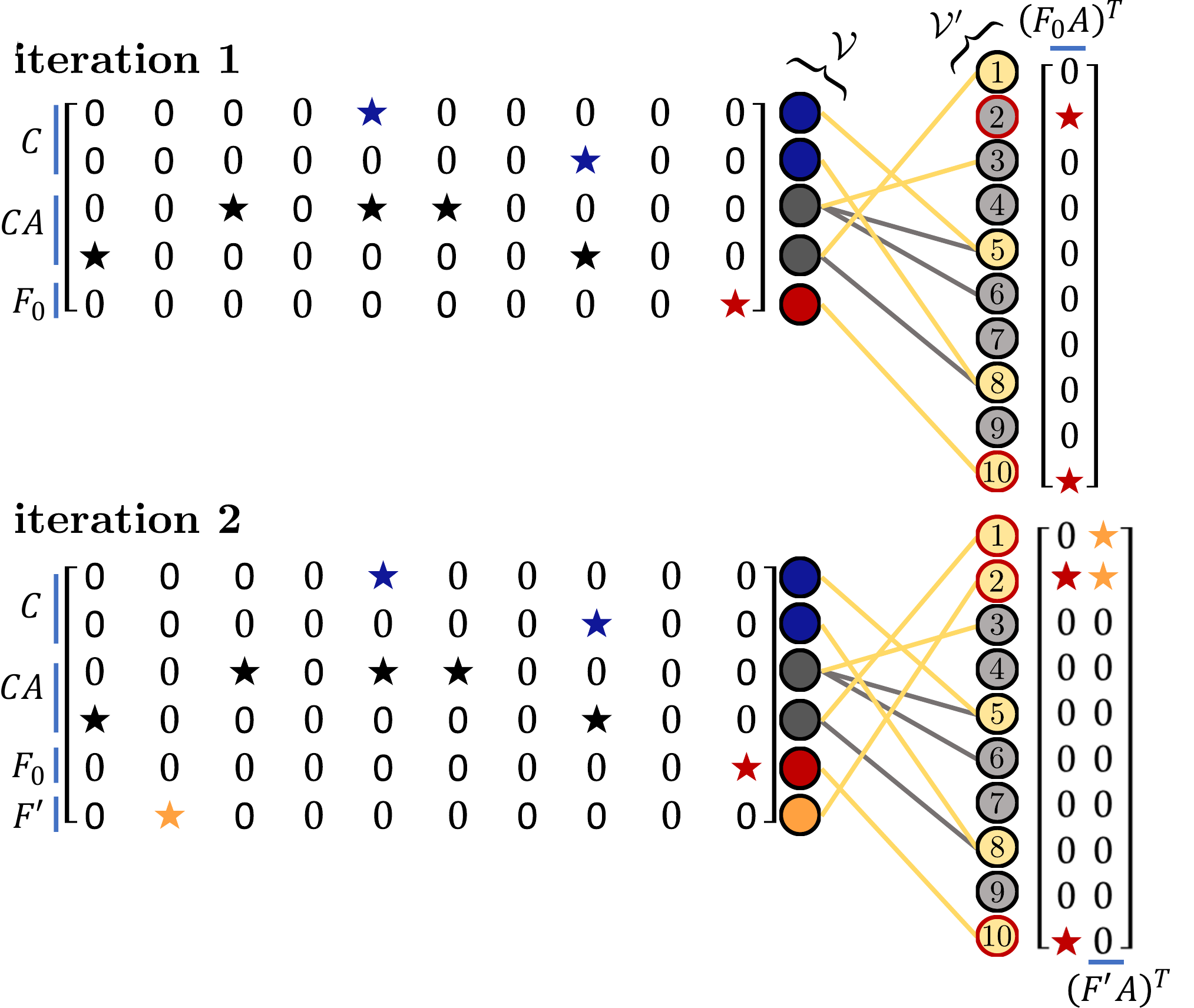}
	\caption{
	Illustrative example of Algorithm~\ref{alg.findF0} applied to the problem in Fig.~\ref{fig.functobsvexample}\textit{A}. On iteration 1, the algorithm builds the corresponding bipartite graph $\mathcal B$ for $[C^\transp \,\,\, (CA)^\transp \,\,\, F_0^\transp]^\transp$, where $F_0\leftarrow F$. Note that a node $v_i\in\mathcal V$, corresponding to a row of $C$ (or $F_0$), is connected to $v_j'\in\mathcal V'$ if $x_j$ is a sensor (or target) node. Likewise, a node $v_i\in\mathcal V$, corresponding to a row of $CA$, is connected to $v_j'\in\mathcal V'$ if $x_j$ is a ``predecessor'' node of a sensor node (i.e., a node with a link pointing to a sensor node). Using a maximum matching algorithm, the matched links $\mathcal E_m$ and right-matched nodes in $\mathcal M_1$ are highlighted in yellow. We highlight the nodes in $\mathcal M_2$ (predecessors of target nodes defined by $F_0$) with a red outline, and define $\mathcal K=\mathcal M_2\backslash\mathcal M_1 = \{v'_2\}$. After picking the element $v'_2$ in $\mathcal K$, and updating $F_0$, the algorithm proceeds to iteration 2. The same steps are repeated. Since $v_1'\in\mathcal M_2$ is already a right-matched node, it follows that $\mathcal K=\emptyset$ and the process terminates. Note that, in a straightforward implementation, the cardinality of $\mathcal V$ increases at every iteration, also increasing the computational burden in the maximum matching computation. We provide a MATLAB implementation of Algorithm~\ref{alg.findF0} that uses an incremental procedure to mitigate this growth and avoid computation of the maximum matching for the whole bipartite graph at every step (\href{https://github.com/montanariarthur/FunctionalObservability}{https://github.com/montanariarthur/FunctionalObservability}). \\
	}
	\label{fig.algorithm}
\end{figure*}

\bigskip
\noindent\textbf{Numerical setup for the comparative analysis of the observers.} 
\arthur{
In Fig.~\ref{fig.obsvestimation}, for individual generated networks defined by a system matrix $A$ 
and some given sets $\mathcal S$ and $\mathcal T$, we have a triple $(A,C,F)$ in which we follow three steps to evaluate the observer performance. 
First, we design a Luenberger observer and a minimum-order functional observer following the procedures described in Methods. 
Second, using a fourth-order Runge-Kutta method, we simulate the dynamics of the system \eqref{eq.linearsys} and observer \eqref{eq.functionalobserverdyn} excited by a step input of $\bm u(t)=10$, $\forall t\geq 0$ (where $B = [1 \ldots 1]^\transp$), with integration step $\delta t = 0.01$, and a total simulation time $t_{\rm f} = 4$ s (which is sufficient to reach a steady-state regime). The initial conditions were randomly drawn from a Gaussian distribution $\mathcal N(\mu,\sigma)$, where $\mu$ is the mean value and $\sigma$ is the standard deviation, as follows: $x_i(0)\sim\mathcal N(0,1)$ and $w_i(0)\sim\mathcal N(10,1)$ for all $i$. 
Third, we compute the estimation error of each observer at each time step $k$ as $\|\bm z(k) - \hat{\bm z}'(k)\|$, where $\bm z(k)=F\bm x(k)\in\R^r$ is the (true vector value) of the target vector and $\hat{\bm z}'(k)\in\R^r$ is the observer estimate of the target vector. Note that $\hat{\bm z}'(k)$ is inferred from the $(n-q)$-dimensional ($r_0$-dimensional) vector $\hat{\bm z}(k)$ estimated in \eqref{eq.functionalobserverdyn} by a Luenberger (functional) observer.
}

\section{Sensor placement and functional observer design in nonlinear systems}
\label{sec.nonlinearfuncobsv}

\arthur{
We show how our graph-theoretic results for the minimum sensor placement and minimum functional observer design can be applied to nonlinear systems. Using model \eqref{eq.sird} as a representative example, this is achieved through the following steps. 
First, we create a ``nonlinear'' inference graph $\bar{\mathcal G}=\{\bar{\mathcal X},\bar{\mathcal E}\}$ of \eqref{eq.sird} (Fig.~\ref{fig.epidemics}\textit{B}), where the bars are used to distinguish from the linear case. Using the notation $\bm x = [S_1 \,\, \ldots \,\, S_N \,\, I_1 \,\, \ldots \,\, I_N \,\, R_1 \,\, \ldots \,\, R_N \,\, D_1 \,\, \ldots \,\, D_N]^\transp$, we identify $x_i\in\bar{\mathcal X}$ as the set of state nodes, $(x_i,x_j)\in\bar{\mathcal E}$ as a directed link from  $x_j$ to $x_i$ if $\dot{x}_i$ is a function of $x_j$ in \eqref{eq.sird}, and $\bar A$ as the resulting binary adjacency matrix of $\bar{\mathcal G}$ \cite{Liu2013c,Aguirre2018}.
Second, we define the set of target nodes $\mathcal T$ and the set of candidate sensor nodes $\mathcal K$, where in our application $I_i\in\mathcal T$ if $i$ is a target city and $\mathcal K=\{D_1,\ldots,D_N\}\backslash\{D_i\,:\,I_i\in\mathcal T\}$. 
We note that there is always a path from $I_i$ to $D_i$ and that $I_i$ always has a self-link in $\bar{\mathcal G}$.
Third, given $(\bar{\mathcal G},\mathcal T,\mathcal K)$, we apply Algorithm~\ref{alg.MSP} to find a reduced set of sensors $\mathcal S\subseteq\mathcal K$ for structural functional observability.
%
%
Fourth, we apply Algorithm~\ref{alg.findF0} to determine matrix $F_0$ (i.e., the set of auxiliary state nodes to be estimated) and, as a result, the functional observer order for $(\bar{A},C,F)$, where $C$ and $F$ are the matrices representing $\mathcal S$ and $\mathcal T$. 
Finally, we apply the results derived in \cite{Trinh2006}, and also reported in \cite[Section 6.2]{Trinh2012}, to compute the functional observer parameters and show that the functional observer satisfies the theoretical conditions for asymptotic convergence of its state estimates, detailed as follows.
}


Consider the following class of nonlinear systems studied in \cite{Trinh2006}:
\begin{equation}
\begin{aligned}
\begin{cases}
\dot{\bm{x}} = A\bm x + \bm f(\bm x), \\
\bm y = C\bm x, \\
\bm z_0 = F_0\bm x,
\end{cases}
\end{aligned}
\label{eq.si.nonlinearclass}
\end{equation}

\noindent
where $\bm f(\bm x):\R^n\mapsto\R^n$ is a nonlinear function not required to be Lipschitz. The epidemiological model \eqref{eq.sird} can be described as \eqref{eq.si.nonlinearclass} by defining $\bm x = [S_1 \,\, \ldots \,\, S_N \,\, I_1 \,\, \ldots \,\, I_N \,\, R_1 \,\, \ldots \,\, R_N \,\, D_1 \,\, \ldots \,\, D_N]^\transp\in\R^n$ and $\bm f(\bm x) = [-\beta_1S_1I_1/P_1 \,\, \ldots \,\, -\beta_NS_NI_N/P_N \,\, \beta_1S_1I_1/P_1 \,\, \ldots \,\, \beta_1S_NI_N/P_N \,\, \mathit{0}_{1\times 2N}]^\transp$, where $n=4N$. Matrix $A\in\R^{n\times n}$ is defined by the linear functions in \eqref{eq.sird}. Matrices $C$ and $F_0$ are determined by Algorithms \ref{alg.MSP} and \ref{alg.findF0}, respectively (third and fourth steps above, with $C$ inferred from $\mathcal S$).

Given $(A,C,F_0)$ and $\bm f(\bm x)$, a reduced-order functional observer can be designed to estimate the partial state vector $\bm z_0$ and thereby $\bm z=F\bm x$ (since $F$ defines the first $r$ rows of $F_0$, see Algorithm \ref{alg.findF0}). Consider the following structure for the functional observer:
\begin{equation}
\begin{aligned}
\begin{cases}
\dot{\bm w} = N\bm w + J\bm y + L\bm f_1(\bm y,\bm z_0), \\
\hat{\bm z} = D\bm w + E\bm y,
\end{cases}
\end{aligned}
\label{eq.si.nonlinearfuncobsv}
\end{equation}

\noindent
where $(N,J,L,D,E)$ and $\bm f_1(\bm x)$ are to be determined such that $\hat{\bm z}(t)$ converges asymptotically to $\bm z_0(t)$. This stable convergence can be achieved by satisfying the conditions stated in \cite[Theorem 6.1]{Trinh2012} as follows.

\begin{enumerate}
	\item For all $i=1,\ldots,n$, decompose $\bm f(\bm x)$ as $\bm f(\bm x) = \bm f_1(\bm y,\bm z_0) + W\bm f_2(\bm x)$, where
	\begin{equation}
	[\bm f_1(\bm z_0)]_i =
	\begin{aligned} 
	\begin{cases}
		-\beta_i \frac{z_j z_k}{P_i}, \,\, \operatorname{for} \,\, i\leq N, \,\, \operatorname{if} \,\, [F_0]_{ji} = 1   \\ \quad\quad \operatorname{and} \,\, [F_0]_{k(i+N)} = 1\,\, \operatorname{for \,\, some} j,k, \\
		+\beta_i \frac{z_j z_k}{P_i}, \,\, \operatorname{for} \,\, N<i\leq 2N, \,\,\,\,  \operatorname{if} \,\, [F_0]_{ji} = 1 \\ \quad\quad  \operatorname{and} \,\, [F_0]_{k(i+N)} = 1\,\, \operatorname{for \,\, some} j,k, \\
		0, \,\,\operatorname{otherwise},
	\end{cases}
	\end{aligned}
	\label{eq.f1decomposed}
	\end{equation} 
	\begin{equation}
	[W\bm f_2(\bm x)]_i =
	\begin{aligned} 
	\begin{cases}
	-\beta_i \frac{x_i x_{(i+N)}}{P_i}, \,\, \operatorname{for} \,\, i\leq N, \,\, \operatorname{if} \,\, [F_0]_{ji} = 0 \,\,  \\ \quad\quad\quad \operatorname{or} \,\, [F_0]_{k(i+N)} = 0 \,\, \operatorname{for \,\, some} j,k, \\
	+\beta_i \frac{x_i x_{(i+N)}}{P_i}, \,\, \operatorname{for} \,\, N<i\leq 2N, \,\, \operatorname{if} \,\, [F_0]_{ji} = 0 \\ \quad\quad\quad \operatorname{or} \,\, [F_0]_{k(i+N)} = 0 \,\, \operatorname{for \,\, some} j,k, \\
	0, \,\,\operatorname{otherwise.}
	\end{cases}
	\end{aligned}
	\label{eq.f2decomposed}
	\end{equation}
	
	Note that $\bm f_1(\bm z_0)$ does not depend on $\bm y$ in the application we are considering, since we defined in the main text that measures are taken only on the number of dead cases $D_i$ (if a given group $i$ is chosen as a ``sensor'' city), and that $\bm f(\bm x)$ is not a function of $D_i$. 
	Importantly, the nonlinear function $\bm f_1(\bm z_0)$, defined in \eqref{eq.f1decomposed}, can be shown to be Lipschitz with a constant Lipschitz constant $\kappa$, i.e.,
	\begin{equation}
		\norm{\bm f_1(\bm z_0) - \bm f_1(\bar{\bm z}_0)} \leq \kappa\norm{\bm z_0 - \bar{\bm z}_0},
		\label{eq.lipschitzf1}
	\end{equation}
	where $\kappa = \max_i(\beta_i)$.
	
	\item Consider the linear transformation in \eqref{eq.lineartransf}, $PW = [W_1^\transp \,\,\, W_2^\transp]^\transp$, and $P^{-1} = [P_1^\transp \,\,\, P_2^\transp]^\transp$. Compute $N_1=(\Phi\Omega^\dagger A_{12}+F_2A_{22})F^\dagger_2$, $N_2=(\Omega\Omega^\dagger - I_q)A_{12}F_2^\dagger$, $\bar L_1 = \Phi\Omega^\dagger P_1 + F_2P_2$ and $\bar L_2 = (\Omega\Omega^\dagger - I_q)P_1$, where $\Omega = [A_{12}F_2^\perp \,\,\, W_1]$ and $\Phi = -[F_2A_{22}F_2^\perp \,\,\, F_2W_2]$. Note that $W$ has a fixed structure (with full-column rank) despite not being a uniquely defined matrix and $\bm f_2(\bm x)$ being treated as an unknown input in the decomposition above.
	Then, it can be verified numerically that there exists matrices $Q=Q^\transp\in\R^{r_0\times r_0}$ and $G\in\R^{r_0\times q}$, and positive scalars $\beta_1$ and $\beta_2$, such that the following linear matrix inequality (LMI) holds:
	\begin{equation}
	\begin{bmatrix}
		\Delta & Q\bar L_1 & G\bar L_2 \\
		\bar L_1^\transp Q & -\beta I_n & \it 0 \\
		\bar L_2^\transp G^\transp & \it 0 & -\beta_2 I_n
	\end{bmatrix}<0,
	\label{eq.si.lmi}
	\end{equation}
	
	\noindent
	where $\Delta = QN_1 + N_1^\transp - GN_2 - N_2^\transp G^\transp + \kappa^2(\beta_1+\beta_2)I_{r_0}$.
\end{enumerate}

\begin{figure}[b!]
    \centering
    \includegraphics[width=\columnwidth]{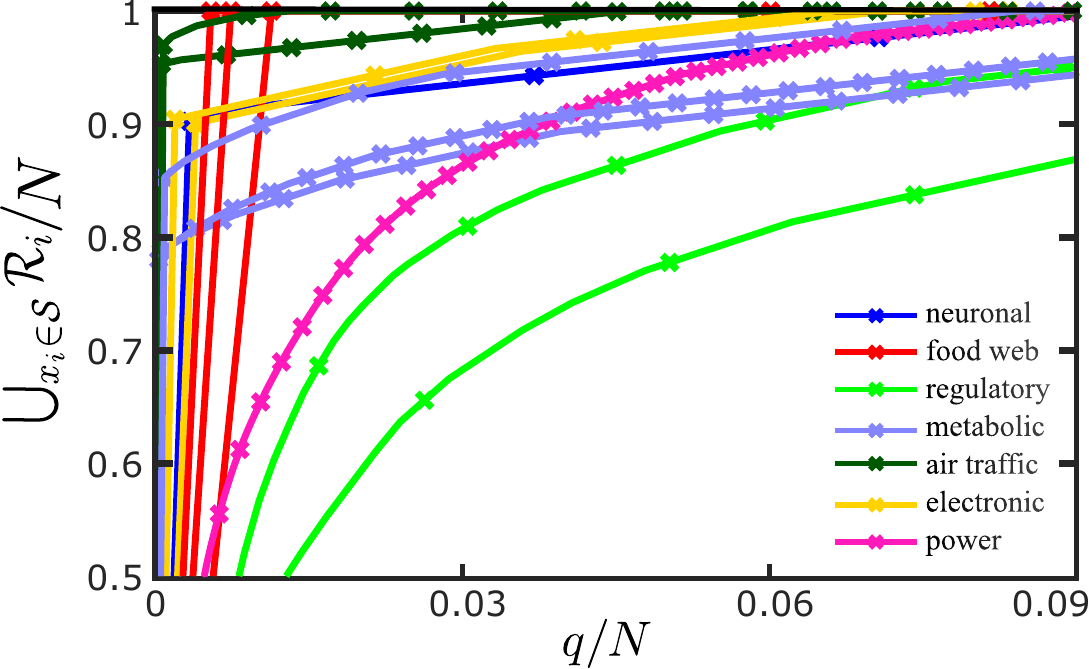}
    \caption{Total number of observable nodes $\cup_{x_i\in \mathcal{S}} \mathcal R_i$ as a function of the number of sensor nodes $q$ (normalized by $N$) in real-world networks. The set $\mathcal S$ is incremented by recursively adding sensors using Algorithm~\ref{alg.MSP} while maximizing the incremental number of observable nodes at each step. 
    See Methods for details on the network models. \\
    }
    \label{fig.si.msp}
\end{figure}

Since there exists a decomposition $\bm f(\bm x) = \bm f_1(\bm y,\bm z_0) + W\bm f_2(\bm x)$ such that $\bm f_1(\bm y,\bm z_0)$ is Lipschitz, and matrices $Q$ and $G$ such that the LMI in \eqref{eq.si.lmi} is satisfied, condition 1 of \cite[Theorem 6.1]{Trinh2012} is satisfied according to \cite[Theorem 6.2]{Trinh2012}. We can now design matrices $(N,J,L,D,E)$ such that condition 2 of \cite[Theorem 6.1]{Trinh2012} is satisfied, and thereby $\hat{\bm z}(t)$ converges asymptotically to $\bm z_0(t)$, as follows \cite[Section 6.2.3]{Trinh2012}: $N = N_1 - ZN_2$, $J = T_1A_{11} + T_2A_{21}-NT_1$, $L = \bar L_1 - Z\bar L_2$, $D = I_{r_0}$, and $E= F_1 - DT_1$, where $Z = Q^{-1}G$, $T_1 = \Phi\Omega^\dagger + Z(I_q-\Omega\Omega^\dagger)$ and $T_2 = F_2$.

\section{Real-world networks}
\label{sec.rwnetsdatabase}

Fig.~\ref{fig.si.msp} shows, for the same real-world networks considered in Fig.~\ref{fig.msp}\textit{B}, the total number of observable nodes as a function of the minimum number of sensors. 
Table \ref{tab.realworldnets} presents the datasets of the real-world networks studied in Figs. \ref{fig.msp}\textit{B} and \ref{fig.si.msp}, which includes networks previously considered in controllability and observability studies \cite{Liu2011,Liu2013c}.

\begin{table*}[b!]
	\caption{Datasets of real-world networks considered in the paper.}
	\label{tab.realworldnets}       
	\centering 
	\begin{tabular}{lllll}
		\toprule[2pt]
		Type & Name & Vertices $N$ & Edges $|\mathcal E|$ & Description\\
		\toprule[2pt]
		Neuronal & \textit{C. elegans} \cite{CelegansNeuralNet,Watts1998}* & $297$ & $2{,}345$ & Neuronal network of \textit{C. elegans}. \\
		\midrule[0.5pt]
		Food web & Grassland \cite{Martinez1999} & $88$ & $137$ & Food web in Grassland. \\
			     & Ythan \cite{Huxham1996} & $135$ & $601$ & Food web in Ythan. \\
				 & Little Rock Lake \cite{Martinez1991}* & $183$ & $2{,}494$ & Food web in Little Rock Lake. \\
		\midrule[0.5pt]
		Regulatory & TRN-Yeast-2 \cite{Milo2002} & $688$ & $1{,}079$ & Transcriptional network of \textit{S. cerevisiae}. \\ 
				 & TRN-EC-2 \cite{Milo2002} & $418$ & $519$ & Transcriptional network of \textit{E. coli}. \\ 
		\midrule[0.5pt]
		Metabolic & \textit{C. elegans} \cite{Overbeek2000}* & $453$ & $2{,}040$ & Metabolic network of \textit{C. elegans}. \\
				  & \textit{E. coli} (iAF1260) \cite{Schell2010} & $1{,}668$ & $6{,}142$ & Metabolic network of \textit{E. coli}. \\
				  & \textit{S. cerevisiae} (iND750) \cite{Schell2010} & $1{,}059$ & $4{,}347$ & Metabolic network of \textit{S. cerevisiae}. \\
				  & \textit{H. sapiens} (RECON1) \cite{Schell2010} & $2{,}766$ & $10{,}280$ & Metabolic network of \textit{H. sapiens}. \\
		\midrule[0.5pt]
		Air traffic & US airports (\url{http://www.transtats.bts.gov})* & $1{,}574$ & $28{,}236$ & Air traffic network between US airports. \\
            		& Air traffic control (\url{http://www.fly.faa.gov})* & $1{,}226$ & $26{,}615$ & Preferred traffic route between US airports. \\
				  
		\midrule[0.5pt]
		Electronics  & s420 \cite{Milo2002}  & $252$ & $399$ & Sequential logic circuit. \\ 
				     & s838 \cite{Milo2002}  & $512$ & $819$ & Sequential logic circuit. \\ 
		\midrule[0.5pt]
		Power grid & Western US power grid \cite{Watts1998}* & $4{,}941$ & $6{,}594$ & Power grid in the western states of the US.\\
		\toprule[2pt]
	\end{tabular}
	\\
	\raggedright
	\footnotesize{\hspace{0.9cm}*These networks are also available through the KONECT database \cite{Kunegis2013}.}
\end{table*}

\end{document}